\date{}
\newtheorem{theorem}{Theorem}
\newtheorem{conjecture}{Conjecture}
\newtheorem{lemma}{Lemma}
\newtheorem{remark}{Remark}
\newtheorem{corollary}{Corollary}
\begin{document}
\begin{frontmatter}

\title{Interval edge-colorings of complete graphs}

\author[ysu]{H. H. Khachatrian}
\ead{hrant.khachatrian@ysu.am}
\author[ysu,sci]{P. A. Petrosyan}
\ead{pet\_petros@ipia.sci.am}
\address[ysu]{Department of Informatics and Applied Mathematics, Yerevan State University, Yerevan, 0025, Armenia}
\address[sci]{Institute for Informatics and Automation Problems,
National Academy of Sciences, Yerevan, 0014, Armenia}

\begin{abstract}
An edge-coloring of a graph $G$ with colors $1,2,\ldots,t$ is an \textit{interval $t$-coloring} if all colors are used, and the colors of edges incident to each vertex of $G$ are distinct and form an interval of integers. A graph $G$ is \textit{interval colorable} if it has an interval $t$-coloring for some positive integer $t$. For an interval colorable graph $G$, $W(G)$ denotes the greatest value of $t$ for which $G$ has an interval $t$-coloring. It is known that the complete graph is interval colorable if and only if the number of its vertices is even. However, the exact value of $W(K_{2n})$ is known only for $n \leq 4$. The second author showed that if $n = p2^q$, where $p$ is odd and $q$ is nonnegative, then $W(K_{2n}) \geq 4n-2-p-q$. Later, he conjectured that if $n \in \mathbb{N}$, then $W(K_{2n}) = 4n - 2 - \left\lfloor\log_2{n}\right\rfloor - \left \| n_2 \right \|$, where $\left \| n_2 \right \|$ is the number of $1$'s in the binary representation of $n$.

In this paper we introduce a new technique to construct interval colorings of complete graphs based on their 1-factorizations, which is used to disprove the conjecture, improve lower and upper bounds on $W(K_{2n})$ and determine its exact values for $n \leq 12$.
\end{abstract}
\end{frontmatter}

\section{Introduction}

All graphs in this paper are finite, undirected, have no loops or multiple edges. Let $V(G)$ and $E(G)$ denote the sets of vertices and edges of a graph $G$, respectively. For $S\subseteq V(G)$, $G[S]$ denotes the subgraph of $G$ induced by $S$, that is, $V(G[S])=S$ and $E(G[S])$ consists of those edges of $E(G)$ for which both ends are
in $S$.
For a graph $G$, $\Delta(G)$ denotes the maximum degree of vertices in $G$. A graph $G$ is \textit{$r$-regular} if all its vertices have degree $r$. The set of edges $M$ is called a \textit{matching} if no two edges from $M$ are adjacent. A vertex $v$ is \textit{covered} by the matching $M$ if it is incident to one of the edges of $M$. 
A matching $M$ is a \textit{perfect matching} if it covers all the vertices of the graph $G$. The set of perfect matchings $\mathfrak{F} = \{F_1,F_2,\ldots,F_n\}$ is a \textit{1-factorization} of $G$ if every edge of $G$ belongs to exactly one of the perfect matchings in $\mathfrak{F}$. The set of integers $\{a,a+1,\ldots,b\}$, $a \leq b$, is denoted by $[a,b]$. The terms, notations and concepts that we do not define can be found in \cite{West}.

A \textit{proper edge-coloring} of graph $G$ is a coloring of the edges of $G$ such that no two adjacent edges receive the same color. The \textit{chromatic index} $\chi'(G)$ of a graph $G$ is the minimum number of colors used in a proper edge-coloring of $G$. If $\alpha$ is a proper edge-coloring of $G$ and $v \in V(G)$, then the \textit{spectrum} of a vertex $v$, denoted by $S(v,\alpha)$, is the set of colors of edges incident to $v$. By $\underline{S}(v,\alpha)$ and $\overline{S}(v,\alpha)$ we denote the the smallest and largest colors of the spectrum, respectively. If $\alpha$ is a proper edge-coloring of $G$ and $H$ is a subgraph of $G$, then we can define a union and intersection of spectrums of the vertices of $H$:
\begin{center}
$S_\cap(H, \alpha) = \bigcap\limits_{v \in V(H)} S(v, \alpha)$ \\
$S_\cup(H, \alpha) = \bigcup\limits_{v \in V(H)} S(v, \alpha)$ \\
\end{center}

A proper edge-coloring of a graph $G$ with colors $1,2,\ldots,t$ is an \textit{interval $t$-coloring} if all colors are used, and for any vertex $v$ of $G$, the set $S(v, \alpha)$ is an interval of consecutive integers. A graph $G$ is \textit{interval colorable} if it has an interval $t$-coloring for some positive integer $t$. The set of interval colorable graphs is denoted by $\mathfrak{N}$. For a graph $G \in \mathfrak{N}$, the least and the greatest values of $t$ for which $G$ has an interval $t$-coloring are denoted by $w(G)$ and $W(G)$, respectively. 

The concept of interval edge-coloring was introduced by Asratian and Kamalian \cite{AsratianKamalian1987}. In \cite{AsratianKamalian1987,AsratianKamalian1994}, they proved that if $G$ is interval colorable, then $\chi'(G)=\Delta(G)$. For regular graphs the converse is also true. Moreover, if $G \in \mathfrak{N}$ is regular, then $w(G) = \Delta(G)$ and $G$ has an interval $t$-coloring for every $t$, $w(G) \leq t \leq W(G)$. For a complete graph $K_m$, Vizing \cite{Vizing1965} proved that $\chi'(K_m)=m-1$ if $m$ is even and $\chi'(K_m)=m$ if $m$ is odd. These results imply that the complete graph is interval colorable if and only if the number of vertices is even. Moreover, $w(K_{2n}) = 2n-1$, for any $n \in \mathbb{N}$. On the other hand, the problem of determining the exact value of $W(K_{2n})$ is open since 1990.

In \cite{Kamalian1990} Kamalian proved the following upper bound on $W(G)$:
\begin{theorem}
If $G$ is a connected graph with at least two vertices and $G\in \mathfrak{N}$, then $W(G) \leq 2|V(G)|-3$.
\label{upperKamalian}
\end{theorem}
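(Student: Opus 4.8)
The plan is to fix an interval $t$-coloring $\alpha$ with $t = W(G)$ and exploit the defining feature of interval colorings: for every vertex $v$ the spectrum $S(v,\alpha)$ is an interval of exactly $\deg(v)$ consecutive integers, so $\overline{S}(v,\alpha) = \underline{S}(v,\alpha) + \deg(v) - 1$. Since color $t$ is used, there is a vertex $y$ with $\overline{S}(y,\alpha) = t$, and then $t = \underline{S}(y,\alpha) + \deg(y) - 1$. As $G$ is simple we have $\deg(y) \le \Delta(G) \le |V(G)| - 1$, so writing $n = |V(G)|$ reduces the whole theorem to the single inequality $\underline{S}(y,\alpha) \le n - 1$. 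In fact I would prove the stronger uniform statement that $\underline{S}(v,\alpha) \le n - 1$ holds for \emph{every} vertex $v$, i.e.\ no vertex can have all of its incident colors $\ge n$.

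To prove this key lemma I would use connectivity together with the elementary overlap fact: if $uw \in E(G)$ then the color of $uw$ lies in both $S(u,\alpha)$ and $S(w,\alpha)$, so these two intervals meet and consequently $\underline{S}(w,\alpha) \le \overline{S}(u,\alpha) = \underline{S}(u,\alpha) + \deg(u) - 1$. Since color $1$ is used, there is a vertex $x$ with $\underline{S}(x,\alpha) = 1$, and by connectivity there is a path from $y$ down to $x$. The smallest color decreases from $\underline{S}(y,\alpha)$ to $1$ along this path, and at each step where the smallest color drops by $\delta > 0$ the overlap inequality forces the vertex at that step to have at least $\delta$ further incident edges, hence at least $\delta$ additional neighbors. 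The idea is then to charge each unit of total decrease $\underline{S}(y,\alpha) - 1$ to a distinct vertex of $G$ other than $y$; since there are only $n - 1$ such vertices, this yields $\underline{S}(y,\alpha) - 1 \le n - 2$, i.e.\ $\underline{S}(y,\alpha) \le n - 1$. The extremal behaviour of stars, where $\underline{S}(y,\alpha) = n-1$ is attained, confirms that the count is tight and that each degree unit must genuinely correspond to a private vertex.

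The main obstacle is precisely this charging step: making the assignment of decreases to vertices \emph{injective}. Two difficulties must be handled. First, the auxiliary neighbors produced at different steps of the descent could a priori coincide or revisit vertices already on the path, so I would descend along a \emph{shortest} path (a geodesic has no chords, which limits how many times an off-path vertex can be reused) and argue that the neighbors charged at a given vertex are new. Second, and more delicate, the set of values $\{\underline{S}(v,\alpha)\}$ may have gaps, i.e.\ there can be a color $c$ that is no vertex's smallest color; at such a color the descent passes through a vertex whose interval \emph{straddles} $c$, and one must check that such a high-degree straddling vertex still contributes enough distinct vertices to absorb the corresponding part of the decrease. Once the injectivity of the charging is established, the lemma follows, and combining $\underline{S}(y,\alpha) \le n - 1$ with $\deg(y) \le n - 1$ gives $t = \underline{S}(y,\alpha) + \deg(y) - 1 \le 2(n-1) - 1 = 2|V(G)| - 3$, as required; the degenerate case $x = y$ is subsumed since then $\underline{S}(y,\alpha) = 1$.
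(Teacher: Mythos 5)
The paper itself gives no proof of this statement (it is quoted from Kamalian's thesis \cite{Kamalian1990}), so your proposal can only be judged on its own terms --- and it has a fatal flaw, not merely the technical gaps you flagged. Your reduction step is a valid one-way implication: if $y$ attains the top color $t$, then $t = \underline{S}(y,\alpha) + \deg(y) - 1 \le \underline{S}(y,\alpha) + n - 2$, so $\underline{S}(y,\alpha) \le n-1$ would suffice. But the key lemma you plan to prove --- $\underline{S}(v,\alpha) \le n-1$ for every vertex $v$, or even just for a vertex attaining color $t$ --- is \emph{false}. Counterexample with $n=6$: let $\{b,c,d,e\}$ induce a $K_4$, let $a$ be adjacent to $d$ and to $y$, and color $bc \mapsto 1$, $bd \mapsto 2$, $ce \mapsto 2$, $be \mapsto 3$, $cd \mapsto 3$, $de \mapsto 4$, $ad \mapsto 5$, $ay \mapsto 6$. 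This is proper, every color $1,\ldots,6$ is used, and the spectra $S(b)=S(c)=[1,3]$, $S(e)=[2,4]$, $S(d)=[2,5]$, $S(a)=[5,6]$, $S(y)=\{6\}$ are all intervals, so it is an interval $6$-coloring of a connected $6$-vertex graph; yet $y$ attains the maximum color and $\underline{S}(y,\alpha) = 6 > 5 = n-1$. The theorem survives only because $\deg(y)=1$: what can be bounded is the \emph{combination} $\underline{S}(y,\alpha)+\deg(y)$, never each factor separately by $n-1$.

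This also shows concretely why no repair of your charging argument can work. Inside the $K_4$ the minimum color climbs from $1$ to $4$ while consuming only four vertices, because color classes restricted to a clique are matchings of size up to $2$ and one vertex can absorb several units of ``drop''. On the (chordless, shortest) descent $y,a,d,b$ the witness sets for the drops are $\{d\}$, then $\{b,c,e\}$ (three units), then $\{c\}$: five units of decrease but only four distinct available vertices, so no injective charging (no system of distinct representatives) exists --- and none can, since the statement being charged for is false. Worse, chaining such blocks (block $j$ a $K_4$ colored with $[5j-4,5j-1]$, consecutive blocks joined by an edge colored $5j$, a pendant edge colored $5k$ at the end) gives connected graphs on $n = 4k+1$ vertices with a vertex of minimum color $5k$, so $\underline{S}$ exceeds $n-1$ by about $n/4$; with $K_{2m}$ blocks the excess ratio approaches $2$. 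Any correct proof of $W(G) \le 2|V(G)|-3$ must therefore couple the descent of the spectrum with the degree of the terminal vertex rather than bound $\underline{S}(y,\alpha)$ and $\deg(y)$ independently; your extremal ``star'' intuition is misleading precisely because stars are not the extremal configurations --- dense blocks are.
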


This upper bound was improved by Giaro, Kubale, Malafiejski in \cite{GiaroKubMal2001}:

\begin{theorem}\label{tGiaroKubale4n4}
If $G$ is a connected graph with at least three vertices and $G\in \mathfrak{N}$, then $W(G) \leq 2|V(G)|-4$.
\end{theorem}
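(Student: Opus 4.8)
The plan is to take Theorem~\ref{upperKamalian} as the starting point and to rule out the extremal value $2|V(G)|-3$, since shaving off a single unit is exactly what separates the two bounds. Let $G$ be connected with $n=|V(G)|\ge 3$ and let $\alpha$ be an interval $t$-coloring with $t=W(G)$; by Theorem~\ref{upperKamalian} we already have $t\le 2n-3$, so it suffices to show that $t=2n-3$ is impossible. I would first dispose of a trivial range: since $\alpha$ is proper and $G$ is simple, every spectrum $S(v,\alpha)$ has length $d(v)\le \Delta(G)\le n-1$, so if $t\le n-1$ then $t\le n-1\le 2n-4$ because $n\ge 3$, and we are done. Hence we may assume $n\le t=2n-3$.

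Next I would extract the forced extremal structure. Fix a vertex $b$ with $\overline{S}(b,\alpha)=t$ and a vertex $a$ with $\underline{S}(a,\alpha)=1$; since $S(b,\alpha)=[\,t-d(b)+1,\,t\,]$ and $S(a,\alpha)=[\,1,\,d(a)\,]$ with $d(a),d(b)\le n-1$, we get $\underline{S}(b,\alpha)\ge n-1$ and $\overline{S}(a,\alpha)\le n-1$. Because $t\ge n$, no vertex can meet both color $1$ and color $t$ (that would force degree $t\ge n>\Delta(G)$), so $a\ne b$. Thus every vertex meeting the top color $t$ has its whole spectrum inside $[\,n-1,\,2n-3\,]$, every vertex meeting the bottom color $1$ has its spectrum inside $[\,1,\,n-1\,]$, and these two families of intervals can overlap only in the single color $n-1$; moreover an interval can reach the junction color $n-1$ only if its vertex has full degree $n-1$. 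The coloring must therefore tile the whole palette $[\,1,\,2n-3\,]$ across this rigid split, and I would record precisely which vertices are compelled to have full degree and which colors force a full-degree neighbor (for instance the endpoints of the edge colored $t$ and of the edge colored $1$).

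Finally I would derive a contradiction from properness together with connectivity. The key tension is that color $n-1$ is used, so the matching $M_{n-1}$ of edges colored $n-1$ covers at least two vertices, each of which is either a full-degree ``low'' vertex with spectrum $[\,1,\,n-1\,]$, a full-degree ``high'' vertex with spectrum $[\,n-1,\,2n-3\,]$, or a ``straddling'' vertex whose interval crosses $n-1$; I would chase the neighbors of these vertices through their forced intervals to expose either a vertex obliged to repeat a color on two incident edges or a color that no available matching can realize. I expect the main obstacle to be controlling the straddling vertices: nothing a priori forces their degrees to be large, so the clean ``two halves meeting only at $n-1$'' picture can be muddied, and the heart of the argument is to show that even in their presence, placing all $2n-3$ colors on only $n$ vertices over a connected graph is impossible for $n\ge 3$ (with $n=3$ checked directly). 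An alternative, self-contained route I would keep in reserve is to build two color-decreasing paths starting from the two endpoints of the edge colored $t$ and running down to color $1$, and to show that forcing both paths to be simple and vertex-disjoint consumes one vertex more than a single such path needs, which converts the $2n-3$ estimate into $2n-4$; there the obstacle is guaranteeing the strict monotonicity and the disjointness of the two paths.
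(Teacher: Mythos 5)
This theorem is not proved in the paper at all: it is quoted from Giaro, Kubale and Ma{\l}afiejski \cite{GiaroKubMal2001} and used as a black box, so there is no in-paper argument to compare yours against. Judged on its own merits, your proposal has a genuine gap. The setup is fine but routine: reducing to ruling out $t=2|V(G)|-3$, observing that any vertex seeing color $t$ has spectrum inside $[n-1,2n-3]$, any vertex seeing color $1$ has spectrum inside $[1,n-1]$, that these overlap only at $n-1$, and that reaching $n-1$ from either end forces degree $n-1$ --- all of this is a direct consequence of spectra being intervals of length $d(v)\le n-1$. The theorem's actual content begins exactly where your sketch stops. The sentences ``chase the neighbors \ldots to expose either a vertex obliged to repeat a color on two incident edges or a color that no available matching can realize'' and ``the heart of the argument is to show that \ldots is impossible'' are restatements of the goal, not arguments; no mechanism is given that produces the contradiction. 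You yourself identify the fatal obstruction: vertices whose spectra straddle $n-1$ but contain neither $1$ nor $t$ are subject to none of your forced-structure constraints, so the ``two rigid halves'' picture on which the endgame relies simply is not available, and nothing in the sketch recovers it.

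The reserve plan has the same status: the existence of a strictly color-decreasing path from the edge colored $t$ down to color $1$, let alone two vertex-disjoint such paths, is not established and is not a standard fact one may invoke --- properness plus intervality only guarantees that adjacent spectra intersect, which permits color-decreasing steps but not global monotone simple paths, and the disjointness claim is exactly the kind of statement that needs the missing combinatorial work. For calibration: the published proof of this bound is a careful argument occupying real space in \cite{GiaroKubMal2001}, and even the weaker bound of Theorem \ref{upperKamalian} is not a few-line consequence of the interval condition. Note also that the machinery this paper develops (Equivalence lemma, Lemma \ref{l2k1}, Corollary \ref{c2n4}) yields sharper bounds such as $W(K_{2n})\le 4n-5$, but only for complete graphs, so it cannot be repurposed to fill your gap for general connected $G$.
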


Improved upper bounds on $W(G)$ are known for several classes of graphs, including triangle-free graphs \cite{AsratianKamalian1987,AsratianKamalian1994}, planar graphs \cite{Axenovich2002} and $r$-regular graphs with at least $2r+2$ vertices \cite{KamalianPetrosyan2012}. The exact value of the parameter $W(G)$ is known for even cycles, trees \cite{Kamalian1989}, complete bipartite graphs \cite{Kamalian1989}, Möbius ladders \cite{Petrosyan2005} and $n$-dimensional cubes \cite{Petrosyan2010, PetrosyanKhachatrianTananyan2013}. This paper is focused on investigation of $W(K_{2n})$.

The first lower bound on $W(K_{2n})$ was obtained by Kamalian in \cite{Kamalian1990}:
\begin{theorem}
For any $n\in \mathbb{N}$, $W(K_{2n}) \geq 2n-1 + \lfloor \log_2{(2n-1)} \rfloor$.
\end{theorem}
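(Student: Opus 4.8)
The plan is to prove the bound constructively: I will exhibit, for each $n$, an interval coloring of $K_{2n}$ using exactly $2n-1+\lfloor\log_2(2n-1)\rfloor$ colors. Since $K_{2n}$ is regular and interval colorable with $w(K_{2n})=2n-1$, every color beyond the $2n-1$ forced ones must be ``squeezed out'' at the two ends of the palette, and the natural way to gain a logarithmic number of them is a recursive halving.

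The construction I would use views $K_{2n}$ as a join $K_{2n}=K_a\vee K_b$ with $a+b=2n$ and $a,b$ even (balanced $a=b=n$ when $n$ is even, and $a=n-1$, $b=n+1$ when $n$ is odd, so that both parts remain interval colorable). Writing $A,B$ for the two parts, I color $G[A]\cong K_a$ and $G[B]\cong K_b$ by shifted copies of recursively obtained interval colorings, placing the palette of $A$ below that of $B$, and I color the complete bipartite ``bridge'' $K_{a,b}$ so that at every vertex the bridge interval abuts the internal interval. Concretely, if $a_i\in A$ has internal spectrum $[\ell_i,\ell_i+a-2]$, the bridge must contribute exactly $[\ell_i+a-1,\ell_i+a+b-2]$ there, and symmetrically it must sit just below the internal interval of each $b_j\in B$. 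A short arithmetic check (of the kind carried out in the author's proof) shows that one balanced doubling adds $n+1$ colors, and since $\lfloor\log_2(2n-1)\rfloor=\lfloor\log_2(n-1)\rfloor+1$ for $n\ge 2$, iterating this reduction down to the base case $K_2$ (one edge, one color) accumulates exactly the claimed $\lfloor\log_2(2n-1)\rfloor$ surplus.

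The step I expect to be the real obstacle is making the bridge coloring exist \emph{simultaneously} for all vertices. By a one-dimensional Helly argument the spectra of any interval coloring of a complete graph pairwise intersect (each pair of vertices shares an edge, whose color lies in both spectra) and hence share a common color, so the internal intervals of $K_a$ cannot be spread out arbitrarily; the bridge therefore has no choice but to attach at the extreme end of each internal interval, and the required bridge spectra $[\ell_i+a-1,\ell_i+a+b-2]$ are pinned down by the smallest colors $\ell_i=\underline S(a_i,\alpha)$. Realizing these prescribed vertex intervals by an honest proper edge-coloring of $K_{a,b}$ is a consecutive-ones feasibility problem, and it will succeed only if the multiset $\{\ell_i\}$ has a controlled ``staircase'' shape. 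Hence the induction must do more than count colors: I would carry an invariant describing the profiles of $\underline S(\cdot,\alpha)$ and $\overline S(\cdot,\alpha)$ on the two sides, prove that the bridge coloring preserves this profile on $K_{2n}$, and verify that the base case satisfies it.

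Finally I would tidy up the edge cases: the unbalanced split $a=n-1$, $b=n+1$ used when $n$ is odd recurses into two even graphs of sizes $n\pm1$ and must be shown to give the same $n+1$-color gain while maintaining the invariant, and the values of $n$ at which the floor term does not increase must be checked so that the running total never exceeds the target. Assembling these yields $W(K_{2n})\ge 2n-1+\lfloor\log_2(2n-1)\rfloor$ for every $n\in\mathbb N$.
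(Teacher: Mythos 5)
First, a point of comparison you could not have known: the paper does not prove this statement at all. It is quoted as background from Kamalian's thesis \cite{Kamalian1990}, so there is no in-paper proof to match; your proposal has to stand on its own. It does not, and the failure is concentrated exactly where you suspected it would be --- at the bridge --- but it is worse than a missing lemma: your accounting is incompatible with your own prescription, so no profile invariant can rescue the construction as stated.

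Here is the concrete obstruction. Suppose, as you prescribe, the bridge spectrum at each $a_i\in A$ is exactly $[\ell_i+a-1,\ell_i+a+b-2]$ and at each $b_j\in B$ exactly $[m_j-a,m_j-1]$. Since the bridge spectrum of a vertex must be realized exactly once per color, for every color $c$ the number of $A$-vertices whose prescribed interval contains $c$ equals the number of bridge edges colored $c$, which equals the number of $B$-vertices whose prescribed interval contains $c$. Comparing the extreme prescribed colors on the two sides forces $\min_j m_j=\min_i \ell_i+2a-1$ and $\max_j m_j=\max_i \ell_i+a+b-1$, and hence the number of colors gained in a balanced doubling $K_{2n}=K_n\vee K_n$ is forced to equal $2n-1$, never your claimed $n+1$ (the two coincide only at $n=2$, which is why the $K_4$ base case is deceptive). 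Concretely, take $K_8=K_4\vee K_4$ with your target $t+n+1=9$: color $A$ by $[1,4]$ and $B$ by $[6,9]$. A $B$-vertex with internal spectrum $[6,8]$ is prescribed bridge colors $\{2,3,4,5\}$, but every bridge edge at that vertex also meets $A$, where all admissible bridge colors are at least $\min_i\ell_i+a-1=4$; the colors $2$ and $3$ can never be placed. So the ``short arithmetic check'' you defer to does not exist. You have two ways out, and both require real work you have not done: either allow bridge spectra to straddle the internal intervals, which destroys your color count and reopens the feasibility question; or accept the forced gain $2n-1$, which turns your construction into essentially Theorem \ref{tPetrosyan4n} (note it is stated for $K_{4n}=K_{2n}\vee K_{2n}$, both parts even) --- a stronger bound, but then you must still prove the bipartite realization lemma (that the prescribed staircase of intervals on $K_{n,n}$ is achievable by a proper coloring), handle your unbalanced split $K_{n-1}\vee K_{n+1}$ for odd $n$ where the two sides' prescribed intervals have different lengths, and redo the entire $\lfloor\log_2(2n-1)\rfloor$ bookkeeping. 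This ``which colors may cross the cut'' bookkeeping is precisely what the paper's splitted-matching formalism (Lemma \ref{lEquiv}, Lemma \ref{lComposite}) does explicitly, realizing the bridge by concrete perfect matchings instead of an unproven interval-feasibility claim; if you want a complete elementary proof along your lines, that is the machinery to imitate.
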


This bound was improved by the second author in \cite{Petrosyan2010}:
\begin{theorem}\label{tPetrosyan3n2}
For any $n\in \mathbb{N}$, $W(K_{2n}) \geq 3n-2$.
\end{theorem}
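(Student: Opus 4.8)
The plan is to exhibit a single interval colouring of $K_{2n}$ that uses exactly $3n-2$ colours; since the statement only asserts the lower bound $W(K_{2n})\ge 3n-2$, producing one such colouring suffices. Fix the palette $[1,3n-2]$. In any interval colouring with this many colours every vertex must display an interval of length $2n-1$, so its smallest colour (its start) lies in $[1,n]$. I would therefore split $V(K_{2n})$ into $n$ pairs $P_1,\dots,P_n$ and require that both vertices of $P_k$ have spectrum exactly $I_k=[k,k+2n-2]$. Note that $I_1=[1,2n-1]$ forces colour $1$ and $I_n=[n,3n-2]$ forces colour $3n-2$, so the two ends of the palette are used automatically.

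Next I would reduce the task to a local assignment on the pairs. The edges fall into the $n$ within-pair edges and, for each $k<l$, the four edges of the $K_{2,2}$ joining $P_k$ to $P_l$. Every such edge lies in the spectra of both its endpoints, so its colour must come from the overlap $I_k\cap I_l=[l,k+2n-2]$. Each vertex of $P_k$ meets its within-pair edge together with two edges of every block $\{P_k,P_l\}$, giving $2(n-1)+1=2n-1$ incident edges, as it should. The construction is then to properly edge-colour each block (with colours drawn from the relevant overlap) and to choose the within-pair colours so that both vertices of every pair see precisely $I_k$. A delicate point, which gives the scheme the slack it needs, is that the two vertices of a pair need \emph{not} receive the same colours from a given block: they are only required to accumulate the same interval $I_k$ after all $n-1$ blocks have been combined.

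The verification then splits into three parts: properness (automatic once each vertex carries $2n-1$ distinct colours), the equality of each vertex spectrum with its target interval $I_k$, and the appearance of every colour of $[1,3n-2]$ (the extremes via the within-pair edges of $P_1$ and $P_n$, the interior colours from the spreading of the blocks). I expect the main obstacle to be isolating a single uniform rule for the block colourings under which every vertex interval closes up with no internal gap; I would first fix the pattern on $K_4$, $K_6$ and $K_8$, read off its regularity, and then prove it in general, either by induction on $n$ (adjoining one pair and three new colours at each step, with a hypothesis strong enough to be preserved) or by a direct count of the colours a fixed vertex collects from its blocks. Finally, it is worth recording why the pairing is preferable to the more obvious idea of splitting $V(K_{2n})$ into two $n$-cliques joined by a complete bipartite graph: that alternative would force an interval colouring of $K_n$, which cannot exist when $n$ is odd since $\chi'(K_n)=n>\Delta(K_n)$, whereas the pairing construction never requires one.
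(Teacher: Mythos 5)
Your setup is sound, and it is in fact the coloring-side reformulation of what the paper does: demanding that both vertices of pair $P_k$ have spectrum $I_k=[k,k+2n-2]$ is exactly demanding an interval $(3n-2)$-coloring whose shift vector (in the paper's terminology) is $(1,1,\ldots,1)$, and your observations about the overlaps $I_k\cap I_l=[l,k+2n-2]$ and about the two vertices of a pair being allowed to collect different colors from a block are both correct. But the proposal stops precisely where the theorem begins: no coloring rule is ever given. The whole content of the statement is the \emph{existence} of an assignment under which all $2n$ vertex spectra close up into the prescribed intervals, and this is deferred to ``fix the pattern on $K_4$, $K_6$ and $K_8$, read off its regularity, and then prove it in general.'' Neither the pattern, nor an induction hypothesis, nor the promised count is supplied, so nothing is proved; satisfying the local necessary condition that each edge's color lie in the overlap of its endpoints' target intervals does not by itself imply that a globally consistent choice exists. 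This is a genuine gap, not a presentational one.

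The paper fills exactly this gap with an explicit construction, phrased on the 1-factorization side of its Equivalence lemma: an interval $(3n-2)$-coloring with shift vector $(1,\ldots,1)$ is equivalent to a 1-factorization of $K_{2n}$ containing, for each $i=1,\ldots,n-1$, one $i$-splitted perfect matching (one whose edges do not cross the line between the $i$-th and $(i+1)$-th pairs). To get it, split $E(K_{2n})$ into $E(K_2\square K_n)$ (the within-pair edges $u_kv_k$ and the parallel edges $u_ku_l$, $v_kv_l$) and $E(K_2\times K_n)$ (the crossing edges $u_kv_l$, $k\neq l$). The first graph has the explicit 1-factorization $\mathfrak{P}_n=\{P_0,P_1,\ldots,P_{n-1}\}$ in which $P_i$ matches indices by the reflection $j\mapsto i+1-j$ among the first $i$ pairs and by $j\mapsto n+i+1-j$ among the rest, using the vertical edge $u_jv_j$ at a fixed point of the reflection; hence $P_i$ is $i$-splitted. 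The second graph is regular bipartite, so König's theorem gives it a 1-factorization, whose matchings serve as the non-splitted ones. This yields $\sigma_n\geq n-1$, i.e. $W(K_{2n})\geq 2n-1+(n-1)=3n-2$. Concretely, the uniform rule you were searching for is: color $l(P_i)$ with $i$, color $r(P_i)$ with $2n-1+i$, and color $P_0$ together with the $n-1$ bipartite matchings with $n,n+1,\ldots,2n-1$; then vertex pair $k$ sees exactly $[k,k+2n-2]$. Note also that your closing objection to the ``two cliques joined by a bipartite graph'' route dissolves here: the paper never interval-colors $K_n$ (which indeed is impossible for odd $n$); it 1-factorizes the graph consisting of the two cliques plus the vertical matching as a single graph, so the parity obstruction never arises.
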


In the same paper he also proved the following statement:
\begin{theorem}\label{tPetrosyan4n}
For any $n\in \mathbb{N}$, $W(K_{4n}) \geq 4n-1 + W(K_{2n})$.
\end{theorem}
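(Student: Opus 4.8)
The plan is to build an interval coloring of $K_{4n}$ that uses exactly $4n-1+W(K_{2n})$ colors out of an optimal interval coloring of $K_{2n}$. Set $t=W(K_{2n})$ and fix an interval $t$-coloring $\alpha$ of $K_{2n}$ on vertices $v_1,\ldots,v_{2n}$; write $p_i=\underline{S}(v_i,\alpha)$, so that $S(v_i,\alpha)=[p_i,p_i+2n-2]$. I would partition $V(K_{4n})$ into two copies $A=\{a_1,\ldots,a_{2n}\}$ and $B=\{b_1,\ldots,b_{2n}\}$, indexed identically, so that $K_{4n}$ splits into the two cliques $K_{2n}$ on $A$ and on $B$ together with the complete bipartite graph joining them. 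Then I color the clique on $A$ by a verbatim copy of $\alpha$ (colors in $[1,t]$) and the clique on $B$ by $\alpha$ shifted up by $4n-1$ (colors in $[4n,4n-1+t]$); the whole budget $[1,4n-1+t]$ is thereby fixed, and the task reduces to coloring the bipartite edges so that every spectrum becomes a single interval.

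The heart of the construction is this bipartite coloring. Each $a_i$ has internal spectrum $[p_i,p_i+2n-2]$ and $2n$ bipartite edges, so I want those edges to receive exactly the $2n$ colors $[p_i+2n-1,p_i+4n-2]$ lying immediately above it; symmetrically, the $2n$ bipartite edges at $b_j$ should receive the $2n$ colors immediately below its (high) internal spectrum, namely $[p_j+2n-1,p_j+4n-2]$. The idea I would use is to reuse $\alpha$ a third time: color the off-diagonal edge $a_ib_j$ (for $i\neq j$) by $\alpha(v_iv_j)+(2n-1)$, and color each diagonal edge $a_ib_i$ by $p_i+4n-2$. The off-diagonal edges at $a_i$ then realize $S(v_i,\alpha)+(2n-1)=[p_i+2n-1,p_i+4n-3]$, and the single diagonal edge supplies the missing top color $p_i+4n-2$, completing the desired band; the same diagonal edge does the analogous job at $b_i$, which works precisely because $a_i$ and $b_i$ share the same parameter $p_i$.

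It then remains to verify three things: that $\alpha(v_iv_j)+(2n-1)$ together with the diagonal assignment is a proper coloring of the bipartite graph (the off-diagonal colors at a vertex are distinct because $\alpha$ is proper there, and the diagonal color sits strictly above them); that at each vertex the internal band and the bipartite band are disjoint and adjacent, so their union is $[p_i,p_i+4n-2]$ at $a_i$ and $[p_j+2n-1,p_j+6n-3]$ at $b_j$, each an interval of length $4n-1$; and that the colors used fill all of $[1,4n-1+t]$ with no gap, which follows since every spectrum is an interval and $K_{4n}$ is connected. I expect the main obstacle to be exactly this bipartite meshing: because both copies are forced to use the same $\alpha$, the band positions $p_i$ are not free parameters, so one cannot simply invoke a generic interval coloring of $K_{2n,2n}$ with prescribed vertex bands. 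The diagonal edges are the one degree of freedom that reconciles the forced bands, and realizing that they must carry the color $\underline{S}(v_i,\alpha)+4n-2$ is the crux of the argument; once this is in place, the remaining verifications are routine.
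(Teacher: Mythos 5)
Your construction is correct. At $a_i$ the internal edges give $[p_i,p_i+2n-2]$, the off-diagonal bipartite edges give $S(v_i,\alpha)+(2n-1)=[p_i+2n-1,p_i+4n-3]$, and the diagonal supplies $p_i+4n-2$, so $S(a_i)=[p_i,p_i+4n-2]$; symmetrically $S(b_i)=[p_i+2n-1,p_i+6n-3]$; properness holds because the three color bands at each vertex are disjoint and $\alpha$ is proper; and your connectivity argument for ``no missing color'' is legitimate (spectra of adjacent vertices share the color of the joining edge, so the union of all spectra is an interval containing $1$ and $4n-1+t$). However, this is a genuinely different route from the paper's. In this paper the statement is quoted from \cite{Petrosyan2010}, and the paper's own derivation of it is as the special case $m=2$ of Theorem \ref{tComposite}: via the Equivalence theorem (Theorem \ref{tEquiv}), finding interval colorings is translated into finding 1-factorizations with many splitted perfect matchings, and Lemma \ref{lComposite} assembles a 1-factorization of $K_{2mn}$ with $\sigma_m+\sigma_n+2(m-1)(n-1)$ splitted matchings from factorizations of $K_{2m}$, $K_{2n}$ and the auxiliary factorization $\mathfrak{P}_{2m}$ of $K_2\square K_{2m}$. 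Your proof is more elementary and self-contained --- it never needs the coloring-to-factorization dictionary, and the one idea it requires is exactly the one you isolated, that the diagonal edge $a_ib_i$ must carry $\underline{S}(v_i,\alpha)+4n-2$, which works because both copies of $v_i$ share the parameter $p_i$. What it does not buy is generality: your decomposition (two cliques plus one bipartite graph, with the diagonal as the only reconciling degree of freedom) is tailored to doubling, so iterating it only recovers bounds of the type of Theorem \ref{tPetrosyan4npq}, whereas the paper's factorization machinery handles arbitrary factors $m,n$ and yields $W(K_{2mn})\geq W(K_{2m})+W(K_{2n})+4(m-1)(n-1)-1$, which is what disproves Conjecture \ref{conjLog} (e.g.\ $W(K_{30})\geq 52$) and feeds the improved lower bound of Theorem \ref{tLower}.
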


By combining these two results he obtained an even better lower bound on $W(K_{2n})$:
\begin{theorem}\label{tPetrosyan4npq}
If $n=p2^q$, where $p$ is odd, $q \in \mathbb{Z}_+$, then $W(K_{2n}) \geq 4n-2 - p - q$.
\end{theorem}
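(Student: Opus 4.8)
The plan is to combine the two lower bounds already established, Theorem~\ref{tPetrosyan3n2} and Theorem~\ref{tPetrosyan4n}, by induction on the exponent $q$. Write $n = p2^q$ with $p$ odd and $q \in \mathbb{Z}_+$; the inequality to be proved is $W(K_{2n}) \geq 4n - 2 - p - q$. Since Theorem~\ref{tPetrosyan4n} is a ``doubling'' statement that relates $W(K_{4m})$ to $W(K_{2m})$, it is natural to strip off factors of $2$ from $n$ one at a time, which is exactly what induction on $q$ achieves.

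For the base case $q = 0$ we have $n = p$, so the desired inequality reads $W(K_{2p}) \geq 4p - 2 - p - 0 = 3p - 2$, which is precisely Theorem~\ref{tPetrosyan3n2} applied with $n$ replaced by $p$. Thus the base case requires no new work and simply quotes the $3n-2$ bound.

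For the inductive step, assume the statement holds for $q - 1$ and set $m = p2^{q-1}$, so that $n = 2m$ and hence $K_{2n} = K_{4m}$. Theorem~\ref{tPetrosyan4n} gives $W(K_{4m}) \geq 4m - 1 + W(K_{2m})$, while the induction hypothesis applied to $m = p2^{q-1}$ gives $W(K_{2m}) \geq 4m - 2 - p - (q-1) = 4m - 1 - p - q$. Chaining these two estimates yields
\[
W(K_{2n}) = W(K_{4m}) \geq (4m - 1) + (4m - 1 - p - q) = 8m - 2 - p - q = 4n - 2 - p - q,
\]
since $4n = 8m$. This closes the induction and establishes the theorem.

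The argument is essentially pure bookkeeping: all the combinatorial substance lives in the doubling inequality of Theorem~\ref{tPetrosyan4n} and in the $3n-2$ bound of Theorem~\ref{tPetrosyan3n2}, both of which we are free to invoke. The only point that needs care is the alignment of the arithmetic across the induction—tracking that halving $n$ to $m$ simultaneously turns $4n$ into $8m$ and decrements the exponent $q$ by one—so I would double-check that the constants $-2 - p - q$ are reproduced \emph{exactly} rather than merely up to an additive slack. I expect no genuine obstacle beyond this verification.
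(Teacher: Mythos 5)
Your proof is correct and follows exactly the route the paper indicates: Theorem~\ref{tPetrosyan4npq} is stated there as the result of ``combining'' Theorem~\ref{tPetrosyan3n2} and Theorem~\ref{tPetrosyan4n}, which is precisely your induction on $q$ with the $3n-2$ bound as base case and the doubling inequality as the inductive step. The arithmetic in your inductive step checks out, so nothing further is needed.
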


In that paper the second author also posed the following conjecture:
\begin{conjecture}
\label{conjPQ}
If $n=p2^q$, where $p$ is odd, $q \in \mathbb{Z}_+$, then $W(K_{2n}) = 4n-2 - p - q$.
\end{conjecture}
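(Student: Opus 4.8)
The equality to be established splits into two inequalities. The lower bound $W(K_{2n})\ge 4n-2-p-q$ is precisely Theorem~\ref{tPetrosyan4npq}, so nothing new is needed there; the whole difficulty is concentrated in the matching upper bound $W(K_{2n})\le 4n-2-p-q$. The plan is therefore to fix an arbitrary interval $t$-coloring $\alpha$ of $K_{2n}$ and to bound $t$ from above, extracting from the global structure of $\alpha$ the arithmetic dependence on $p$ and $q$.

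First I would set up the spectrum-profile machinery. Since $K_{2n}$ is $(2n-1)$-regular, each spectrum $S(v,\alpha)$ is an interval of exactly $2n-1$ consecutive colors, and because every two vertices are adjacent we have $S(u,\alpha)\cap S(v,\alpha)\neq\varnothing$ for all $u,v$. Two intervals of common length $2n-1$ meet only if their least elements differ by at most $2n-2$, whence
\[
\max_{v}\underline{S}(v,\alpha)-\min_{v}\underline{S}(v,\alpha)\le 2n-2 .
\]
As color $1$ is used, $\min_v\underline{S}(v,\alpha)=1$, and the vertex carrying the top color $t$ has least color $t-2n+2$; together these recover Kamalian's bound $t\le 4n-3$ (Theorem~\ref{upperKamalian}). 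To go further I would track the profile $s_i:=\#\{v:\underline{S}(v,\alpha)=i\}$. Every vertex whose spectrum contains color $k$ has least color in $[k-2n+2,k]$, so the number of edges colored $k$ equals $\tfrac12\sum_{i=k-2n+2}^{k}s_i$, which must be a positive integer at most $n$. Thus every length-$(2n-1)$ window sum of $(s_i)$ is a positive even integer $\le 2n$, while $\sum_i s_i=2n$. Maximizing $t$ subject to this parity-constrained window system, refined by the fact that the color classes are matchings of $K_{2n}$, is what yields the Giaro--Kubale--Malafiejski bound $t\le 4n-4$ (Theorem~\ref{tGiaroKubale4n4}); I would sharpen exactly this analysis.

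The arithmetic in $p$ and $q$ cannot come from the window system alone, so I would aim to reverse the doubling recursion that drives the lower bound. Concretely, the target is the companion of Theorem~\ref{tPetrosyan4n}, namely $W(K_{4n})\le 4n-1+W(K_{2n})$, together with the base estimate $W(K_{2p})\le 3p-2$ for odd $p$ (the reverse of Theorem~\ref{tPetrosyan3n2} on odd arguments). Granting these, writing $n=p2^q$ and applying the recursion $q$ times down to $K_{2p}$ telescopes to $4n-2-p-q$, mirroring the computation behind Theorem~\ref{tPetrosyan4npq}. For the recursive step I would take an interval coloring of $K_{4n}$ realizing $W(K_{4n})$, partition the $4n$ vertices into two classes of size $2n$ spanning two copies of $K_{2n}$ joined by a $K_{2n,2n}$, and try to show that one copy inherits an interval coloring using at least $W(K_{4n})-(4n-1)$ colors.

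The main obstacle is exactly this decomposition. The lower-bound recursion is constructive: it glues an optimal coloring of $K_{2n}$ to a colored bipartite part. The upper bound demands the opposite, that every extremal coloring of $K_{4n}$ be cut along some balanced vertex split into near-optimal pieces, and there is no a priori reason such a split exists. Colors need not segregate across the two halves, and the bipartite block interacts with both copies simultaneously, so neither the window profile nor the matching structure obviously forces a clean restriction. Overcoming this would require identifying a canonical balanced bipartition adapted to $\alpha$ (for instance one determined by the least-color profile $s_i$) and proving that the colors crossing it number at most $4n-1$. I expect this to be the crux, and it is precisely the point at which the claimed equality is most delicate.
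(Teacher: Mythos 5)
You are attempting to prove a statement that is false, and the paper contains no proof of it: Conjecture~\ref{conjPQ} is stated only in order to be \emph{disproved}, which is one of the paper's main points. The smallest counterexample is $n=5$ ($p=5$, $q=0$), where the conjecture predicts $W(K_{10})=4\cdot 5-2-5-0=13$, while Lemma~\ref{l35n3} and Theorem~\ref{t35n3} give $W(K_{2n})\ge\lfloor 3.5n\rfloor-3$, hence $W(K_{10})\ge 14$ (an explicit interval $14$-coloring of $K_{10}$ had already been constructed in \cite{Khachatrian2012}). The construction producing this is exactly the kind of object your upper-bound strategy would have to rule out: via the equivalence between interval colorings and $1$-factorizations with many splitted perfect matchings (Theorem~\ref{tEquiv}), one takes the $n-1$ splitted matchings of $\mathfrak{P}_n$ in $K_2\square K_n$ and then gains $\lfloor n/2\rfloor-1$ additional splitted matchings by $1$-factorizing the two halves of the bipartite part $K_2\times K_n$ separately, rather than treating it as a single bipartite block.

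This pinpoints where your argument collapses: the base estimate you ``grant,'' $W(K_{2p})\le 3p-2$ for odd $p$ (the reverse of Theorem~\ref{tPetrosyan3n2}), is false for every odd $p\ge 5$, since $\lfloor 3.5p\rfloor-3>3p-2$ as soon as $p\ge 4$. The paper's exact values confirm the failure: $W(K_{10})=14$, $W(K_{14})=21$ (Lemma~\ref{lK14}) and $W(K_{22})=37$, against the conjectured $13$, $19$ and $31$. Your other granted ingredient, the reversed doubling recursion $W(K_{4n})\le 4n-1+W(K_{2n})$, is likewise nowhere established (the paper proves only the lower-bound direction, generalized in Theorem~\ref{tComposite}), and you correctly identify the absence of a canonical balanced cut as the obstruction --- but even if that recursion held, telescoping from a false base case cannot yield the conjectured equality. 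Your spectrum-profile setup and the lower-bound half via Theorem~\ref{tPetrosyan4npq} are sound; the upper-bound half cannot be completed because $4n-2-p-q$ is only a lower bound, strict whenever $p\ge 5$.
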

He verified this conjecture for $n \leq 4$, but the first author disproved it by constructing an interval $14$-coloring of $K_{10}$ in \cite{Khachatrian2012}. In ``Cycles and Colorings 2012'' workshop the second author presented another conjecture on $W(K_{2n})$:
\begin{conjecture}
\label{conjLog}
If $n \in \mathbb{N}$, then $W(K_{2n}) = 4n - 2 - \left\lfloor\log_2{n}\right\rfloor - \left \| n_2 \right \|$, where $\left \| n_2 \right \|$ is the number of $1$'s in the binary representation of $n$.
\end{conjecture}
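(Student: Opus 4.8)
I would attack Conjecture~\ref{conjLog} by establishing the two inequalities $W(K_{2n}) \ge f(n)$ and $W(K_{2n}) \le f(n)$ separately, where I write $f(n) = 4n-2-\lfloor\log_2 n\rfloor-\|n_2\|$ for the conjectured value. The observation that organizes both halves is that $f$ obeys the doubling identity $f(2m) = 4m-1+f(m)$: since $\lfloor\log_2(2m)\rfloor = 1+\lfloor\log_2 m\rfloor$ and $\|(2m)_2\| = \|m_2\|$, the two sides agree after cancellation. Consequently the lower-bound recursion of Theorem~\ref{tPetrosyan4n}, $W(K_{4m}) \ge 4m-1+W(K_{2m})$, propagates the conjectured value exactly: if $W(K_{2m}) \ge f(m)$ then $W(K_{4m}) \ge f(2m)$. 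Writing $n = p\,2^q$ with $p$ odd, an induction on $q$ reduces the lower bound for all $n$ to the single odd base case $W(K_{2p}) \ge f(p)$. This base case is genuinely stronger than Theorem~\ref{tPetrosyan4npq} once $p \ge 5$ (for $q=0$ that theorem only yields $3p-2$, whereas $\lfloor\log_2 p\rfloor+\|p_2\| < p$ for $p \ge 5$), so it cannot be inherited from earlier work and must be built directly.

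\textbf{Lower bound (construction for odd $p$).} Here I would deploy the $1$-factorization technique advertised in the abstract. Fix a convenient $1$-factorization $F_1,\dots,F_{2p-1}$ of $K_{2p}$ (for instance the rotational one on $\mathbb{Z}_{2p-1}\cup\{\infty\}$) and, rather than coloring each $F_i$ with the single color $i$ as in the minimal $(2p-1)$-coloring, assign to an edge $e\in F_i$ a color $i+\sigma_i(e)$, where the integer shifts $\sigma_i(e)$ are chosen to pull the color intervals apart. Three conditions must hold simultaneously: (i) the coloring stays proper; (ii) at every vertex $v$ the set of received colors is an interval of length $2p-1$ (each vertex has degree $2p-1$, so its spectrum is forced to have exactly $2p-1$ elements); and (iii) the union of all spectra is $[1,t]$ with $t=f(p)$. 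The role of the binary expansion of $p$ is to govern how far the successive factors may be spread before the interval constraint (ii) is violated at the low- and high-numbered vertices; spreading them in a balanced, binary fashion is what I expect to realize the deficit $\lfloor\log_2 p\rfloor+\|p_2\|$ below the trivial span $4p-3$. Verifying (i)--(iii) is the technical core, checked by tracking, for each vertex, the sorted sequence of colors it meets as one runs through the factors.

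\textbf{Upper bound.} The standard structural facts give only $4n-3$: in any interval $t$-coloring every spectrum is an interval of length $2n-1$, and since $K_{2n}$ is complete every two spectra intersect (the edge joining two vertices carries a common color). The vertex realizing color $1$ has spectrum $[1,2n-1]$ and the vertex realizing color $t$ has spectrum $[t-2n+2,t]$; intersection forces $t-2n+2 \le 2n-1$, i.e.\ $t\le 4n-3$, which is even weaker than the $4n-4$ of Theorem~\ref{tGiaroKubale4n4}. To recover the conjectured savings $\lfloor\log_2 n\rfloor+\|n_2\|$ I would aim for a matching upper-bound recursion $W(K_{4m}) \le 4m-1+W(K_{2m})$: given an optimal coloring of $K_{4m}$, split the color range into a low and a high band and argue that the colorings induced on two vertex halves of size $2m$ are themselves interval colorings, with exactly the $4m-1$ colors of the connecting $K_{2m,2m}$ accounting for the drop. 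Combined with the doubling identity for $f$, this would again reduce everything to proving $W(K_{2p}) \le f(p)$ for odd $p$ directly, by a refined count of how many vertices may share each left endpoint $\underline{S}(v,\alpha)$.

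\textbf{Anticipated difficulty.} The doubling identity makes the lower bound and the reduction machinery routine once the odd base construction is in hand, so the entire weight of the argument rests on the upper bound. The pairwise-intersection bound is too coarse by a constant, and the clean band-splitting needed for the recursive upper bound presumes that an optimal coloring of $K_{4m}$ essentially respects a bipartition into two copies of $K_{2m}$. Complete graphs, however, admit heavily interleaved colorings in which the color intervals of the two halves overlap, and whether this bipartition is in fact forced is the crux: making the band-splitting rigorous, or replacing it by a direct counting argument sharp enough to reach $f(n)$ rather than $4n-3$, is the step on which the whole proof stands.
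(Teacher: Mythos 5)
There is a fatal problem here that no amount of technical work can repair: Conjecture~\ref{conjLog} is \emph{false}, and the paper's treatment of this statement is a disproof, not a proof. The smallest counterexample is $n=11$: the conjecture predicts $W(K_{22}) = 44-2-\lfloor\log_2 11\rfloor - \|11_2\| = 44-2-3-3=36$, but the paper exhibits an explicit interval $37$-coloring of $K_{22}$ (Fig.~\ref{fK22}), which together with the upper bound $W(K_{22}) \leq 4\cdot 11 - 7 = 37$ of Theorem~\ref{tUpper} gives $W(K_{22})=37$. A second counterexample comes from the composite recursion of Theorem~\ref{tComposite}, $W(K_{2mn}) \geq W(K_{2m})+W(K_{2n})+4(m-1)(n-1)-1$: with $m=3$, $n=5$ this yields $W(K_{30}) \geq 7+14+32-1 = 52$, whereas your target value is $f(15)=60-2-3-4=51$. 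Your own doubling identity $f(2m)=4m-1+f(m)$ is correct, and it explains why the conjecture survives the power-of-two recursion of Theorem~\ref{tPetrosyan4n}; but $f$ is \emph{not} compatible with the stronger recursion of Theorem~\ref{tComposite} when both factors are odd and greater than $1$, and this is exactly where the conjecture breaks. So the lower-bound half of your plan fails in an instructive way: the constructions do not merely reach $f(p)$ for odd $p$, they overshoot it at $p=11$, which means the base-case upper bound $W(K_{2p}) \le f(p)$ to which your band-splitting reduction would funnel everything is simply wrong.

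Your instincts were sound on two counts, which is worth noting. First, you correctly identified the upper bound as the point on which everything stands, and your worry that ``heavily interleaved colorings'' would defeat the band-splitting recursion $W(K_{4m}) \le 4m-1+W(K_{2m})$ is vindicated: the paper's upper-bound analysis (Lemmas~\ref{l2k1}--\ref{lEdgeCount}, via the shift vector) yields only $4n-7$ for $n \ge 9$, and the exact values show no bipartition structure is forced. Second, your proposed lower-bound device --- fixing a $1$-factorization and shifting the colors of whole factors apart subject to the interval constraint at each vertex --- is essentially the paper's central technique: Lemma~\ref{lEquiv} and Theorem~\ref{tEquiv} show that interval $t$-colorings of $K_{2n}$ correspond exactly to $1$-factorizations with $t-2n+1$ splitted perfect matchings, and the shift vector $(b_1,\ldots,b_{n-1})$ plays precisely the role of your $\sigma_i$'s. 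The difference is that the paper aims this machinery at \emph{exceeding} the conjectured value (Theorem~\ref{t35n3}, Lemma~\ref{lComposite}, and the coloring of Fig.~\ref{fK22}) rather than realizing it. Had you pushed your own construction at $p=11$ before committing to the upper bound, you would have found the extra color yourself: the binary-expansion heuristic for the deficit $\lfloor\log_2 n\rfloor + \|n_2\|$, while it matches all exact values for $n\le 10$ and $n=12$, is an artifact of small cases, not a structural law.
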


In Section 2 we show that the problem of constructing an interval coloring of a complete graph $K_{2n}$ is equivalent to finding a special 1-factorization of the same graph. In Section 3 we use this equivalence to improve the lower bounds of Theorems \ref{tPetrosyan3n2} and \ref{tPetrosyan4n}, and disprove Conjecture \ref{conjLog}. Section 4 improves the upper bound of Theorem \ref{tGiaroKubale4n4} for complete graphs. In Section 5 we determine the exact values of $W(K_{2n})$ for $n \leq 12$ and improve Theorem \ref{tPetrosyan4npq}.

\section{From interval colorings to 1-factorizations}

Let the vertex set of a complete graph $K_{2n}$ be $V(K_{2n}) = \{u_i, v_i\ |\ i=1,2,\ldots,n\}$. For any fixed ordering of the vertices $\mathbf{v} = \left(u_1,v_1, u_2,v_2, \ldots,u_n,v_n\right)$ we denote by $H_{\mathbf{v}}^{[i,j]}$, $i \leq j$, the subgraph of $K_{2n}$ induced by the vertices $u_i, v_i, u_{i+1}, v_{i+1}, \ldots, u_j, v_j$. 

Let $\mathfrak{F} = \{F_1, F_2, \ldots, F_{2n-1}\}$ be a $1$-factorization of $K_{2n}$. For every $F\in \mathfrak{F}$ we define its \textit{left and right parts} with respect to the ordering of vertices $\mathbf{v}$:
\begin{align*}
l_{\mathbf{v}}^i(F) &= F \cap E\left(H_{\mathbf{v}}^{[1,i]}\right)\\
r_{\mathbf{v}}^i(F) &= F \cap E\left(H_{\mathbf{v}}^{[i+1,n]}\right)
\end{align*}

If for some $i$, $1 \leq i \leq n-1$, $F = l_{\mathbf{v}}^i(F) \cup r_{\mathbf{v}}^i(F)$ then $F$ is called an \textit{$i$-splitted} perfect matching with respect to the ordering $\mathbf{v}$. In other words the edges of $F$ do not cross the vertical line between the $i$-th and $(i+1)$-th pairs of vertices ($F_1^1$ and $F_1^2$ on Fig. \ref{K_6factorization}). 

Let $\alpha$ be any interval edge-coloring of $K_{2n}$. By renaming the vertices we can achieve the following inequalities: 
\begin{center}
$\underline{S}(u_i, \alpha) \leq \underline{S}(v_i, \alpha) \leq \underline{S}(u_{i+1}, \alpha)\leq \underline{S}(v_{i+1}, \alpha)$, $i=1,2,\ldots,n-1$.
\end{center}
So every coloring $\alpha$ implies a special ordering of vertices $\mathbf{v}_\alpha = \left(u_1,v_1, u_2,v_2, \ldots,u_n,v_n\right)$ for which these inequalities are satisfied. 

Now we fix the ordering $\mathbf{v}_\alpha$ and investigate some properties of the coloring $\alpha$. First we show that the spectrums of the vertices $u_i$ and $v_i$ are the same.

\begin{remark}
\label{spectrumIntersection}
For every $\alpha$ interval edge-coloring of $K_{2n}$, $S_\cap\left(K_{2n}, \alpha\right) \neq \emptyset$. Otherwise it would contradict the upper bound in Theorem \ref{upperKamalian}.
\end{remark}

\begin{lemma}
If $1 \leq i \leq n$, then $\underline{S}(u_i, \alpha) = \underline{S}(v_i, \alpha)$.
\end{lemma}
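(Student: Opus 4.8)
The plan is to exploit two structural facts about $\alpha$: every spectrum is an interval of \emph{exactly} $2n-1$ consecutive colors (because $K_{2n}$ is $(2n-1)$-regular, so each vertex has degree $2n-1$), and for every color $c$ the edges colored $c$ form a matching. First I would record that for each vertex $v$ we have $S(v,\alpha)=[\underline{S}(v,\alpha),\,\underline{S}(v,\alpha)+2n-2]$, so $v$ contains the color $c$ in its spectrum if and only if $c-(2n-2)\le \underline{S}(v,\alpha)\le c$. Since the color class of $c$ is a matching, the number of vertices $v$ with $c\in S(v,\alpha)$ equals twice the number of edges colored $c$; hence it is \emph{even} for every integer $c$.

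Next I would convert this parity statement into one about the smallest colors. Writing $M(c)=|\{v : \underline{S}(v,\alpha)\le c\}|$, the count of vertices containing $c$ equals $M(c)-M(c-(2n-1))$, so $M(c)\equiv M(c-(2n-1))\pmod{2}$ for every $c$. For $1\le c\le 2n-1$ the shifted argument $c-(2n-1)$ is nonpositive, so $M(c-(2n-1))=0$ and therefore $M(c)$ is even; propagating the congruence upward in steps of $2n-1$ then forces $M(c)$ to be even for all $c\ge 1$. Consequently $d(c):=M(c)-M(c-1)$, the number of vertices whose smallest color is exactly $c$, is even for every $c$. In other words, each starting color is attained by an even number of vertices.

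Finally I would combine this with the chosen ordering. Listing the starting colors in nondecreasing order reproduces precisely $\underline{S}(u_1,\alpha)\le \underline{S}(v_1,\alpha)\le\cdots\le \underline{S}(u_n,\alpha)\le \underline{S}(v_n,\alpha)$, and by the previous step equal values occur in contiguous blocks of even length. A short induction on these blocks shows each one begins at an odd position and ends at an even position, so the two entries of each pair $(u_i,v_i)$ always lie in the same block; hence $\underline{S}(u_i,\alpha)=\underline{S}(v_i,\alpha)$, as claimed.

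The main obstacle is the even-coverage observation in the first paragraph together with its translation into the periodicity of $M$ modulo $2$; once $M(c)$ is known to be even everywhere, the pairwise equality in the last step is forced purely by the monotonicity of the ordering. I would also verify the boundary conditions $\min_v \underline{S}(v,\alpha)=1$ and $\max_v \underline{S}(v,\alpha)=t-(2n-2)$, which guarantee $d(c)=0$ outside the relevant range so that the upward propagation indeed reaches every color.
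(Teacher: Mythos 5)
Your proof is correct, but it takes a genuinely different route from the paper's. The paper argues locally, leaning on Remark \ref{spectrumIntersection} (all spectrums share a common color, a consequence of Kamalian's bound in Theorem \ref{upperKamalian}): if $\underline{S}(u_i,\alpha)<\underline{S}(v_i,\alpha)$, then the color $c=\underline{S}(u_i,\alpha)$ lies in the spectrum of every vertex up to $u_i$ in the ordering and of no vertex after it, so $C_c(\alpha)$ would be a perfect matching of $K_{2n}\left[\{u_1,v_1,\ldots,u_i\}\right]$, an induced subgraph on the odd number $2i-1$ of vertices --- a contradiction. You instead run a global parity count: each color class is a matching, so the number of vertices whose spectrum contains a given color $c$ is even; since every spectrum is an interval of exactly $2n-1$ colors, that number equals $M(c)-M(c-(2n-1))$, whence $M(c)\equiv M(c-(2n-1))\pmod 2$, the base case $M(c-(2n-1))=0$ for $1\le c\le 2n-1$ starts the induction, and so $M(c)$ is even for all $c\ge 1$; therefore each value of $\underline{S}(\cdot,\alpha)$ is attained by an even number of vertices, and your block argument on the sorted sequence (blocks of even length starting at position $1$ keep the positions $2i-1$ and $2i$, i.e.\ $u_i$ and $v_i$, in the same block) finishes the proof. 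All of these steps are sound, including the translation between containment of $c$ and the window $c-(2n-2)\le\underline{S}(v,\alpha)\le c$; the boundary conditions you flag at the end are not even needed, since $M(c)=0$ for $c\le 0$ holds automatically. What your route buys: it is self-contained --- no appeal to Theorem \ref{upperKamalian} or Remark \ref{spectrumIntersection} --- and it proves the slightly stronger structural fact that every starting color occurs with even multiplicity. What the paper's route buys: brevity, since once Remark \ref{spectrumIntersection} is available the whole proof is a one-line parity observation on a single color class.
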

\begin{proof}
Remark \ref{spectrumIntersection} implies that if $\underline{S}(v_i, \alpha) - \underline{S}(u_{i}, \alpha) > 0$, then the edges colored by $\underline{S}(u_{i}, \alpha)$ form a perfect matching in the subgraph $K_{2n}\left[\{u_1,v_1,u_2,v_2,\ldots,u_i\}\right]$, which is impossible, as it has odd number of vertices.
\end{proof}

For the coloring $\alpha$ we define its \textit{shift vector} in the following way:
\begin{center}
${\rm sh}(\alpha) = (b_1, b_2, \ldots, b_{n-1})$ \\
where $b_i = \underline{S}(u_{i+1}, \alpha) - \underline{S}(u_i, \alpha)$, $i=1,2,\ldots,n-1$
\end{center}

By $B_i$ we denote the partial sums: $B_0 = 0$ and $B_i = \sum\limits_{j=1}^{i}b_j$, $i=1,2,\ldots,n-1$.

The \textit{total shift} of the coloring $\alpha$ is defined as follows:

\begin{center}
$|{\rm sh}(\alpha)| = B_{n-1} = \sum\limits_{i=1}^{n-1}b_i$
\end{center}

\begin{remark}
\label{totalShift}
If $\alpha$ is an interval $t$-coloring of $K_{2n}$ and ${\rm sh}(\alpha) = (b_1, b_2, \ldots, b_{n-1})$, then $t = 2n-1 + |{\rm sh}(\alpha)|$.
\end{remark}

\begin{remark}
\label{middleColors}
For every $\alpha$ interval edge-coloring of $K_{2n}$, the colors that appear in all vertices are $S_\cap\left(K_{2n}, \alpha\right) = \left[\underline{S}(u_n, \alpha), \overline{S}(u_1, \alpha)\right] = \left[|{\rm sh}(\alpha)|+1, 2n-1\right] = \left\{|{\rm sh}(\alpha)|+j \ |\ j=1,2,\ldots,2n-1-|{\rm sh}(\alpha)|\right\}$
\end{remark}

For every $i=1,2,\ldots,n-1$, we define the following two sets of colors:
\begin{align*}
L_{\mathbf{v}_\alpha}^i(\alpha) &= \left\{
\begin{tabular}{ll}
$[\underline{S}(u_{i}, \alpha), \underline{S}(u_{i+1}, \alpha) - 1] 
=\{ B_{i-1} + j \ |\ j=1,2,\ldots,b_i \}$, & if $b_i>0$,\\
$\emptyset$, & if $b_i=0$,\\
\end{tabular}%
\right.\\
R_{\mathbf{v}_\alpha}^i(\alpha) &= \left\{
\begin{tabular}{ll}
$[\overline{S}(u_{i}, \alpha) + 1, \overline{S}(u_{i+1}, \alpha)] 
=\{ B_{i-1} + 2n-1 + j \ |\ j=1,2,\ldots,b_i \}$, & if $b_i>0$,\\
$\emptyset$, & if $b_i=0$.\\
\end{tabular}%
\right.\\
\end{align*}

\begin{remark}
\label{splittedColors}
If $\alpha$ is an interval $t$-coloring of $K_{2n}$ and ${\rm sh}(\alpha) = (b_1, b_2, \ldots, b_{n-1})$, then
\begin{center}
\begin{tabular}{ll}
$L_{\mathbf{v}_\alpha}^i(\alpha) \subset S_\cap\left(H_{\mathbf{v}_\alpha}^{[1,i]},\alpha\right)$
&$L_{\mathbf{v}_\alpha}^i(\alpha) \cap S_\cup\left(H_{\mathbf{v}_\alpha}^{[i+1,n]},\alpha\right) = \emptyset$\\
$R_{\mathbf{v}_\alpha}^i(\alpha) \cap S_\cup\left(H_{\mathbf{v}_\alpha}^{[1,i]},\alpha\right) = \emptyset$ 
&$R_{\mathbf{v}_\alpha}^i(\alpha) \subset S_\cap\left(H_{\mathbf{v}_\alpha}^{[i+1,n]},\alpha\right)$
\end{tabular}
\end{center}
\end{remark}

By $C_k(\alpha)$ we denote the edges colored by the color $k$: $C_k(\alpha) = \{e \in E(K_{2n})\ |\ \alpha(e)=k\}$.

\begin{figure}[t!]
\centering
\includegraphics[width=0.7\textwidth]{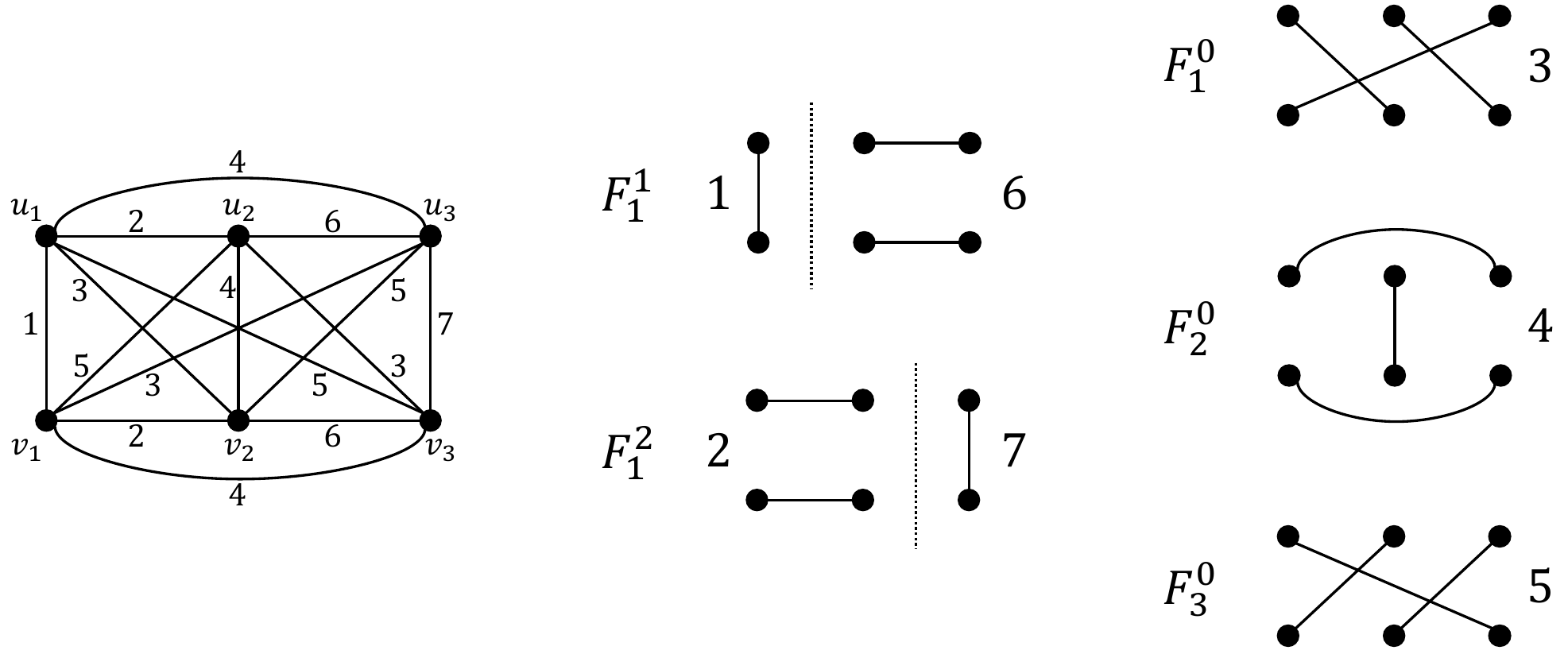}
\caption{Interval $7$-coloring of $K_6$ and the corresponding 1-factorization $\mathfrak{F}=\left\{F_1^1, F_1^2, F_1^0, F_2^0, F_3^0\right\}$}
\label{K_6factorization}
\end{figure}

\begin{lemma}[Equivalence lemma]\label{lEquiv}
The following two statements are equivalent:
\begin{description}
\item{(a)} there exists $\alpha$ interval edge-coloring of $K_{2n}$  such that ${\rm sh}(\alpha) = (b_1, b_2, \ldots b_{n-1})$,
\item{(b)} there exist $\mathbf{v}$ ordering of vertices and $\mathfrak{F} = 
\left\{ F^0_j\ |\ j=1,2,\ldots,2n-1-\sum\limits_{i=1}^{n-1}b_i \right\}
\cup
\bigcup\limits_{i=1}^{n-1}\left\{F^i_j\ |\ j=1,2,\ldots,b_i\right\}$ 1-factorization of $K_{2n}$ such that $F^i_j$ is $i$-splitted with respect to the ordering $\mathbf{v}$, $i=1,2,\ldots,n-1$, $j=1,2,\ldots,b_i$, $b_i \in \mathbb{Z}_+$.
\end{description}
\end{lemma}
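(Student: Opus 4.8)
The plan is to prove the two implications separately. For a fixed interval coloring, Remarks \ref{middleColors} and \ref{splittedColors} already record precisely which vertices see each color, so the direction (a)$\Rightarrow$(b) reduces to bookkeeping, whereas (b)$\Rightarrow$(a) requires building a coloring from the factorization and checking that every spectrum is an interval.

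For (a)$\Rightarrow$(b), I would fix the ordering $\mathbf{v}=\mathbf{v}_\alpha$. Since $\alpha$ is proper, every color class $C_k(\alpha)$ is a matching, and by Remark \ref{totalShift} the colors run over $[1,2n-1+\sum_{i=1}^{n-1}b_i]$. I would split these colors into three groups. The middle colors $[\,|{\rm sh}(\alpha)|+1,2n-1\,]$ appear at every vertex (Remark \ref{middleColors}), so each such class is a perfect matching of $K_{2n}$; take these as the factors $F^0_j$. By Remark \ref{splittedColors}, a color in $L_{\mathbf{v}_\alpha}^i(\alpha)$ lies in $S_\cap(H_{\mathbf{v}_\alpha}^{[1,i]},\alpha)$ and is absent from $S_\cup(H_{\mathbf{v}_\alpha}^{[i+1,n]},\alpha)$, so its class is a perfect matching of $H_{\mathbf{v}_\alpha}^{[1,i]}$; symmetrically, a color in $R_{\mathbf{v}_\alpha}^i(\alpha)$ gives a perfect matching of $H_{\mathbf{v}_\alpha}^{[i+1,n]}$. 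Pairing the $j$-th color $B_{i-1}+j$ of $L_{\mathbf{v}_\alpha}^i(\alpha)$ with the $j$-th color $B_{i-1}+2n-1+j$ of $R_{\mathbf{v}_\alpha}^i(\alpha)$ and taking the union of their classes yields a perfect matching of $K_{2n}$ whose edges all stay inside $H_{\mathbf{v}_\alpha}^{[1,i]}$ or inside $H_{\mathbf{v}_\alpha}^{[i+1,n]}$, i.e. an $i$-splitted factor $F^i_j$. The three groups partition the color set and $\alpha$ gives each edge a unique color, so each edge lies in exactly one factor; this is the desired 1-factorization.

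For (b)$\Rightarrow$(a), I would reverse the recipe. Because $F^i_j$ is an $i$-splitted perfect matching, its left part $l_{\mathbf{v}}^i(F^i_j)$ perfectly matches $H_{\mathbf{v}}^{[1,i]}$ and its right part $r_{\mathbf{v}}^i(F^i_j)$ perfectly matches $H_{\mathbf{v}}^{[i+1,n]}$. Define $\alpha$ by assigning color $B_{i-1}+j$ to $l_{\mathbf{v}}^i(F^i_j)$, color $B_{i-1}+2n-1+j$ to $r_{\mathbf{v}}^i(F^i_j)$, and color $B_{n-1}+j$ to $F^0_j$. Each color is used on part of a single perfect matching, so each color class is a matching and $\alpha$ is proper; a direct count shows the colors are exactly $1,2,\ldots,2n-1+\sum_{i=1}^{n-1}b_i$.

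The decisive step is to verify that $\alpha$ is an interval coloring with the prescribed shift vector, by computing the spectrum of an arbitrary vertex $x$ in the $k$-th pair. Every $F^0_j$ contributes $B_{n-1}+j$, so $x$ sees all middle colors $[B_{n-1}+1,2n-1]$. For each $i\ge k$ the vertex $x$ lies in $H_{\mathbf{v}}^{[1,i]}$, hence is met by the left part of $F^i_j$ and sees $B_{i-1}+j$; for each $i<k$ it lies in $H_{\mathbf{v}}^{[i+1,n]}$ and sees $B_{i-1}+2n-1+j$. The left contributions telescope to $[B_{k-1}+1,B_{n-1}]$ and the right ones to $[2n,B_{k-1}+2n-1]$, so the whole spectrum collapses to $[B_{k-1}+1,B_{k-1}+2n-1]$, an interval of $2n-1$ consecutive integers that does not depend on whether $x=u_k$ or $x=v_k$. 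Thus $\alpha$ is an interval coloring with $\underline{S}(u_k,\alpha)=B_{k-1}+1$, so its shift is $\underline{S}(u_{k+1},\alpha)-\underline{S}(u_k,\alpha)=B_k-B_{k-1}=b_k$ and ${\rm sh}(\alpha)=(b_1,\ldots,b_{n-1})$. I expect this telescoping computation to be the only real obstacle; the (a)$\Rightarrow$(b) direction follows almost immediately once Remarks \ref{middleColors} and \ref{splittedColors} are invoked.
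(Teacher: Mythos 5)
Your proposal is correct and follows essentially the same route as the paper: the same assignment $F^i_j = C_{B_{i-1}+j}(\alpha)\cup C_{B_{i-1}+2n-1+j}(\alpha)$ for (a)$\Rightarrow$(b) via Remarks \ref{middleColors} and \ref{splittedColors}, and the same coloring rule and spectrum computation yielding $S(u_k,\alpha)=[B_{k-1}+1,B_{k-1}+2n-1]$ for (b)$\Rightarrow$(a). The only presentational difference is that you state explicitly that the left/right parts of an $i$-splitted factor are perfect matchings of $H_{\mathbf{v}}^{[1,i]}$ and $H_{\mathbf{v}}^{[i+1,n]}$, which the paper leaves implicit.
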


\begin{proof}
Throughout the proof we will use $B_i$ as a shorthand for $\sum\limits_{j=1}^{i}b_j$, $i=0,1,\ldots,n-1$.

\begin{description}
\item{(a) $=>$ (b)}. Let $\alpha$ be an interval $t$-coloring of $K_{2n}$ such that ${\rm sh}(\alpha) = (b_1, b_2, \ldots b_{n-1})$. We choose the ordering $\mathbf{v}_\alpha$ and construct the 1-factorization $\mathfrak{F}$ of $K_{2n}$. 

According to Remark \ref{middleColors}, there exist $2n-1-|{\rm sh}(\alpha)|$ colors that appear in the spectrums of all the vertices. By definition, $|{\rm sh}(\alpha)| = \sum\limits_{i=1}^{n-1}b_i$, so we take $F_j^0 = C_{|{\rm sh}(\alpha)|+j}(\alpha)$, for every $j=1,2,\ldots,2n-1-|{\rm sh}(\alpha)|$.

For every $i=1,2,\ldots,n-1$, Remark \ref{splittedColors} implies there exist $|L_{\mathbf{v}_\alpha}^i(\alpha)|=b_i$ distinct colors that appear only in the spectrums of the first $i$ pairs of vertices and another $|R_{\mathbf{v}_\alpha}^i(\alpha)|=b_i$ distinct colors that appear only in the spectrums of the remaining $2n-2i$ vertices. We take $F^i_j = C_{B_{i-1} + j}(\alpha) \cup C_{B_{i-1} + 2n - 1 + j}(\alpha)$, for every $i=1,2,\ldots,n-1$ and $j=1,2,\ldots,b_i$. Note that the edges colored by the colors from $L_{\mathbf{v}_\alpha}^i(\alpha) \cup R_{\mathbf{v}_\alpha}^i(\alpha)$ do not cross the vertical line between the $i$-th and $(i+1)$-th pairs of vertices ($F_1^1$ and $F_1^2$ on Fig. \ref{K_6factorization}), so $F^i_j$ is $i$-splitted with respect to the ordering $\mathbf{v}_\alpha$ for all permitted $j$.

\item{(b) $=>$ (a)}. Suppose $\mathfrak{F} = 
\left\{ F^0_j\ |\ j=1,2,\ldots,2n-1-|{\rm sh}(\alpha)| \right\}
\cup
\bigcup\limits_{i=1}^{n-1}\left\{F^i_j\ |\ j=1,2,\ldots,b_i\right\}$ is a 1-factorization of $K_{2n}$ with the property that $F_j^i$ is $i$-splitted perfect matching with respect to the ordering $\mathbf{v}=\left(u_1,v_1, u_2,v_2, \ldots,u_n,v_n\right)$, $i=1,2,\ldots,n-1$, $j=1,2,\ldots,b_i$. We construct $\alpha$ interval edge-coloring of $K_{2n}$ in the following way:

\begin{tabular}{lll}
$\alpha(e)=B_{i-1} + j$ & if $e \in l_{\mathbf{v}}^i(F_j^i)$ & $i=1,2,\ldots,n-1$, $j=1,2,\ldots,b_i$ \\
$\alpha(e)=B_{n-1} + j$ & if $e \in F_j^0$ & $j=1,2,\ldots,2n-1-B_{n-1}$\\
$\alpha(e)=B_{i-1} + 2n - 1 + j$ & if $e \in r_{\mathbf{v}}^i(F_j^i)$ & $i=1,2,\ldots,n-1$, $j=1,2,\ldots,b_i$
\end{tabular}

The fact that $F^i_j$ is $i$-splitted with respect to the ordering $\mathbf{v}$ implies that every edge of $K_{2n}$ have received a color. The vertex $u_i$ (also $v_i$) is covered by all perfect matchings $F_j^0$, $j=1,2,\ldots,2n-1-B_{n-1}$, by the left parts of the matchings $F_j^{i'}$, $i'=i,i+1,\ldots,n-1$, and by the right parts of the matchings $F_j^{i'}$, $i'=1,2,\ldots,i-1$, for every $j=1,2,\ldots,b_{i'}$. So the spectrum is:
\begin{align*}
S(u_i, \alpha) = S(v_i, \alpha) &= \bigcup\limits_{i'=i}^{n-1}\{B_{i'-1} + j \ |\ j=1,2,\ldots,b_{i'}\} \\
& \cup
\{B_{n-1} + j\ |\ j=1,2,\ldots,2n-1-B_{n-1}\} \\
& \cup
\bigcup\limits_{i'=1}^{i-1}\{B_{i'-1} + 2n-1 + j \ |\ j=1,2,\ldots,b_{i'}\}\\
&= [B_{i-1}+1, B_{n-1}] \cup [B_{n-1}+1, 2n-1] \cup [2n, B_{i-1}+2n-1]\\
&= [B_{i-1}+1, B_{i-1}+2n-1]
\end{align*}

This proves that $\alpha$ is an interval $(B_{n-1} + 2n-1)$-coloring of $K_{2n}$. To complete the proof of the lemma we need to check the shift vector of the coloring $\alpha$. Note that for every $i=1,2,\ldots,n-1$, we have $\underline{S}(u_{i+1}, \alpha) - \underline{S}(u_{i}, \alpha) = B_{i}-B_{i-1} = b_i$. This shows that the ordering $\mathbf{v}_\alpha$ coincides with the ordering $\mathbf{v}$ and ${\rm sh}(\alpha) = (b_1, b_2, \ldots, b_{n-1})$.
\end{description}

\end{proof}

\begin{remark}\label{splittedSameColor}
Some of the matchings $F_j^0$ constructed in the first part of the proof of Equivalence lemma may be splitted perfect matchings as well, but for each of them both their left and right parts have the same color in the coloring $\alpha$. For example, in case $|{\rm sh}(\alpha)|=0$, $F^0_{\alpha(u_1v_1)} = C_{\alpha(u_1v_1)}(\alpha)$ is $1$-splitted perfect matching with respect to the ordering $\mathbf{v}_\alpha$.
\end{remark}

\begin{corollary}\label{cEquiv}
For any $n\in\mathbb{N}$, $K_{2n}$ has an interval $t$-coloring if and only if it has a 1-factorization, where at least $t-2n+1$ perfect matchings are splitted.
\end{corollary}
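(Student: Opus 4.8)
The plan is to read the statement off the Equivalence lemma (Lemma~\ref{lEquiv}) together with the shift count of Remark~\ref{totalShift}, since together they translate the numerical identity $t = 2n-1+|{\rm sh}(\alpha)|$ into the language of splitted matchings. Throughout I set $s := t-2n+1$, and I note that the claim is only substantive for $t \ge w(K_{2n}) = 2n-1$, i.e.\ $s \ge 0$; for smaller $t$ no interval $t$-coloring exists and the construction below carries no content, so I would read the corollary in that range.

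For the ``only if'' direction I would start from an interval $t$-coloring $\alpha$. Remark~\ref{totalShift} gives $|{\rm sh}(\alpha)| = s$, and Lemma~\ref{lEquiv}, (a)$\Rightarrow$(b), produces an ordering $\mathbf{v}_\alpha$ and a $1$-factorization in which, for each $i=1,\dots,n-1$, the $b_i$ matchings $F^i_j$ are $i$-splitted with respect to $\mathbf{v}_\alpha$. Hence these $\sum_{i=1}^{n-1} b_i = |{\rm sh}(\alpha)| = s$ matchings are all splitted, which already yields at least $s$ splitted matchings. The word ``at least'' is exactly right here because, as Remark~\ref{splittedSameColor} observes, some of the remaining matchings $F^0_j$ may happen to be splitted as well.

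For the converse I would take a $1$-factorization $\mathfrak{F}$ with an ordering $\mathbf{v}$ relative to which at least $s$ of its matchings are splitted, and reorganize $\mathfrak{F}$ into the rigid shape demanded by statement~(b). Fixing any $s$ of the splitted matchings (possible since $s\ge 0$ and there are at least $s$ of them, so in particular $s\le 2n-1$), I would assign to each chosen matching $F$ a single level $i(F)\in\{1,\dots,n-1\}$ for which $F$ is $i(F)$-splitted (such a level exists by the definition of ``splitted''), group the chosen matchings by this level to set $b_i := |\{F : i(F)=i\}|$, and let the remaining $2n-1-s$ matchings play the role of the $F^0_j$. By construction every group-$i$ matching is $i$-splitted, so this is precisely an instance of type~(b) with $\sum_{i=1}^{n-1} b_i = s$. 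Lemma~\ref{lEquiv}, (b)$\Rightarrow$(a), then delivers an interval coloring $\alpha$ with $|{\rm sh}(\alpha)| = s$, and Remark~\ref{totalShift} identifies it as an interval $(2n-1+s)$-coloring, i.e.\ an interval $t$-coloring.

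The one delicate point I expect is this reorganization step in the converse: a matching handed to us merely as ``splitted'' carries no preferred level and may be $i$-splitted for several $i$ simultaneously, so I must commit to one level per matching and verify that the resulting counts $b_i$ really feed a legitimate structure~(b). This turns out to be harmless once I observe that (b) imposes no condition whatsoever on the $F^0_j$ and only requires each $F^i_j$ to be $i$-splitted, so any admissible choice of $i(F)$ works; correspondingly, the ``at least'' in the hypothesis is consumed exactly by the freedom to dump the surplus splitted matchings into the $F^0$ block while keeping the total shift equal to $s$, which is what pins the number of colors to $t$ rather than to something larger.
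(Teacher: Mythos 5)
Your proposal is correct and follows essentially the same route as the paper's own proof: the forward direction via Remark~\ref{totalShift}, Lemma~\ref{lEquiv} (a)$\Rightarrow$(b) and Remark~\ref{splittedSameColor}, and the converse by arbitrarily selecting exactly $t-2n+1$ splitted matchings, committing to one level $i$ per matching, and feeding the result to Lemma~\ref{lEquiv} (b)$\Rightarrow$(a). The ``delicate point'' you flag (a matching may be $i$-splitted for several $i$) is exactly the remark the paper itself makes, and your treatment of it matches theirs.
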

\begin{proof}
Construction of the desired 1-factorization from the interval $t$-coloring immediately follows from Remark \ref{totalShift} and Equivalence lemma.  Remark \ref{splittedSameColor} implies that the number of the splitted perfect matchings in the obtained 1-factorization can be more than $t-2n+1$.

If we have a 1-factorization of $K_{2n}$ with at least $t-2n+1$ splitted perfect matchings we can arbitrarily choose exactly $t-2n+1$ of them, then for each of them choose the $i$ for which it is $i$-splitted (the same perfect matching can be both $i$-splitted and $i'$-splitted for distinct $i$ and $i'$, the choice is again arbitrary) and apply Equivalence lemma. So, the corresponding coloring may not be uniquely determined.
\end{proof}

This corollary shows that finding an interval edge-coloring of $K_{2n}$ with many colors is equivalent to finding a 1-factorization with many splitted perfect matchings with respect to some ordering of vertices. For the ordering $\mathbf{v}$ we can define the maximum number of splitted perfect matchings over all 1-factorizations of $K_{2n}$. Because of the symmetry of complete graph this number does not actually depend on the chosen ordering $\mathbf{v}$, so we denote it by $\sigma_n$.

\begin{theorem}[Equivalence theorem]\label{tEquiv}
For every $n\in \mathbb{N}$, $W(K_{2n}) = 2n-1+\sigma_n$. 
\end{theorem}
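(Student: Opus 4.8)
The plan is to derive the statement almost immediately from Corollary \ref{cEquiv}, which already packages the substantive combinatorial content. The only preliminary point to settle is that $\sigma_n$ is well-defined. By the symmetry of $K_{2n}$, for any two orderings $\mathbf{v}$ and $\mathbf{v}'$ of the vertices there is an automorphism of $K_{2n}$ carrying one ordering to the other; this automorphism sends 1-factorizations to 1-factorizations and splitted perfect matchings (with respect to $\mathbf{v}$) to splitted perfect matchings (with respect to $\mathbf{v}'$). Hence the maximum number of splitted matchings over all 1-factorizations does not depend on the chosen ordering, and $\sigma_n$ is a genuine invariant of $K_{2n}$, as claimed in the remark preceding the theorem.

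With that in place, I would prove the two inequalities separately. For the upper bound, set $t = W(K_{2n})$. Since $K_{2n}$ admits an interval $t$-coloring, Corollary \ref{cEquiv} yields a 1-factorization of $K_{2n}$ in which at least $t-2n+1$ perfect matchings are splitted with respect to some ordering. By the definition of $\sigma_n$ as the largest possible number of splitted matchings, this forces $t-2n+1 \leq \sigma_n$, that is, $W(K_{2n}) \leq 2n-1+\sigma_n$.

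For the lower bound, choose a 1-factorization realizing the maximum, i.e. one with exactly $\sigma_n$ splitted perfect matchings with respect to some fixed ordering $\mathbf{v}$. Applying the converse direction of Corollary \ref{cEquiv} with $t = 2n-1+\sigma_n$ (so that the required number $t-2n+1 = \sigma_n$ of splitted matchings is available), we obtain an interval $(2n-1+\sigma_n)$-coloring of $K_{2n}$. Therefore $W(K_{2n}) \geq 2n-1+\sigma_n$, and combining the two bounds gives the claimed equality.

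There is essentially no obstacle remaining at this stage, since the hard work—translating colorings into 1-factorizations with many splitted matchings and back—was already carried out in the Equivalence lemma and Corollary \ref{cEquiv}. The only points demanding minor care are, first, that $W(K_{2n})$ is genuinely attained, so that invoking the corollary at $t = W(K_{2n})$ is legitimate; this holds because $K_{2n} \in \mathfrak{N}$ and $W$ is defined as a greatest attained value. Second, one must be able to choose the ordering used to count splitted matchings freely, which is exactly the symmetry observation recorded above.
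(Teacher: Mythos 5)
Your proof is correct and matches the paper's approach exactly: the paper states the Equivalence theorem as an immediate consequence of Corollary \ref{cEquiv} together with the definition of $\sigma_n$ (whose independence from the ordering is justified, as you do, by the symmetry of $K_{2n}$), and your two-inequality argument simply writes out that implicit deduction. The attention you give to the attainability of $W(K_{2n})$ and to the well-definedness of $\sigma_n$ is consistent with, and slightly more explicit than, the paper's own presentation.
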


\section{Lower bounds}

In order to obtain new lower bounds on $W(K_{2n})$ we split $K_{2n}$ into two edge-disjoint spanning regular subgraphs, find convenient 1-factorizations for each of them, and then apply Equivalence theorem for the union of these 1-factorizations.

We fix the ordering of vertices of $K_{2n}$, $\mathbf{v} = \left(u_1,v_1, u_2,v_2, \ldots,u_n,v_n\right)$, and define two spanning regular subgraphs of $K_{2n}$, $K_2 \square K_n$ and $K_2 \times K_n$ (Fig. \ref{K_8products}):
\begin{align*}
&V(K_2 \square K_n) = V(K_2 \times K_n) = V(K_{2n})\\
&E(K_2 \square K_n) = \{u_iu_j\ |\ 1\leq i<j \leq n\} \cup \{u_iv_i\ |\ 1\leq i \leq n\} \cup \{v_iv_j\ |\ 1\leq i<j \leq n\}\\
&E(K_2 \times K_n) = \{u_iv_j\ |\ 1\leq i \neq j \leq n\}
\end{align*}

\begin{figure}[t!]
\centering
\includegraphics[width=0.43\textwidth]{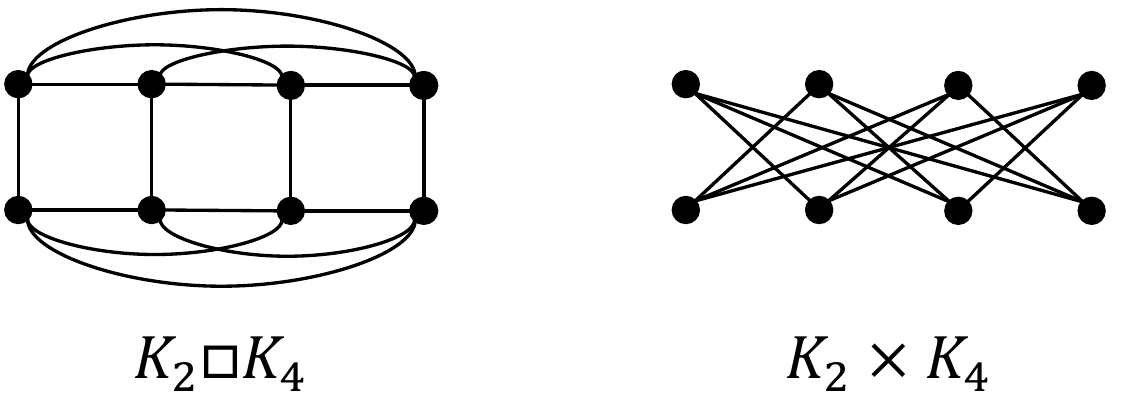}
\caption{Two spanning regular subgraphs of $K_8$}
\label{K_8products}
\end{figure}

Note that $E(K_{2n})=E(K_2 \square K_n)\cup E(K_2 \times K_n)$. We fix an ordering of vertices $\mathbf{v} = \left(u_1,v_1, u_2,v_2, \ldots,u_n,v_n\right)$ and define a special 1-factorization of $K_2 \square K_n$ which we denote by $\mathfrak{P}_n$:

$\mathfrak{P}_n = \{P_0, P_1, \ldots, P_{n-1}\}$, where 
\begin{align*}
P_0 &= \left\{
\begin{tabular}{ll}
$\{u_ju_{n+1-j}, v_jv_{n+1-j}\ |\ j=1,2,\ldots,\frac{n}{2}\}$ 
& if $n$ is even\\
$\{u_ju_{n+1-j}, v_jv_{n+1-j}\ |\ j=1,2,\ldots,\lfloor\frac{n}{2}\rfloor\}
\cup \{u_{\frac{n+1}{2}}v_{\frac{n+1}{2}}\}$, & if $n$ is odd\\
\end{tabular}%
\right.
\end{align*}

For every $i=1,2,\ldots,n-1$, $P_i = l_{\mathbf{v}}^i(P_i) \cup r_{\mathbf{v}}^i(P_i)$, where
\begin{align*}
l_{\mathbf{v}}^i(P_i) &= \left\{
\begin{tabular}{ll}
$\{u_ju_{i+1-j}, v_jv_{i+1-j}\ |\ j=1,2,\ldots,\frac{i}{2}\}$ 
& if $i$ is even\\
$\{u_ju_{i+1-j}, v_jv_{i+1-j}\ |\ j=1,2,\ldots,\lfloor\frac{i}{2}\rfloor\}
\cup \{u_{\frac{i+1}{2}}v_{\frac{i+1}{2}}\}$, & if $i$ is odd\\
\end{tabular}%
\right.\\
r_{\mathbf{v}}^i(P_i) &= \left\{
\begin{tabular}{ll}
$\{u_{i+j}u_{n+1-j}, v_{i+j}v_{n+1-j}\ |\ j=1,2,\ldots,\frac{n-i}{2}\}$ 
& if $n-i$ is even\\
$\{u_{i+j}u_{n+1-j}, v_{i+j}v_{n+1-j}\ |\ j=1,2,\ldots,\lfloor\frac{n-i}{2}\rfloor\}
\cup \{u_{\frac{n+i+1}{2}}v_{\frac{n+i+1}{2}}\}$, & if $n-i$ is odd\\
\end{tabular}%
\right.
\end{align*}

\begin{figure}[t!]
\centering
\includegraphics[width=0.7\textwidth]{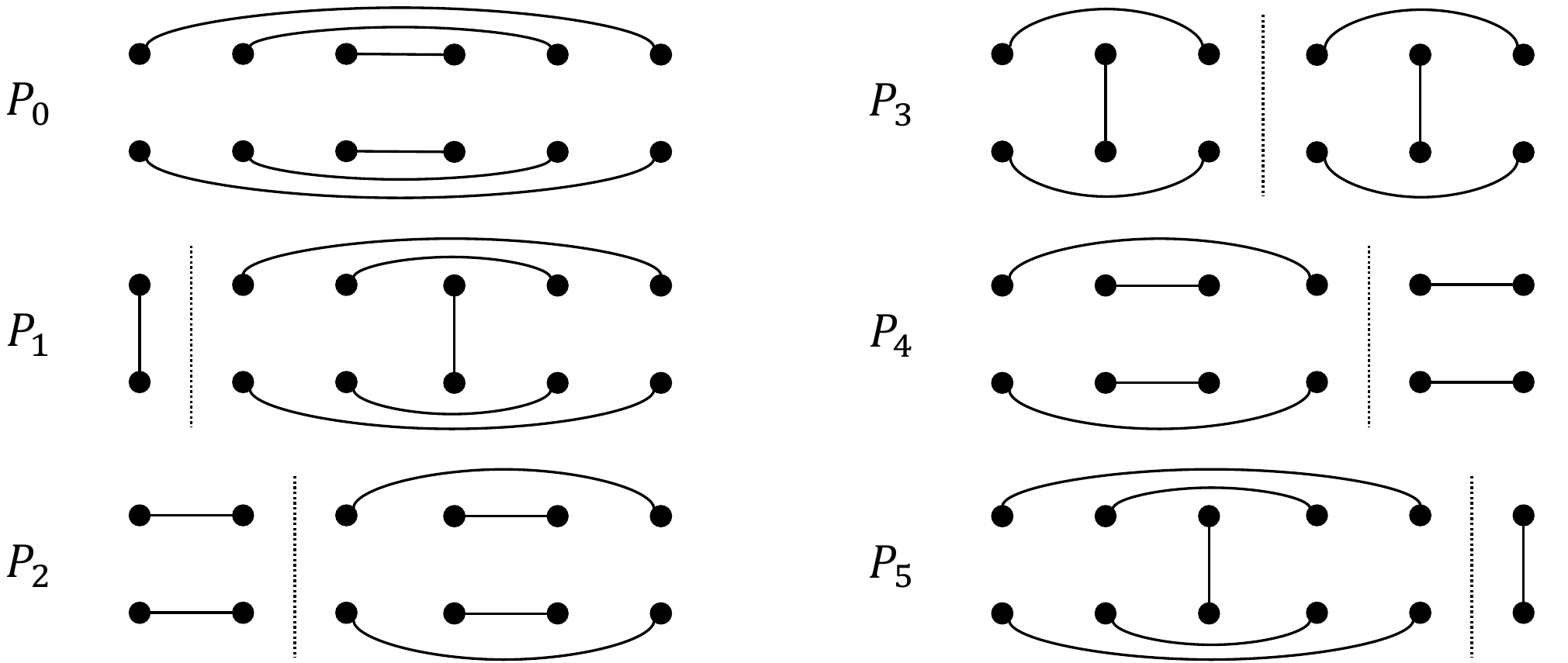}
\caption{1-factorization $\mathfrak{P}_6$ of $K_2 \square K_6$}
\label{P_6}
\end{figure}

$P_i$ is clearly an $i$-splitted perfect matching, for every $i=1,2,\ldots,n-1$. Note, that $K_2 \times K_n$ is a regular bipartite graph, so König's theorem \cite{Konig1916} implies it has a 1-factorization. If we consider the perfect matchings of any 1-factorization of $K_2 \times K_n$ as non-splitted matchings and add the perfect matchings of $\mathfrak{P}_n$ we obtain that $\sigma_n \geq n-1$. Equivalence theorem implies that this result is equivalent to Theorem \ref{tPetrosyan3n2}. 

In order to improve this bound we concentrate on finding a better 1-factorization of $K_2 \times K_n$.

\begin{lemma}\label{l35n3}
If $n \geq 2$, then $\sigma_n \geq \lfloor 1.5n \rfloor - 2$.
\end{lemma}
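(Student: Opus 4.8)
The plan is to build a single 1-factorization of $K_{2n}$ with at least $\lfloor 1.5n\rfloor-2$ splitted perfect matchings with respect to the fixed ordering $\mathbf{v}$, which by the definition of $\sigma_n$ yields the bound at once (and, via Theorem \ref{tEquiv}, the corresponding estimate on $W(K_{2n})$). Since $E(K_{2n})=E(K_2\square K_n)\cup E(K_2\times K_n)$ is an edge-disjoint union, I would keep the matchings $P_1,\dots,P_{n-1}$ of $\mathfrak{P}_n$, which are $i$-splitted for $i=1,\dots,n-1$ and already supply $n-1$ splitted matchings covering $K_2\square K_n$, and adjoin a carefully chosen 1-factorization of $K_2\times K_n$ contributing $\lfloor n/2\rfloor-1$ further splitted matchings. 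As the two edge sets are disjoint, these splitted matchings are all distinct, so the total is at least $(n-1)+(\lfloor n/2\rfloor-1)=\lfloor 1.5n\rfloor-2$, exactly the claim. Everything thus reduces to producing $\lfloor n/2\rfloor-1$ splitted matchings inside $K_2\times K_n$, the bipartite graph obtained from $K_{n,n}$ by deleting the matching $\{u_jv_j\}$.

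For this core step, take $n$ even first and use ``nested double reflections''. For each even $i$ with $2\le i\le n-2$, let $A_i$ be the matching reversing the first $i$ pairs and, independently, the last $n-i$ pairs:
\begin{equation*}
A_i=\{u_jv_{\,i+1-j}\ :\ 1\le j\le i\}\cup\{u_jv_{\,n+i+1-j}\ :\ i+1\le j\le n\}.
\end{equation*}
Because $i$ and $n-i$ are even, neither reflection has a central fixed point, so $A_i$ contains no forbidden edge $u_jv_j$ and is a genuine perfect matching of $K_2\times K_n$; by construction no edge of $A_i$ crosses the line after the $i$-th pair, so $A_i$ is $i$-splitted. The $\lfloor n/2\rfloor-1$ matchings $A_2,A_4,\dots,A_{n-2}$ are pairwise edge-disjoint, since writing an edge as $u_jv_k$ the only sums $j+k$ it can realize inside $A_i$ are $i+1$ and $n+i+1$, and all these values are distinct across $i=2,4,\dots,n-2$. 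Deleting every $A_i$ from $K_2\times K_n$ leaves an $(n/2)$-regular bipartite graph, which by König's theorem \cite{Konig1916} decomposes into $n/2$ further perfect matchings, counted as non-splitted. Together with $A_2,\dots,A_{n-2}$ this is a 1-factorization of $K_2\times K_n$ with $n/2-1=\lfloor n/2\rfloor-1$ splitted matchings, as required.

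For odd $n$ the same idea applies at the even split positions $i=2,4,\dots,n-3$, but now the right block $\{i+1,\dots,n\}$ has odd size, so its reflection fixes the central index and would force the forbidden edge $u_cv_c$. I expect this parity issue to be the main obstacle. I would repair each such matching locally by rerouting the three central vertices of the right block through a short derangement (a $3$-cycle in place of the fixed point), leaving the two reflection ``antidiagonals'' otherwise intact, and then verify that the few perturbed edges still avoid all edges used by the other $A_{i'}$ and by $\mathfrak{P}_n$. This produces $(n-3)/2=\lfloor n/2\rfloor-1$ splitted matchings; the residue is then $((n+1)/2)$-regular bipartite and is again finished off by König's theorem, the matching counts summing correctly to $n-1$. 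The only delicate part throughout is the edge-disjointness bookkeeping for these central repairs, while the reflection construction and the concluding König step are routine.
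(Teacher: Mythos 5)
Your reduction to $K_2\times K_n$ and your even-$n$ construction are correct: for even $n$ the reflection matchings $A_2,A_4,\ldots,A_{n-2}$ are indeed $i$-splitted perfect matchings of $K_2\times K_n$ (no fixed points because $i+1$ and $n+i+1$ are odd), the sum invariant $j+k\in\{i+1,\,n+i+1\}$ gives pairwise edge-disjointness, and König's theorem turns the $\frac{n}{2}$-regular bipartite remainder into non-splitted factors, so together with $P_1,\ldots,P_{n-1}$ you reach $(n-1)+\left(\frac{n}{2}-1\right)=\lfloor 1.5n\rfloor-2$ splitted matchings. The gap is the odd case, and it is genuine rather than bookkeeping: the repair you describe collides between consecutive split positions. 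The centers $c_i=\frac{n+i+1}{2}$ satisfy $c_{i+2}=c_i+1$, so the triples of ``central vertices'' of $A_i$ and $A_{i+2}$ overlap in two vertices, and if you install the same $3$-cycle at every center the two repairs share an edge: with the rerouting $u_{c-1}v_c$, $u_cv_{c+1}$, $u_{c+1}v_{c-1}$ the edge $u_{c_i}v_{c_i+1}$ lies in both $A_i$ and $A_{i+2}$ (for $n=9$, the edge $u_6v_7$ lies in both $A_2$ and $A_4$), and the opposite orientation of the cycle produces the shared edge $u_{c_i+1}v_{c_i}$. So the verification you defer as routine is in fact false for every uniform choice of repair; an additional idea is needed, e.g.\ alternating the two orientations of the $3$-cycle along $i=2,4,6,\ldots$, which can be made to work but must be stated and checked, since it is exactly where the proof lives.

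The paper avoids this issue entirely, and its route is worth noting as the clean fix. Instead of splitting at many positions, it uses a single split position $\lfloor n/2\rfloor$: the two induced halves of $K_2\times K_n$ (on the first $\lfloor n/2\rfloor$ and the last $\lceil n/2\rceil$ pairs of vertices) are themselves regular bipartite graphs, so each has a $1$-factorization by König's theorem, and unioning the $j$-th factor of the left half with the $j$-th factor of the right half, $j=1,\ldots,\lfloor n/2\rfloor-1$, gives $\lfloor n/2\rfloor-1$ perfect matchings of $K_2\times K_n$ that are all $\lfloor n/2\rfloor$-splitted; the leftover is again regular bipartite and is finished off by König. No explicit matchings, no parity cases, no disjointness bookkeeping. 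If you replace your reflection-plus-repair core with this pairing argument (or patch the odd case with the alternating repair), your proof becomes complete.
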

\begin{proof}
We fix an ordering of vertices $\mathbf{v} = \left(u_1,v_1,u_2,v_2,\ldots,u_n,v_n\right)$ and consider two induced subgraphs:
\begin{align*}
G_1 &= K_2 \times K_n\left[\left\{u_1,v_1,u_2,v_2,\ldots,u_{\lfloor\frac{n}{2}\rfloor},v_{\lfloor\frac{n}{2}\rfloor}\right\}\right]\\ 
G_2 &= K_2 \times K_n\left[\left\{u_{\lfloor\frac{n}{2}\rfloor + 1},v_{\lfloor\frac{n}{2}\rfloor + 1},
u_{\lfloor\frac{n}{2}\rfloor + 2},v_{\lfloor\frac{n}{2}\rfloor + 2},\ldots,u_n,v_n \right\}\right]
\end{align*}
Both subgraphs are regular and bipartite, so according to the König's theorem \cite{Konig1916} they have 1-factorizations. Let the 1-factorizations of $G_1$ and $G_2$ be $F^l_1,F^l_2,\ldots,F^l_{\lfloor\frac{n}{2}\rfloor-1}$ and $F^r_1,F^r_2,\ldots,F^r_{\lceil\frac{n}{2}\rceil-1}$, respectively. By joining the first $\lfloor\frac{n}{2}\rfloor-1$ pairs of these matchings we form $\lfloor\frac{n}{2}\rfloor$-splitted perfect matchings of $K_2 \times K_n$ with respect to the ordering $\mathbf{v}$:

\begin{center}
$F_i = F^l_i \cup F^r_i$, for all $i=1,2,\ldots,\lfloor\frac{n}{2}\rfloor - 1$.
\end{center}

If we remove the edges $\bigcup\limits_{i=1}^{\lfloor\frac{n}{2}\rfloor-1}F_i$ from the graph $K_2 \times K_n$, the remaining graph is still a regular bipartite graph and has a 1-factorization, which we denote by $\mathfrak{F}_0$. Now, $\mathfrak{F}_0 \cup \bigcup\limits_{i=1}^{\lfloor\frac{n}{2}\rfloor-1}F_i \cup \mathfrak{P}_n$ is a 1-factorization of $K_{2n}$. The number of splitted matchings is $\lfloor\frac{n}{2}\rfloor - 1 + n - 1$. So we have $\sigma_n \geq \lfloor 1.5n \rfloor - 2$.
\end{proof}

By applying Equivalence theorem we obtain the following lower bound:
\begin{theorem}
\label{t35n3}
If $n \geq 2$, then $W(K_{2n}) \geq \lfloor 3.5n \rfloor - 3$.
\end{theorem}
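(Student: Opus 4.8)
The plan is to combine Lemma \ref{l35n3} directly with the Equivalence theorem (Theorem \ref{tEquiv}), which asserts that $W(K_{2n}) = 2n-1+\sigma_n$. Since Lemma \ref{l35n3} provides the bound $\sigma_n \geq \lfloor 1.5n \rfloor - 2$ for every $n \geq 2$, substituting this into the identity of Theorem \ref{tEquiv} immediately gives
$$W(K_{2n}) = 2n - 1 + \sigma_n \geq 2n - 1 + \lfloor 1.5n \rfloor - 2 = 2n - 3 + \lfloor 1.5n \rfloor.$$

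The only remaining step is to rewrite the right-hand side in the claimed form. Here I would invoke the elementary fact that an integer summand may be moved freely in and out of a floor, i.e. $\lfloor x + m \rfloor = \lfloor x \rfloor + m$ for any integer $m$. Applying this with $m = 2n$ and $x = 1.5n$ yields $2n + \lfloor 1.5n \rfloor = \lfloor 3.5n \rfloor$, whence $2n - 3 + \lfloor 1.5n \rfloor = \lfloor 3.5n \rfloor - 3$, which is precisely the bound asserted in the theorem.

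There is no genuine obstacle in this argument: both of its ingredients are already established, and the derivation is a single substitution followed by a routine floor-function manipulation. The one point deserving a moment of attention is the floor arithmetic, namely verifying that the integer term $2n$ can be absorbed into the floor without altering its value; this is what makes the stated constant $-3$ come out exactly, rather than an approximation.
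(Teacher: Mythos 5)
Your proposal is correct and is exactly the paper's argument: the paper derives Theorem \ref{t35n3} by applying the Equivalence theorem ($W(K_{2n}) = 2n-1+\sigma_n$) to the bound $\sigma_n \geq \lfloor 1.5n \rfloor - 2$ of Lemma \ref{l35n3}. Your only added detail, the identity $2n + \lfloor 1.5n \rfloor = \lfloor 3.5n \rfloor$ for integer $2n$, is the routine floor manipulation the paper leaves implicit, and you verify it correctly.
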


This theorem implies that $W(K_{10}) \geq 14$ which is the smallest example that disproves Conjecture \ref{conjPQ}. Next we focus on the case when $n$ is a composite number.

\begin{lemma}\label{lComposite}
For any $m,n \in\mathbb{N}$, $\sigma_{mn} \geq \sigma_m + \sigma_n + 2(m-1)(n-1)$.
\end{lemma}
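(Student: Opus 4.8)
The plan is to build a 1-factorization of $K_{2mn}$ with many splitted matchings by viewing $K_{2mn}$ as being assembled from copies of $K_{2m}$ and $K_{2n}$, and then count the splitted matchings. I would label the $2mn$ vertices as pairs $(u_{k},v_{k})$ for $k=1,\ldots,mn$, and think of the index $k$ as a pair $(a,b)$ with $a\in[1,n]$ and $b\in[1,m]$ under the lexicographic correspondence $k = (a-1)m + b$. The idea is to group the $mn$ pairs of vertices into $n$ consecutive \emph{blocks} of $m$ pairs each; block $a$ consists of the pairs indexed by $(a,1),\ldots,(a,m)$. Within each block I would use a good 1-factorization of a $K_{2m}$ achieving $\sigma_m$ splitted matchings, and between blocks I would use a good 1-factorization of a $K_{2n}$ (on the ``super-pairs'' formed by contracting each block) achieving $\sigma_n$ splitted matchings.

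The key steps, in order, are as follows. First I would decompose the edge set of $K_{2mn}$ into the $n$ intra-block complete graphs (each a $K_{2m}$) together with the inter-block edges. Each intra-block $K_{2m}$ contributes its own $\sigma_m$ splitted matchings; since all $n$ blocks sit in disjoint consecutive vertex ranges, these local splitted matchings can be combined across blocks (matching up the $i$-th splitted matching of each block) to yield splitted matchings of the whole $K_{2mn}$, giving at least $\sigma_m$ splitted matchings from this layer. Second, for the inter-block structure I would replace each block by a single vertex and take a 1-factorization of the resulting $K_{2n}$-type structure with $\sigma_n$ splitted matchings; each such matching between super-pairs expands to a set of $m$ parallel perfect matchings on the full graph (a $K_{2m,2m}$-style biregular join between the paired blocks), and a splitting line between super-pairs lifts to a splitting line between actual pairs, contributing $m$ splitted matchings per splitted super-matching, hence at least $m\sigma_n$ from this layer. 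Third — and this is where the main count lives — the \emph{non-splitted} inter-block matchings still cross block boundaries, but each such cross-block biregular bipartite graph $K_{2m,2m}$ between block $a$ and block $a'$ can itself be 1-factorized so that, along the boundary line separating the first $a$ blocks from the rest, many of its matchings are splitted. Carefully 1-factorizing these bipartite pieces (as in Lemma~\ref{l35n3}, using König's theorem on the induced regular bipartite subgraphs) should yield the additional $2(m-1)(n-1)$ splitted matchings, after which Equivalence theorem gives the bound on $\sigma_{mn}$.

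The hard part will be the third step: ensuring that the splitted matchings coming from the three layers — intra-block, contracted inter-block, and the bipartite cross-block pieces — are genuinely \emph{edge-disjoint} and together with the remaining matchings form a single 1-factorization of $K_{2mn}$, while simultaneously being splitted \emph{with respect to one fixed ordering} $\mathbf{v}$. The bookkeeping is delicate because a matching splitted along one boundary line need not be splitted along another, and the term $2(m-1)(n-1)$ suggests that the extra splitted matchings come precisely from the $(m-1)(n-1)$ interior boundary lines counted with the two-sided structure (left and right parts) exploited in Lemma~\ref{l35n3}. I expect the cleanest route is to choose the bipartite 1-factorizations so that each contributes exactly two splitted matchings per relevant boundary, mirroring the construction in Lemma~\ref{l35n3}, and then verify the disjointness and the total count $\sigma_m+\sigma_n+2(m-1)(n-1)$ by a direct edge-accounting argument.
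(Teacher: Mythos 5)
Your overall strategy --- assemble a 1-factorization of $K_{2mn}$ out of good factorizations of $K_{2m}$ and $K_{2n}$ via a blow-up and then invoke the Equivalence theorem --- is indeed the paper's strategy, but the specific decomposition you propose fails at exactly the point you defer to ``bookkeeping''. Your layer 1 covers \emph{all} intra-block edges ($n$ disjoint copies of $K_{2m}$), while your layer 2 expands a 1-factorization of the contracted $K_{2n}$. But every 1-factorization of $K_{2n}$ uses each pair edge $U_aV_a$ exactly once, and the blow-up of $U_aV_a$ is precisely the intra-block $K_{m,m}$ between the $u$-vertices and the $v$-vertices of block $a$ --- edges already covered by layer 1. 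So the two layers overlap and their union is not a 1-factorization. If instead you restrict layer 2 to genuinely inter-block edges, the edge set to be factorized is the blow-up of $K_{2n}$ \emph{minus a perfect matching}, and then ``a 1-factorization of $K_{2n}$ with $\sigma_n$ splitted matchings'' is no longer the right object: the splitted matchings realizing $\sigma_n$ may very well contain pair edges (the matchings $P_i\in\mathfrak{P}_n$ do, for odd $i$), so they can be neither expanded nor discarded. This interface is not a technicality; it is where the paper's two key devices live. The paper never separates intra- from inter-block edges: it sets aside the blow-up of \emph{one} perfect matching $N_1^0$ of $K_{2n}$ and merges it with the ``vertical'' (same-base-vertex) edges to form disjoint \emph{transversal} copies of $K_{2m}$, and it is there --- not on consecutive blocks --- that the $\sigma_m$-achieving factorization is deployed (with the roles of $m$ and $n$ interchanged relative to your labeling); all remaining blow-ups, including their intra-block parts, are cut into perfect matchings of $K_{2mn}$ using the auxiliary 1-factorization $\mathfrak{P}_{2m}$ of $K_2\square K_{2m}$.

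The second gap is the source of the term $2(m-1)(n-1)$. You hope to extract it by K\"onig-factorizing individual cross-block bipartite graphs $K_{2m,2m}$, but a matching of one such piece covers only two blocks; to obtain a perfect matching of $K_{2mn}$ that is splitted you must coordinate the choices over \emph{all} blocks simultaneously, and K\"onig's theorem gives no such control. In the paper this coordination is exactly what $\mathfrak{P}_{2m}$ provides: for each \emph{non-splitted} matching $N_i^0$ of $K_{2n}$, its blow-up is decomposed into $m$ perfect matchings $F^2_{i,j}$ of $K_{2mn}$ indexed by the matchings $P_{2j}$, and $m-1$ of these are splitted at block boundaries. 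Consequently $2(m-1)(n-1)$ is not an extra contribution on top of $m\sigma_n$; the paper's count is $m\sigma_n + (m-1)(2n-2-\sigma_n) + \sigma_m = \sigma_m+\sigma_n+2(m-1)(n-1)$. Indeed, your own accounting ($m\sigma_n$ from layer 2 \emph{plus} $2(m-1)(n-1)$ from layer 3) is infeasible: the inter-block edges admit only $2m(n-1)$ perfect matchings in total, and $m\sigma_n + 2(m-1)(n-1) > 2m(n-1)$ already for $m\geq 3$, $n\geq 2$, since $\sigma_n\geq n-1$. So the proposal cannot be completed as stated; it needs the paper's mechanism (or an equivalent coordination device) both to resolve the intra/inter-block interface and to generate the splitted matchings that the bound requires.
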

\begin{proof}
Let the vertex sets of $K_{2mn}$, $K_{2n}$ and $K_{2m}$ be as follows:
\begin{align*}
V(K_{2mn}) &= \left\{u_i^j,v_i^j\ |\ i=1,2,\ldots,n,\ j=1,2,\ldots,m\right\} \\
V(K_{2n}) &= \left\{\overline{u}_i,\overline{v}_i\ |\ i=1,2,\ldots,n \right\} \\ 
V(K_{2m}) &= \left\{\widetilde{u}^i,\widetilde{v}^i\ |\ i=1,2,\ldots,m \right\} 
\end{align*}

\begin{figure}[t!]
\centering
\includegraphics[width=0.39\textwidth]{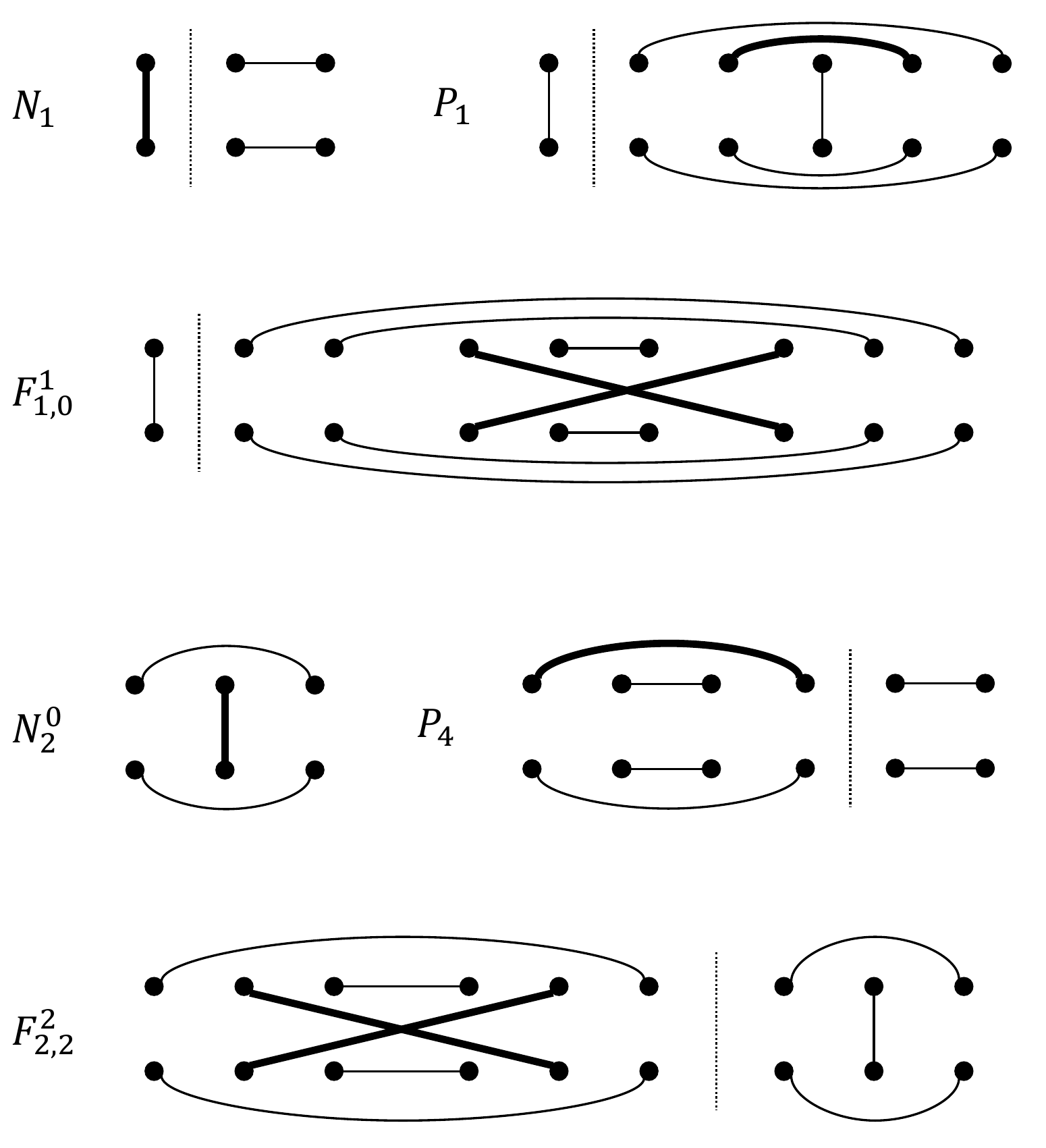}
\hspace{1cm}
\includegraphics[width=0.39\textwidth]{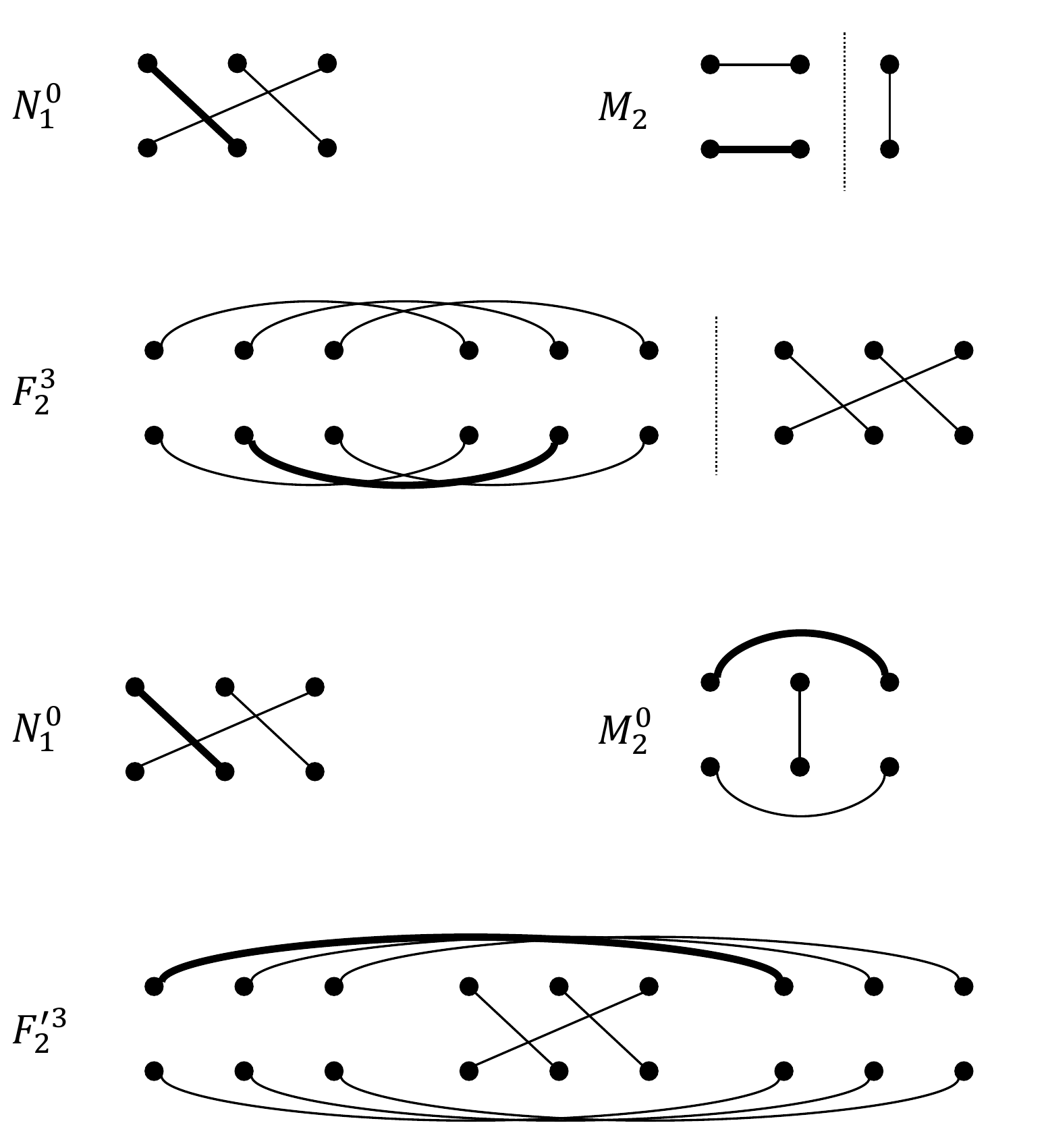}
\caption{Several perfect matchings of $K_{18}$ constructed based on 1-factorizations $\overline{\mathfrak{F}}=\{N_1,N_2,N_1^0,N_2^0,N_3^0\}$ of $K_6$, $\widetilde{\mathfrak{F}}=\{M_1,M_2,M_1^0,M_2^0,M_3^0\}$ of $K_6$ and $\mathfrak{P}_6=\{P_0,P_1,P_2,P_3,P_4,P_5\}$ of $K_2\square K_6$ using Lemma \ref{lComposite}}
\label{K_18matchings}
\end{figure}

We fix the following orderings of vertices of $K_{2mn}$, $K_{2n}$ and $K_{2m}$, respectively:
\begin{align*}
\mathbf{v} &= \left(
u^1_1,v^1_1,u^1_2,v^1_2, \ldots, u^1_n,v^1_n,
u^2_1,v^2_1,u^2_2,v^2_2, \ldots, u^2_n,v^2_n, \ldots, 
u^m_1,v^m_1,u^m_2,v^m_2, \ldots, u^m_n,v^m_n \right) \\
\overline{\mathbf{v}} &= \left(
\overline{u}_1,\overline{v}_1,\overline{u}_2,\overline{v}_2, \ldots, \overline{u}_n,\overline{v}_n\right)\\
\widetilde{\mathbf{v}} &= \left(
\widetilde{u}^1,\widetilde{v}^1,\widetilde{u}^2,\widetilde{v}^2, \ldots, \widetilde{u}^m,\widetilde{v}^m\right)
\end{align*}

Let $\overline{\mathfrak{F}} = \{ N_1,N_2,\ldots,N_{\sigma_n},N^0_1,N^0_2,\ldots, N^0_{2n-1-\sigma_n} \}$ be a 1-factorization of $K_{2n}$, where $N_i$, $i=1,2,\ldots,\sigma_n$ are splitted perfect matchings. Let $\widetilde{\mathfrak{F}} = \{ M_1,M_2,\ldots,M_{\sigma_m},M^0_1,M^0_2,\ldots, M^0_{2m-1-\sigma_m} \}$ be a 1-factorization of $K_{2m}$, where $M_i$, $i=1,2,\ldots,\sigma_m$ are splitted perfect matchings. 

We also need the graph $K_2 \square K_{2m}$ with the vertex set $\left\{w_i,z_i\ |\ i=1,2,\ldots,2m\right\}$, an ordering of its vertices $\mathbf{w} = \left( w_1,z_1,w_2,z_2,\ldots,w_{2m},z_{2m} \right)$, and its 1-factorization $\mathfrak{P}_{2m} = \{P_0,P_1,\ldots,P_{2m-1} \}$ as defined at the beginning of this section. We call the subgraph $K_2 \square K_{2m}[\{w_{2k-1},w_{2k},z_{2k-1},z_{2k}\}]$ $k$-th cell of $K_2 \square K_{2m}$, $1 \leq k \leq m$.

During the proof we always assume that $x,y \in \{u,v\}$, $1 \leq s,t \leq n$ and $1 \leq p,q \leq m$.

Let $\overline{\varphi}$ be a mapping which projects the edges of $K_{2mn}$ to the edges of $K_{2n}$. For every edge $x^p_sy^q_t \in E(K_{2mn})$, where $x_s \neq y_t$, we define $\overline{\varphi}(x^p_sy^q_t)=\overline{x}_s\overline{y}_t$. Next we define a mapping $\widetilde{\varphi}$ which projects the remaining edges of $K_{2mn}$ to the edges of $K_{2m}$. For every edge $x^p_sx^q_s \in E(K_{2mn})$ we define $\widetilde{\varphi}( x^p_sx^q_s ) = \widetilde{x}^p\widetilde{x}^q$. Note that the preimages $\overline{\varphi}^{-1}(\overline{e})$ for all $\overline{e} \in E(K_{2n})$ and $\widetilde{\varphi}^{-1}(\widetilde{x}^p\widetilde{x}^q)$ for all $\widetilde{x}^p\widetilde{x}^q \in E(K_{2m})$ are pairwise disjoint and their union covers the set $E(K_{2mn})$. We split the edge set $E(K_{2mn})$ into three parts the following way:
\begin{align*}
E(K_{2mn}) &= E^1 \cup E^2 \cup E^3 \text{, where} \\
E^1 &= \bigcup\limits_{i=1}^{\sigma_n}{\bigcup\limits_{\overline{e} \in N_i}{\overline{\varphi}^{-1}(\overline{e})}} \\
E^2 &= \bigcup\limits_{i=2}^{2n-1-\sigma_n}\bigcup\limits_{\overline{e} \in N^0_i}{\overline{\varphi}^{-1}(\overline{e})} \\
E^3 &= \bigcup\limits_{\overline{e} \in N^0_1}{\overline{\varphi}^{-1}(\overline{e})} \cup \bigcup\limits_{\widetilde{x}^p\widetilde{x}^q \in E(K_{2m})}{\widetilde{\varphi}^{-1}(\widetilde{x}^p\widetilde{x}^q)}
\end{align*}

The 1-factorization of $K_{2mn}$ we are going to construct is denoted by $\mathfrak{F}$ and also consists of three parts. 

\begin{center}
$\mathfrak{F} = \mathfrak{F}^1 \cup \mathfrak{F}^2 \cup \mathfrak{F}^3$
\end{center}

The set of perfect matchings $\mathfrak{F}^k$ covers the set $E^k$, $k=1,2,3$. Fig. \ref{K_18matchings} displays example perfect matchings for each of the parts in case $m=n=3$. 

The set $E^1$ contains the preimages of splitted perfect matchings of $K_{2n}$. To cover it, for every splitted perfect matching $N_i \in \overline{\mathfrak{F}}$, $i=1,2,\ldots,\sigma_n$, and for every perfect matching with an odd index $P_{2j+1} \in \mathfrak{P}_{2m}$, $j=0,1,\ldots,m-1$, we construct one perfect matching of $\mathfrak{F}^1$. 
\begin{align*}
F^1_{i,j} = &F^1_{i,j,1} \cup F^1_{i,j,2} \cup F^1_{i,j,3} \cup F^1_{i,j,4}\text{, where }\\
F^1_{i,j,1} = &\bigcup\limits_{\substack{w_{2k-1}z_{2k-1} \in P_{2j+1} \\ 1 \leq k \leq m}}
\left\{x_s^ky_t^k\ |\ \overline{x}_s\overline{y}_t \in l(N_i)\right\} \\
F^1_{i,j,2} = &\bigcup\limits_{\substack{w_{2k}z_{2k} \in P_{2j+1} \\ 1 \leq k \leq m}}
\left\{x_s^ky_t^k\ |\ \overline{x}_s\overline{y}_t \in r(N_i)\right\} \\
F^1_{i,j,3} = &\bigcup\limits_{\substack{w_{2k-1}w_{2l-1} \in P_{2j+1} \\ 1 \leq k < l \leq m}}
\left\{x_s^ky_t^l, y_t^kx_s^l\ |\ \overline{x}_s\overline{y}_t \in l(N_i)\right\} \\
F^1_{i,j,4} = &\bigcup\limits_{\substack{w_{2k}w_{2l} \in P_{2j+1} \\ 1 \leq k < l \leq m}}
\left\{x_s^ky_t^l, y_t^kx_s^l\ |\ \overline{x}_s\overline{y}_t \in r(N_i)\right\}\\
\mathfrak{F}^1 = &\left\{F^1_{i,j}\ |\ i=1,2,\ldots,\sigma_n,\ j=0,1,\ldots,m-1\right\}
\end{align*}

For $F^1_{i,j,1}$ and $F^1_{i,j,2}$, we look for vertical edges in $P_{2j+1}$. If for some $k$, the vertical edge of the left (right) part of the $k$-th cell belongs to $P_{2j+1}$, we add the preimages of all edges of $l(N_i)$ ($r(N_i)$) in the $k$-th copy of $K_{2n}$ in $K_{2mn}$ to $F^1_{i,j,1}$ ($F^1_{i,j,2}$). Every matching $P_{2j+1}$ contains exactly two vertical edges ($w_{j+1}z_{j+1}$ and $w_{j+m+1}z_{j+m+1}$). If $m$ is odd, then one of these two belongs to the left part of its cell, and the other one belongs to the right part of its cell. If $m$ is even, then if $j$ is odd (even), both vertical edges belong to the right (left) parts of the cells. So, the number of edges in $F^1_{i,j,1}$ and $F^1_{i,j,2}$ can be calculated the following way:
\begin{align*}
|F^1_{i,j,1}| = &|l(N_i)|\left( (m \bmod 2)\cdot 1 + (1 - m \bmod 2)\cdot 2(1 - j \bmod 2) \right) \\
|F^1_{i,j,2}| = &|r(N_i)|\left( (m \bmod 2)\cdot 1 + (1 - m \bmod 2)\cdot 2(j \bmod 2) \right) 
\end{align*}

For $F^1_{i,j,3}$ ($F^1_{i,j,4}$) we are looking for edges joining left side (right side) vertices of two different cells in $P_{2j+1}$. If $m$ is odd, then there are $\frac{m-1}{2}$ such edges. If $m$ is even, then there are $\frac{m}{2} - (1 - j \bmod 2)$ (in case of $F^1_{i,j,4}$: $\frac{m}{2} - (j \bmod 2)$) such edges. For every such edge which joins the $k$-th and $l$-th cells ($k<l$) we add the preimages of all edges in $l(N_i)$ ($r(N_i)$) which join the vertices in $k$-th and $l$-th copies of $K_{2n}$ in $K_{2mn}$ to $F^1_{i,j,3}$ ($F^1_{i,j,4}$). Note that for every chosen edge from $P_{2j+1}$, every edge in $l(N_i)$ ($r(N_i)$) has exactly 2 preimages in $F^1_{i,j,3}$ ($F^1_{i,j,4}$). So we have:
\begin{align*}
|F^1_{i,j,3}| = &2|l(N_i)|\left( (m \bmod 2)\cdot \frac{m-1}{2} + (1 - m \bmod 2)\cdot \left(\frac{m}{2} - (1 - j \bmod 2)\right) \right) \\
|F^1_{i,j,4}| = &2|r(N_i)|\left( (m \bmod 2)\cdot \frac{m-1}{2} + (1 - m \bmod 2)\cdot \left(\frac{m}{2} - (j \bmod 2)\right) \right)
\end{align*}

The construction of $F^1_{i,j}$ implies that it is a matching in $K_{2mn}$. To prove that it is also a perfect matching, we need to show that it has exactly $mn$ edges. 
\begin{align*}
|F^1_{i,j}| = &|F^1_{i,j,1}| + |F^1_{i,j,2}| + |F^1_{i,j,3}| + |F^1_{i,j,4}| = \\
 = &|l(N_i)|\left(
 	(m \bmod 2)( 1 + m - 1) + 
    (1 - m \bmod 2)\left( 
    	2(1- j \bmod 2) + m - 2(1 - j\bmod 2) 
	\right)
\right) + \\
 + &|r(N_i)|\left(
 	(m \bmod 2)( 1 + m - 1) + 
    (1 - m \bmod 2)\left(
    	2(j \bmod 2) + m - 2(j\bmod 2)
    \right)
\right) \\
 = &\left(|l(N_i)| + |r(N_i)|\right)\left((m \bmod 2)\cdot m +  (1 - m \bmod 2)\cdot m \right) = nm
\end{align*}

The matchings $F^1_{i,j}$ and $F^1_{i',j'}$ are disjoint if $i \ne i'$ or $j \ne j'$, as their edges correspond to either different edges in $K_{2n}$ or to different edges in $K_2 \square K_{2m}$. Also note that, if $N_i$ is an $r$-splitted matching for $\overline{\mathbf{v}}$, then $F^1_{i,j}$ is $(jn+r)$-splitted matching for $\mathbf{v}$, for every $i=1,2,\ldots,\sigma_n$ and $j=0,1,\ldots,m-1$.

The set $E^2$ contains the preimages of all but one non-splitted perfect matchings. To cover it, for every non-splitted perfect matching $N^0_i \in \overline{\mathfrak{F}}$ except $N^0_1$ (the choice of this exception is arbitrary) and for every perfect matching with an even index $P_{2j} \in \mathfrak{P}_{2m}$ we construct one perfect matching of $\mathfrak{F}^2$.
\begin{align*}
F^2_{i,j} = &F^2_{i,j,1} \cup F^2_{i,j,2}\text{, where }\\
F^2_{i,j,1} = &\bigcup\limits_{\substack{w_{2k-1}w_{2k} \in P_{2j} \\ 1 \leq k \leq m}}
\left\{x_s^ky_t^k\ |\ \overline{x}_s\overline{y}_t \in N^0_i\right\} \\
F^2_{i,j,2} = &\bigcup\limits_{\substack{w_{2k-1}w_{2l} \in P_{2j} \\ 1 \leq k < l \leq m}}
\left\{x_s^ky_t^l, y_t^kx_s^l\ |\ \overline{x}_s\overline{y}_t \in N^0_i\right\}\\
\mathfrak{F}^2 = &\{F^2_{i,j}\ |\ i=2,3,\ldots,2n-1-\sigma_n,\ j=0,1,\ldots,m-1\}
\end{align*}
The matchings $P_{2j}$ have only horizontal edges. We look for those edges which join a vertex from the left part of a cell to a vertex from the right part of a (possibly different) cell. If both endpoints of an edge belong to the same $k$-th cell, we add the preimages of all edges of $N_i^0$ which belong to the $k$-th copy of $K_{2n}$ in $K_{2mn}$ to the set $F^2_{i,j,1}$. The number of such edges in $P_{2j}$ is $1$ if $m$ is odd and $2(j \bmod 2)$ if $m$ is even. So we have:
\begin{align*}
|F^2_{i,j,1}| = &n\left( (m \bmod 2)\cdot 1 + (1 - m \bmod 2)\cdot 2(j \bmod 2) \right)
\end{align*}
If the edge of $P_{2j}$ joins vertices of $k$-th and $l$-th cells ($k < l$) then we add both preimages of all edges of $N_i^0$ which join the vertices of $k$-th and $l$-th copies of $K_{2n}$ in $K_{2mn}$ to $F^2_{i,j,2}$. The number of such edges in $P_{2j}$ is $\frac{m-1}{2}$ if $m$ is odd, and $\frac{m}{2} - (j \bmod 2)$ if $m$ is even. So,
\begin{align*}
|F^2_{i,j,2}| = &2n\left( (m \bmod 2)\cdot \frac{m-1}{2} + (1 - m \bmod 2)\cdot \left(\frac{m}{2} - (j \bmod 2)\right) \right) \\
|F^2_{i,j}| = &|F^2_{i,j,1}| + |F^2_{i,j,2}| =\\
= &n\left(
	(m \bmod 2)(1+m-1) + 
    (1 - m \bmod 2)(2(j \bmod 2) + m - 2(j \bmod 2))
\right) =\\
= &n\left(
	(m \bmod 2) \cdot m + 
    (1 - m \bmod 2) \cdot m
\right) = nm
\end{align*}

Similar to the matchings in $\mathfrak{F}^1$, the matchings $F^2_{i,j}$ and $F^2_{i',j'}$ are disjoint if $i \ne i'$ or $j \ne j'$. Note that for every $i=2,3,\ldots,2n-1-\sigma_n$, $F^2_{i,j}$ is $jn$-splitted perfect matching for $\mathbf{v}$ for every $j=1,2,\ldots,m-1$, and is a non-splitted perfect matching if $j=0$.

The set $E^3$ contains the preimages of the edges of the non-splitted perfect matching $N^0_1$ of $K_{2n}$ and the preimages of all edges of $K_{2m}[\{\widetilde{u}^1,\widetilde{u}^2,\ldots,\widetilde{u}^m\}] \cup K_{2m}[\{\widetilde{v}^1,\widetilde{v}^2,\ldots,\widetilde{v}^m\}]$. The preimages of the edges of $K_{2m}$ form $2n$ disjoint complete graphs on $m$ vertices, namely $K_{2mn}\left[\{x_s^1,x_s^2,\ldots,x_s^m\}\right]$, for every $\overline{x}_s \in V(K_{2n})$. For every edge $\overline{x}_s\overline{y}_t \in N^0_1$, its preimages together with the two copies of $K_{m}$ corresponding to the vertices $\overline{x}_s$ and $\overline{y}_t$ form the subgraph $K_{2mn}\left[\{x_s^1,y_t^1,x_s^2,y_t^2,\ldots,x_s^m,y_t^m\}\right]$, which is isomorphic to $K_{2m}$. So, the set $E^3$ consists of $n$ disjoint copies of $K_{2m}$. For every perfect matching $M \in \widetilde{\mathfrak{F}}$ we construct one perfect matching in $K_{2mn}$ by joining its $n$ disjoint copies in $E^3$: 
\begin{align*}
F^3_{i} = &
\bigcup\limits_{\substack{\overline{x}_s\overline{y}_t \in N^0_1}}
\left\{ 
\{x_s^{p}x_s^{q}\ |\ \widetilde{u}^{p}\widetilde{u}^{q} \in M_i\} 
\cup 
\{x_s^{p}y_t^{q}\ |\ \widetilde{u}^{p}\widetilde{v}^{q} \in M_i\} 
\cup
\{y_t^{p}y_t^{q}\ |\ \widetilde{v}^{p}\widetilde{v}^{q} \in M_i\} 
\right\}\\
F'^3_{i} = &
\bigcup\limits_{\substack{\overline{x}_s\overline{y}_t \in N^0_1}}
\left\{ 
\{x_s^{p}x_s^{q}\ |\ \widetilde{u}^{p}\widetilde{u}^{q} \in M^0_i\} 
\cup 
\{x_s^{p}y_t^{q}\ |\ \widetilde{u}^{p}\widetilde{v}^{q} \in M^0_i\} 
\cup
\{y_t^{p}y_t^{q}\ |\ \widetilde{v}^{p}\widetilde{v}^{q} \in M^0_i\} 
\right\}\\
\mathfrak{F}^3 = &\left\{F^3_{i}\ |\ i=1,2,\ldots,\sigma_m\right\} \cup \left\{F'^3_{i}\ |\ i=1,2,\ldots,2m-1-\sigma_m\right\}
\end{align*}
The sets $F^3_i$ and $F'^3_i$ are pairwise disjoint matchings having $mn$ edges each. Note that if $M_i$ is $r$-splitted perfect matching for $\widetilde{\mathbf{v}}$, then $F^3_i$ is $rn$-splitted perfect matching for $\mathbf{v}$, $i=1,2,\ldots,\sigma_m$. Moreover, the perfect matchings $F'^3_i$ are not splitted, $i=1,2,\ldots,2m-1-\sigma_m$.

The number of the constructed perfect matchings in $\mathfrak{F}$ is $m\sigma_n + m(2n-2-\sigma_n) + 2m-1 = 2mn-1$. Out of these the number of splitted perfect matchings is $m\sigma_n + (m-1)(2n-2-\sigma_n) + \sigma_m = \sigma_m+\sigma_n+2(m-1)(n-1)$. This completes the proof.
\end{proof}

By applying Equivalence theorem we obtain the following lower bound, which is a generalization of Theorem \ref{tPetrosyan4n}:
\begin{theorem}
\label{tComposite}
For any $m,n \in\mathbb{N}$, $W(K_{2mn}) \geq W(K_{2m}) + W(K_{2n}) + 4(m-1)(n-1) - 1$.
\end{theorem}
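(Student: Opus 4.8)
The plan is to deduce this bound directly from Lemma \ref{lComposite} by passing through the Equivalence theorem (Theorem \ref{tEquiv}), which converts statements about the quantity $\sigma$ into statements about $W$. Since all the combinatorial content has already been established in Lemma \ref{lComposite}, the only remaining work is the arithmetic bookkeeping of the additive constants, and I expect no conceptual obstacle.

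First I would invoke the Equivalence theorem three times to write
\[
W(K_{2mn}) = 2mn - 1 + \sigma_{mn}, \qquad W(K_{2m}) = 2m - 1 + \sigma_m, \qquad W(K_{2n}) = 2n - 1 + \sigma_n.
\]
In particular this lets me solve for $\sigma_m = W(K_{2m}) - 2m + 1$ and $\sigma_n = W(K_{2n}) - 2n + 1$, which I will need in order to re-express the lower bound in terms of $W$ rather than $\sigma$.

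Next I would apply Lemma \ref{lComposite}, which gives $\sigma_{mn} \geq \sigma_m + \sigma_n + 2(m-1)(n-1)$, and substitute it into the first identity to obtain
\[
W(K_{2mn}) \geq 2mn - 1 + \sigma_m + \sigma_n + 2(m-1)(n-1).
\]
Then I would replace $\sigma_m$ and $\sigma_n$ by the expressions found above, collecting the linear terms so that the right-hand side reads
\[
W(K_{2mn}) \geq W(K_{2m}) + W(K_{2n}) + \bigl(2mn - 2m - 2n + 1\bigr) + 2(m-1)(n-1).
\]

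Finally I would verify that the leftover constant simplifies correctly: expanding $2(m-1)(n-1) = 2mn - 2m - 2n + 2$ and adding it to $2mn - 2m - 2n + 1$ gives $4mn - 4m - 4n + 3$, which equals $4(m-1)(n-1) - 1$. This produces exactly the claimed bound $W(K_{2mn}) \geq W(K_{2m}) + W(K_{2n}) + 4(m-1)(n-1) - 1$, completing the argument. The only point requiring care is matching these additive constants precisely, as the substantive step is entirely carried by Lemma \ref{lComposite}.
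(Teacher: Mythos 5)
Your proposal is correct and follows exactly the paper's route: the paper derives this theorem by applying the Equivalence theorem ($W(K_{2n}) = 2n-1+\sigma_n$) to the inequality $\sigma_{mn} \geq \sigma_m + \sigma_n + 2(m-1)(n-1)$ of Lemma \ref{lComposite}, and your arithmetic verification of the constant $4(m-1)(n-1)-1$ is accurate.
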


We know that $W(K_6) = 7$ and $W(K_{10}) \geq 14$. The above theorem implies that $W(K_{30}) \geq 52$. This result disproves Conjecture \ref{conjLog} which predicted that $W(K_{30})=51$. But this is not the smallest case that contradicts the conjecture as we will see in Section \ref{sExact}.

\begin{corollary}
\label{cLower}
If $n=\prod\limits_{i=1}^{\infty}p_i^{\alpha_i}$, where $p_i$ is the $i$-th prime number, $\alpha_i \in \mathbb{Z}_+$, then
\begin{center}
$W(K_{2n}) \geq 4n - 3 - \sum\limits_{i=1}^{\infty}{\alpha_i\left(4p_i-3-W(K_{2p_i})\right)}$.
\end{center}
\end{corollary}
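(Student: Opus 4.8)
The plan is to recast Theorem \ref{tComposite} as a subadditivity statement and then iterate it across the prime factorization of $n$. First I would introduce the \emph{defect} $f(n) = 4n - 3 - W(K_{2n})$, which measures how far $W(K_{2n})$ falls short of the value $4n-3$; note that $f(1)=0$ since $K_2$ is a single edge and $W(K_2)=1$. The inequality to be proved is then equivalent to the clean statement $f(n) \leq \sum_{i=1}^{\infty} \alpha_i f(p_i)$, because substituting $f(k) = 4k-3-W(K_{2k})$ into this and rearranging reproduces the claimed bound exactly.

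Next I would substitute $W(K_{2k}) = 4k-3-f(k)$ into Theorem \ref{tComposite}. A direct computation shows that the term $4(m-1)(n-1)-1$ combines with the linear pieces and cancels against the $4mn-3$ on the left: the right-hand side $(4m-3-f(m)) + (4n-3-f(n)) + 4(m-1)(n-1)-1$ simplifies to $4mn-3-f(m)-f(n)$, so the inequality collapses precisely to $f(mn) \leq f(m) + f(n)$. Thus $f$ is subadditive with respect to multiplication; this is the only genuine algebraic step, and it is routine.

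With subadditivity in hand, the result follows by induction on $\Omega(n) = \sum_{i} \alpha_i$, the number of prime factors of $n$ counted with multiplicity. The base case $\Omega(n)=0$ is $n=1$, where $f(1)=0$ matches the empty sum. For the inductive step, choose a prime $p_j$ dividing $n$ and write $n = p_j \cdot n'$ with $\Omega(n') = \Omega(n)-1$; applying $f(mn) \leq f(m)+f(n)$ with $m=p_j$ gives $f(n) \leq f(p_j) + f(n')$, and the induction hypothesis bounds $f(n')$ by a sum over the primes dividing $n'$ with the exponent at $p_j$ lowered by one. Adding the $f(p_j)$ term restores the full exponent and yields $f(n) \leq \sum_i \alpha_i f(p_i)$.

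Translating back by substituting $f(k) = 4k-3-W(K_{2k})$ into $f(n) \leq \sum_i \alpha_i f(p_i)$ and rearranging gives the stated lower bound on $W(K_{2n})$. I do not expect any real obstacle: all the combinatorial content lives in Theorem \ref{tComposite}, and the corollary is a formal consequence of its subadditive reformulation. The only points requiring minor care are the base value $f(1)=0$ and the observation that only finitely many $\alpha_i$ are nonzero, so the infinite sum is in fact a finite one over the primes actually dividing $n$.
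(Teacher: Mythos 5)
Your proof is correct and follows essentially the same route as the paper: the paper defines $d_m = W(K_{2m}) - (4m-3)$ and deduces superadditivity $d_{mk} \geq d_m + d_k$ from Theorem \ref{tComposite}, then inducts over the prime factorization, which is exactly your argument with the sign flipped (your defect $f$ is $-d$). The only difference is that you spell out the algebraic cancellation and the induction on $\Omega(n)$ explicitly, which the paper leaves implicit.
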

\begin{proof}
Let $d_m$ denote the difference $W(K_{2m}) - (4m - 3)$. Theorem \ref{tComposite} states that $d_{mk} \geq d_m + d_k$. By induction we get $d_n \geq \sum_{i=1}^{\infty}{\alpha_i d_{p_i}}$. We complete the proof by replacing $d_{p_i}$ by its value.
\end{proof}

\section{Upper bounds}
Let $\alpha$ be an arbitrary interval edge-coloring of $K_{2n}$, $n \in \mathbb{N}$, and $\mathbf{v}_\alpha = \left(u_1,v_1, u_2,v_2, \ldots,u_n,v_n\right)$ be its corresponding ordering of vertices. Let the shift vector of $\alpha$ be ${\rm sh}(\alpha) = (b_1,b_2,\ldots,b_{n-1})$. Equivalence lemma implies that there exists a 1-factorization of $K_{2n}$ $\mathfrak{F} = 
\left\{ F^0_j\ |\ j=1,2,\ldots,2n-1-\sum\limits_{i=1}^{n-1}b_i \right\}
\cup
\bigcup\limits_{i=1}^{n-1}\left\{F^i_j\ |\ j=1,2,\ldots,b_i\right\}$, such that $F^i_j$ is $i$-splitted with respect to the ordering $\mathbf{v_\alpha}$, $i=1,2,\ldots,n-1$, $j=1,2,\ldots,b_i$. Wherever we have an interval coloring $\alpha$ of a complete graph in the proofs of this section we will always assume that the corresponding ordering of vertices $\mathbf{v}_{\alpha}$ and 1-factorization $\mathfrak{F}$ is given.

To improve the upper bounds on $W(K_{2n})$ we need several lemmas.
\begin{lemma}
\label{lReverse}
If for some interval edge-coloring $\alpha$ of $K_{2n}$, ${\rm sh}(\alpha) = (b_1,b_2,\ldots, b_{n-1})$, then there exists interval edge-coloring $\beta$ of $K_{2n}$ such that ${\rm sh}(\beta) = (b_{n-1}, b_{n-2}, \ldots, b_1)$.
\end{lemma}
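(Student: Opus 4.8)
The plan is to obtain $\beta$ from $\alpha$ by a single color reflection and then re-derive its canonical vertex ordering. Since $\alpha$ is an interval $t$-coloring of $K_{2n}$ with $t = 2n-1+|{\rm sh}(\alpha)|$ (Remark \ref{totalShift}), I would define $\beta(e) = t+1-\alpha(e)$ for every edge $e$. This is again a proper edge-coloring using exactly the colors $1,\ldots,t$, and for each vertex $w$ its spectrum transforms as $S(w,\beta) = \{t+1-c \mid c \in S(w,\alpha)\}$, the reflection of an interval, hence again an interval of consecutive integers. Thus $\beta$ is an interval $t$-coloring of $K_{2n}$.

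Next I would pin down the canonical ordering $\mathbf{v}_\beta$. Each vertex of $K_{2n}$ has degree $2n-1$, so every spectrum under $\alpha$ is an interval of exactly $2n-1$ integers; hence $\overline{S}(w,\alpha) = \underline{S}(w,\alpha) + 2n-2$, and by the (unlabeled) Lemma stating $\underline{S}(u_i,\alpha) = \underline{S}(v_i,\alpha)$ the same holds for the largest colors within each pair. Reflecting the colors turns smallest into largest, so $\underline{S}(w,\beta) = t+1-\overline{S}(w,\alpha) = (t+3-2n)-\underline{S}(w,\alpha)$. Since $\underline{S}(u_i,\alpha)$ is non-decreasing in $i$ along $\mathbf{v}_\alpha$, the quantity $\underline{S}(w,\beta)$ is non-increasing; therefore the ordering that makes the minimum $\beta$-spectra non-decreasing is the reversal of $\mathbf{v}_\alpha$. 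I would set $u'_i = u_{n+1-i}$ and $v'_i = v_{n+1-i}$, so that $\mathbf{v}_\beta = (u'_1,v'_1,\ldots,u'_n,v'_n)$ satisfies the required inequalities (with equality within each pair, inherited from the same property of $\alpha$).

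Finally I would compute the shift vector directly. For $i=1,\ldots,n-1$, writing $K = t+3-2n$,
\begin{align*}
b'_i &= \underline{S}(u'_{i+1},\beta) - \underline{S}(u'_i,\beta) = \underline{S}(u_{n-i},\beta) - \underline{S}(u_{n+1-i},\beta)\\
&= \bigl(K-\underline{S}(u_{n-i},\alpha)\bigr) - \bigl(K-\underline{S}(u_{n+1-i},\alpha)\bigr) = \underline{S}(u_{n+1-i},\alpha) - \underline{S}(u_{n-i},\alpha) = b_{n-i},
\end{align*}
the last equality being the definition of the $(n-i)$-th entry of ${\rm sh}(\alpha)$. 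Hence ${\rm sh}(\beta) = (b_{n-1}, b_{n-2}, \ldots, b_1)$, as desired.

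I would expect no serious obstacle here: the only point requiring care is that the shift vector is defined relative to the \emph{canonical} ordering, so one cannot simply reflect colors and read off the answer — the vertices must be re-sorted first, and it is precisely this reversal of the ordering that reverses the vector. As an alternative route, more in the spirit of Section 2, one could instead apply the Equivalence lemma to realize ${\rm sh}(\alpha)$ by a 1-factorization, observe that reversing the vertex ordering sends each $i$-splitted matching to an $(n-i)$-splitted one, and apply the Equivalence lemma back; this yields the same conclusion.
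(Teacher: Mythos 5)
Your proposal is correct, but your primary argument takes a genuinely different route from the paper's. The paper proves this lemma in one line using the machinery of Section 2: it takes the 1-factorization $\mathfrak{F}$ associated to $\alpha$ via the Equivalence lemma (Lemma \ref{lEquiv}), observes that a matching that is $i$-splitted with respect to $\mathbf{v}_\alpha$ is $(n-i)$-splitted with respect to the reversed ordering $\mathbf{v}'_\alpha = (u_n,v_n,u_{n-1},v_{n-1},\ldots,u_1,v_1)$, and applies the Equivalence lemma back to produce $\beta$ --- which is exactly the ``alternative route'' you sketch in your final sentence. Your main argument instead works directly with colorings: the complementation $\beta(e)=t+1-\alpha(e)$ preserves properness, surjectivity onto $[1,t]$, and intervality of spectra; and since by regularity every spectrum has length exactly $2n-1$, reflection sends minima to maxima, so the order of the minima is reversed, the canonical ordering of $\beta$ is the reversal of $\mathbf{v}_\alpha$, and the shift entries come out as $b'_i=b_{n-i}$. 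This is more elementary and self-contained (it never invokes the 1-factorization equivalence), and it correctly handles the one delicate point, namely that the shift vector is defined relative to the re-sorted canonical ordering; the within-pair equality $\underline{S}(u_i,\alpha)=\underline{S}(v_i,\alpha)$ from the paper's unlabeled lemma guarantees the required inequalities survive the relabeling. What the paper's route buys in exchange is economy and consistency: given Lemma \ref{lEquiv} the proof is immediate, and it keeps the argument in the splitted-matching language that the later upper-bound lemmas (e.g.\ Lemma \ref{l2k1} and its corollaries, which repeatedly cite Lemma \ref{lReverse}) are built on, whereas your color-reflection symmetry, while clean, is a fact about colorings that the rest of the paper does not otherwise exploit.
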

\begin{proof}
Note that if some $F \in \mathfrak{F}$ is $i$-splitted with respect to $\mathbf{v}_\alpha$, then it is $(n-i)$-splitted with respect to the ordering $\mathbf{v}'_\alpha = \left(u_n,v_n,u_{n-1},v_{n-1},\ldots,u_1,v_1\right)$. We use Equivalence lemma to construct a coloring $\beta$ from $\mathfrak{F}$ with respect to the ordering $\mathbf{v}'_\alpha$. Its shift vector is $(b_{n-1}, b_{n-2}, \ldots, b_1)$.
\end{proof}

\begin{lemma}
\label{lLessColors}
If for some interval edge-coloring $\alpha$ of $K_{2n}$, ${\rm sh}(\alpha) = (b_1,b_2,\ldots, b_{n-1})$, where $b_i > 0$ for some $i \in [1,n-1]$, then there exists interval edge-coloring $\beta$ of $K_{2n}$ such that ${\rm sh}(\beta) = (b_1, b_2, \ldots, b_{i-1}, b_{i}-1, b_{i+1}, \ldots, b_{n-1})$.
\end{lemma}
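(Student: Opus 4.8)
The plan is to move to the 1-factorization picture via the Equivalence lemma, demote one splitted matching to a non-splitted one, and translate back. Applying the direction (a) $\Rightarrow$ (b) of Lemma \ref{lEquiv} to $\alpha$ with the ordering $\mathbf{v}_\alpha$, I obtain a 1-factorization $\mathfrak{F}$ of $K_{2n}$ in which, for each $i'=1,2,\ldots,n-1$, exactly $b_{i'}$ of the perfect matchings (namely $F^{i'}_1,\ldots,F^{i'}_{b_{i'}}$) are $i'$-splitted with respect to $\mathbf{v}_\alpha$, together with $2n-1-\sum_k b_k$ perfect matchings $F^0_j$ that are treated as non-splitted.

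Since $b_i>0$, the matching $F^i_{b_i}$ exists and is $i$-splitted. The key observation is that any perfect matching---whether or not it happens to be splitted---may legitimately play the role of a non-splitted matching $F^0_j$ in the construction of direction (b) $\Rightarrow$ (a): there, a matching assigned to the $F^0$ pile simply receives a single color on all of its edges, and nothing forbids a splitted matching from being monochromatic (this is exactly the phenomenon recorded in Remark \ref{splittedSameColor}). I therefore reinterpret the very same 1-factorization $\mathfrak{F}$ as having, for $i'\neq i$, the $b_{i'}$ matchings $F^{i'}_1,\ldots,F^{i'}_{b_{i'}}$ as $i'$-splitted; at position $i$, only the $b_i-1$ matchings $F^i_1,\ldots,F^i_{b_i-1}$ as $i$-splitted; and all remaining matchings---including $F^i_{b_i}$---placed in the non-splitted pile, whose size grows from $2n-1-\sum_k b_k$ to $2n-1-\left(\sum_k b_k-1\right)$.

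This reinterpretation is precisely a 1-factorization of the form demanded by statement (b) of the Equivalence lemma for the shift vector $(b_1,\ldots,b_{i-1},b_i-1,b_{i+1},\ldots,b_{n-1})$. Applying the direction (b) $\Rightarrow$ (a) of Lemma \ref{lEquiv} to it (again with respect to $\mathbf{v}_\alpha$) then yields an interval edge-coloring $\beta$ of $K_{2n}$ whose shift vector is exactly $(b_1,\ldots,b_{i-1},b_i-1,b_{i+1},\ldots,b_{n-1})$, as required; by Remark \ref{totalShift} this $\beta$ uses one fewer color than $\alpha$. I do not anticipate a genuine obstacle: the only point requiring care is confirming that moving one $i$-splitted matching into the non-splitted pile produces a structure that \emph{literally} satisfies hypothesis (b)---correct cardinality $b_{i'}$ in each $i'$-pile, correct cardinality $b_i-1$ at position $i$, and the correct total count $2n-1$---so that the Equivalence lemma applies verbatim. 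The index bookkeeping in the $F^0$ pile is routine and irrelevant to the splitting structure, so no further argument is needed.
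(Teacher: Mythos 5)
Your proof is correct and follows exactly the paper's own argument: take the 1-factorization given by the Equivalence lemma, demote the $i$-splitted matching $F^i_{b_i}$ into the non-splitted pile (which is legitimate since matchings in that pile are merely assigned a single color, as Remark \ref{splittedSameColor} notes), and apply the (b) $\Rightarrow$ (a) direction to recover the coloring $\beta$ with the reduced shift vector. The only difference is that you spell out the justification for the reinterpretation more explicitly than the paper does.
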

\begin{proof}
The condition $b_i > 0$ implies that there exists a perfect matching $F_{b_i}^i \in \mathfrak{F}$ which is $i$-splitted with respect to the ordering $\mathbf{v}_\alpha$. We construct the coloring $\beta$ by applying Equivalence lemma to the 1-factorization $\mathfrak{F}$ by regarding the perfect matching $F_{b_i}^i$ as a non-splitted one (we can rename it to $F^0_{2n-|{\rm sh}(\alpha)|}$).
\end{proof}

\begin{lemma}
\label{l2k1}
If ${\rm sh}(\alpha) = (b_1,b_2,\ldots, b_{n-1})$ for some interval edge-coloring $\alpha$ of $K_{2n}$, then  
\begin{center}
$\sum\limits_{i=1}^{k}{b_i} \leq 2k-1$, for every $k=1,2,\ldots,n-1$.
\end{center}
\end{lemma}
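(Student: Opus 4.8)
The plan is to exploit the fact that the entire block of ``left'' colors $\{1,2,\ldots,\sum_{i=1}^{k}b_i\}$ is forced to appear at the single vertex $u_1$, and always through edges that stay inside the subgraph $H_{\mathbf{v}_\alpha}^{[1,k]}$, which is a copy of $K_{2k}$. Since $u_1$ has only $2k-1$ neighbours inside this subgraph, a degree count will immediately give the bound. Throughout I keep the shorthand $B_i=\sum_{j=1}^{i}b_j$ with $B_0=0$, so the quantity to be bounded is $B_k$.

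First I would identify the relevant set of colors. From the definition $L_{\mathbf{v}_\alpha}^i(\alpha)=[B_{i-1}+1,\,B_i]$ (which is $\{B_{i-1}+1,\ldots,B_{i-1}+b_i\}$ when $b_i>0$ and $\emptyset$ when $b_i=0$, since then $B_{i-1}=B_i$). These consecutive blocks tile the interval $[1,B_k]$ without gaps, so $\bigcup_{i=1}^{k}L_{\mathbf{v}_\alpha}^i(\alpha)=\{1,2,\ldots,B_k\}$, a set of exactly $B_k$ colors.

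Next I would show that each of these $B_k$ colors occurs at $u_1$ on an edge of $H_{\mathbf{v}_\alpha}^{[1,k]}$. Fix a color $c\in L_{\mathbf{v}_\alpha}^{i}(\alpha)$ with $i\le k$. Because $u_1\in V\!\left(H_{\mathbf{v}_\alpha}^{[1,i]}\right)$ and, by Remark \ref{splittedColors}, $L_{\mathbf{v}_\alpha}^{i}(\alpha)\subseteq S_\cap\!\left(H_{\mathbf{v}_\alpha}^{[1,i]},\alpha\right)\subseteq S(u_1,\alpha)$, the color $c$ lies on a unique edge $e_c$ incident to $u_1$. Its other endpoint $w$ satisfies $c\in S(w,\alpha)$; but the relation $L_{\mathbf{v}_\alpha}^{i}(\alpha)\cap S_\cup\!\left(H_{\mathbf{v}_\alpha}^{[i+1,n]},\alpha\right)=\emptyset$ of the same remark forbids $w$ from lying in $H_{\mathbf{v}_\alpha}^{[i+1,n]}$. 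Hence $w\in V\!\left(H_{\mathbf{v}_\alpha}^{[1,i]}\right)\subseteq V\!\left(H_{\mathbf{v}_\alpha}^{[1,k]}\right)$, so $e_c$ is an edge of $K_{2k}=H_{\mathbf{v}_\alpha}^{[1,k]}$.

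Finally, the edges $e_c$ for $c\in\{1,\ldots,B_k\}$ are pairwise distinct (they carry distinct colors), all incident to $u_1$, and all contained in $H_{\mathbf{v}_\alpha}^{[1,k]}$. As $u_1$ has degree exactly $2k-1$ in $K_{2k}$, this forces $B_k\le 2k-1$, which is the claim. I do not expect a genuine obstacle: the one step that needs care is the containment argument, where I must invoke the ``non-appearance'' half of Remark \ref{splittedColors} to guarantee that no edge $e_c$ escapes $H_{\mathbf{v}_\alpha}^{[1,k]}$; everything else is bookkeeping on the blocks $L_{\mathbf{v}_\alpha}^{i}(\alpha)$ and a single degree count.
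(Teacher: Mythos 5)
Your proof is correct and is essentially the paper's own argument: the paper counts, at the vertex $u_1$, the left parts of the splitted matchings $F^i_j$ ($i\le k$, $j\le b_i$) supplied by the Equivalence lemma, and under that lemma's construction these left parts are exactly your edges $e_c$ for the colors $c\in\{1,\ldots,B_k\}$ (the matching $F^i_j$ corresponds to the color $B_{i-1}+j$), so your use of Remark \ref{splittedColors} is just the coloring-language phrasing of the same fact. Both proofs finish with the identical degree count: $u_1$ has only $2k-1$ neighbours in $H_{\mathbf{v}_\alpha}^{[1,k]}$, hence $\sum_{i=1}^{k}b_i\le 2k-1$.
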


\begin{proof}
According to the proof of Equivalence lemma, the left parts of the perfect matchings $F^i_j$ cover the vertex $u_1$ (and $v_1$), $i=1,2,\ldots,k$, $j=1,2,\ldots,b_i$. Moreover, 
\begin{center}
$\bigcup\limits_{i=1}^{k}
\bigcup\limits_{j=1}^{b_i}
{l^i_{\mathbf{v}_\alpha}\left(F^i_j\right)} 
\subset E\left(H_{\mathbf{v}_\alpha}^{[1,k]}\right)$
\end{center}
To complete the proof we note that the number of the perfect matchings $F^i_j$ is $\sum\limits_{i=1}^{k}{b_i}$ and the degree of the vertex $u_1$ (or $v_1$) in $H_{\mathbf{v}_\alpha}^{[1,k]}$ is $2k-1$.
\end{proof}

We will call the vector $(b_1,b_2,\ldots,b_k)$ \textit{saturated} if $\sum\limits_{i=1}^{k}{b_i} = 2k-1$.

\begin{corollary}
\label{c2n4}
If $\alpha$ is an interval edge-coloring of $K_{2n}$, $n\geq 3$, then  
$|{\rm sh}(\alpha)| \leq 2n-4$.
\end{corollary}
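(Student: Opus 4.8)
The plan is to start from the bound furnished by Lemma \ref{l2k1} and then shave off one unit by exploiting the reflection symmetry of the complete graph. Since $|{\rm sh}(\alpha)| = \sum_{i=1}^{n-1} b_i$, Lemma \ref{l2k1} applied with $k = n-1$ immediately gives $|{\rm sh}(\alpha)| \leq 2(n-1)-1 = 2n-3$. The entire content of the corollary is therefore to exclude the case of equality, i.e. to show that the shift vector cannot be saturated at its last coordinate.

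First I would argue by contradiction, assuming $\sum_{i=1}^{n-1} b_i = 2n-3$. Applying Lemma \ref{l2k1} with $k = n-2$ (a legal index precisely because $n \geq 3$) yields $\sum_{i=1}^{n-2} b_i \leq 2n-5$, and subtracting this estimate from the assumed total forces $b_{n-1} \geq 2$.

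Next I would bound $b_{n-1}$ from above using Lemma \ref{lReverse}: there exists an interval coloring $\beta$ of $K_{2n}$ whose shift vector is the reversal $(b_{n-1}, b_{n-2}, \ldots, b_1)$. Applying Lemma \ref{l2k1} to $\beta$ with $k=1$ bounds its first coordinate, namely $b_{n-1} \leq 1$. This contradicts $b_{n-1} \geq 2$, so the assumed equality is impossible and hence $|{\rm sh}(\alpha)| \leq 2n-4$.

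The step I expect to carry the real weight is recognizing that the single inequality of Lemma \ref{l2k1} at $k=n-1$ is insufficient—it only delivers $2n-3$—and that one must combine a forward partial-sum estimate (at $k=n-2$) with an endpoint estimate for $b_{n-1}$ obtained after reversing the coloring. Everything else is bookkeeping with the partial sums $B_i$, and no new combinatorial construction is required.
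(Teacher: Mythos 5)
Your proof is correct and relies on exactly the same two ingredients as the paper's: Lemma \ref{l2k1} at $k=n-2$ to get $\sum_{i=1}^{n-2} b_i \leq 2n-5$, and Lemma \ref{lReverse} combined with Lemma \ref{l2k1} at $k=1$ to get $b_{n-1} \leq 1$. The only difference is cosmetic --- the paper simply sums these two inequalities directly, whereas you package them as a proof by contradiction against the saturated value $2n-3$, which is logically equivalent but slightly more roundabout.
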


\begin{proof}
Let ${\rm sh}(\alpha) = (b_1,b_2,\ldots, b_{n-1})$. Lemma \ref{l2k1} implies that $\sum\limits_{i=1}^{n-2}{b_i} \leq 2n-5$. The same lemma in conjuction with Lemma \ref{lReverse} implies that $b_{n-1} \leq 1$. By summing these two inequalties we complete the proof.
\end{proof}

\begin{lemma}
\label{lAfterSaturated}
If ${\rm sh}(\alpha) = (b_1,b_2,\ldots, b_{n-1})$ for some interval edge-coloring $\alpha$ of $K_{2n}$ and $(b_1,b_2,\ldots,b_k)$ is saturated for some $k \in [2,n-2]$, then $b_{k+1} \leq 1$.
\end{lemma}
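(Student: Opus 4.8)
The plan is to argue by contradiction: I would assume $b_{k+1}\geq 2$ and deduce $b_1\geq 2$, which contradicts Lemma \ref{l2k1} applied with $k=1$. Everything is done inside the fixed $1$-factorization $\mathfrak{F}$ attached to $\alpha$. The engine of the proof is the structural reading of saturation. As in the proof of Lemma \ref{l2k1}, the $\sum_{i=1}^{k}b_i$ left parts $l_{\mathbf{v}_\alpha}^i(F^i_j)$ with $i\le k$ each cover $u_1$ with a distinct edge of $H_{\mathbf{v}_\alpha}^{[1,k]}$; since $\deg_{H_{\mathbf{v}_\alpha}^{[1,k]}}(u_1)=2k-1$ and saturation gives $\sum_{i=1}^k b_i=2k-1$, these are precisely \emph{all} the edges incident to $u_1$ inside $H_{\mathbf{v}_\alpha}^{[1,k]}$ (and symmetrically for $v_1$). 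Consequently no matching $F^{k+1}_j$ can use an edge at $u_1$ or at $v_1$ that lies inside $H_{\mathbf{v}_\alpha}^{[1,k]}$.

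Next I would analyse an arbitrary $F^{k+1}_j$. Its left part $l_{\mathbf{v}_\alpha}^{k+1}(F^{k+1}_j)$ is a perfect matching of $H_{\mathbf{v}_\alpha}^{[1,k+1]}=K_{2k+2}$, so the edges covering $u_1$ and $v_1$ live in $H_{\mathbf{v}_\alpha}^{[1,k+1]}$; by the previous paragraph they cannot stay inside $H_{\mathbf{v}_\alpha}^{[1,k]}$, hence each of them joins pair $1$ to pair $k+1$. Being disjoint, the two edges occupy both $u_{k+1}$ and $v_{k+1}$, so the left part matches pair $1$ entirely onto pair $k+1$ and therefore matches the remaining $2k-2$ vertices $u_2,v_2,\ldots,u_k,v_k$ among themselves, i.e. inside $H_{\mathbf{v}_\alpha}^{[2,k]}\subseteq H_{\mathbf{v}_\alpha}^{[1,k]}$. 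In particular every $F^{k+1}_j$ covers $u_2$ with an edge of $H_{\mathbf{v}_\alpha}^{[1,k]}$ (this is where $k\ge 2$ is used, so that pair $2$ actually lies inside the saturated block).

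Finally I would count the matchings of $\mathfrak{F}$ that cover $u_2$ by an edge of $H_{\mathbf{v}_\alpha}^{[1,k]}$. For $2\le i\le k$ the left part of each $F^i_j$ covers $u_2$ inside $H_{\mathbf{v}_\alpha}^{[1,i]}\subseteq H_{\mathbf{v}_\alpha}^{[1,k]}$, contributing $\sum_{i=2}^k b_i=2k-1-b_1$ such matchings, and the two matchings $F^{k+1}_1,F^{k+1}_2$ contribute two more. All of these are distinct matchings, hence assign distinct edges to $u_2$, so their number is at most $\deg_{H_{\mathbf{v}_\alpha}^{[1,k]}}(u_2)=2k-1$. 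This yields $2k-1-b_1+2\le 2k-1$, i.e. $b_1\ge 2$, the desired contradiction; thus $b_{k+1}\le 1$.

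I expect the only genuine obstacle to be the second step: rigorously pinning down that saturation at level $k$ \emph{forces} each $(k+1)$-splitted matching to displace all of pair $1$ onto pair $k+1$, and hence to keep $u_2$ matched inside $H_{\mathbf{v}_\alpha}^{[1,k]}$. Once this displacement is secured, the final count is purely degree-theoretic, and the two hypotheses on $k$ fall out naturally: $k\ge 2$ guarantees that pair $2$ sits inside the saturated block, while $k\le n-2$ is exactly the condition ensuring that $b_{k+1}$ is a legitimate entry of the shift vector.
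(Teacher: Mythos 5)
Your proposal is correct and follows essentially the same route as the paper: saturation forces the matchings $F^i_j$, $i\le k$, to exhaust all edges at $u_1,v_1$ inside $H_{\mathbf{v}_\alpha}^{[1,k]}$, so the $(k+1)$-splitted matchings must match pair $1$ onto pair $k+1$ and hence cover $u_2$ inside $H_{\mathbf{v}_\alpha}^{[1,k]}$, and the degree count at $u_2$ then contradicts $b_1\le 1$ from Lemma \ref{l2k1}. The only cosmetic difference is that the paper first notes $b_{k+1}\le 2$ and assumes $b_{k+1}=2$, whereas you work directly from $b_{k+1}\ge 2$ and phrase the final contradiction as $b_1\ge 2$; the substance is identical.
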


\begin{proof}
Lemma \ref{l2k1} implies that $b_{k+1} \leq 2$. To complete the proof we need to show that $b_{k+1} \neq 2$. Suppose the contrary, $b_{k+1} = 2$. $(b_1,b_2,\ldots,b_k)$ is saturated, so the proof of Lemma \ref{l2k1} implies that the edges $u_1x_i$ and $v_1x_i$, $x \in \{u,v\}$, $i=2,3,\ldots,k$, belong to the perfect matchings $F^i_j$, $i=1,2,\ldots,k$, $j=1,2,\ldots,b_i$. Similarly, the edges $u_1u_{k+1}$, $u_1v_{k+1}$, $v_1u_{k+1}$ and $v_1v_{k+1}$ must be covered by $F^{k+1}_1$ and $F^{k+1}_2$.

Now we look at the vertex $u_2$. It is covered by the left parts of the perfect matchings $F^i_j$, $i=2,3,\ldots,k$, $j=1,2,\ldots,b_i$. In total these matchings cover all but $2k-1 - \sum\limits_{i=2}^{k}{b_i} = b_1$ edges incident to $u_2$ in the subgraph $H_{\mathbf{v}_\alpha}^{[1,k]}$. Lemma \ref{l2k1} implies that $b_1 \leq 1$, so at most one edge is left uncovered. The vertex $u_2$ must be covered by the left parts of $F^{k+1}_1$ and $F^{k+1}_2$ as well. The edges $u_2u_{k+1}$ and $u_2v_{k+1}$ cannot be used as the vertices $u_{k+1}$ and $v_{k+1}$ are already covered by $F^{k+1}_1$ and $F^{k+1}_2$. Therefore, at most one edge remains for these two matchings, which is a contradiction.
\end{proof}

\begin{corollary}
\label{c2n5}
If $\alpha$ is an interval edge-coloring of $K_{2n}$, $n\geq 5$, then  
$|{\rm sh}(\alpha)| \leq 2n-5$.
\end{corollary}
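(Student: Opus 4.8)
The plan is to argue by contradiction. Since Corollary \ref{c2n4} already gives $|{\rm sh}(\alpha)| \leq 2n-4$, I would assume $|{\rm sh}(\alpha)| = 2n-4$ and show that this extremal value forces a rigid structure on the shift vector ${\rm sh}(\alpha)=(b_1,\ldots,b_{n-1})$ which Lemma \ref{lAfterSaturated} then rules out. Throughout I write $B_k=\sum_{i=1}^{k}b_i$.

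First I would pin down the equality case. Writing $|{\rm sh}(\alpha)| = B_{n-2}+b_{n-1}$, Lemma \ref{l2k1} with $k=n-2$ gives $B_{n-2}\leq 2n-5$, while applying Lemma \ref{l2k1} with $k=1$ to the reversed coloring produced by Lemma \ref{lReverse} gives $b_{n-1}\leq 1$. Hence $|{\rm sh}(\alpha)|\leq 2n-4$, and equality forces both bounds to be tight: $(b_1,\ldots,b_{n-2})$ is saturated and $b_{n-1}=1$. Next I would squeeze the tail by reusing the reversal: applying Lemma \ref{l2k1} with $k=2$ to the reversed coloring yields $b_{n-1}+b_{n-2}\leq 3$, so $b_{n-2}\leq 2$. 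Combining this with $B_{n-2}=2(n-2)-1$ gives $B_{n-3}=B_{n-2}-b_{n-2}\geq 2(n-3)-1$, which together with the upper bound $B_{n-3}\leq 2(n-3)-1$ of Lemma \ref{l2k1} forces $B_{n-3}=2(n-3)-1$ and $b_{n-2}=2$. In other words, $(b_1,\ldots,b_{n-3})$ is also saturated.

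The argument then closes immediately: since $n\geq 5$ we have $n-3\in[2,n-2]$, so Lemma \ref{lAfterSaturated} applies to the saturated prefix $(b_1,\ldots,b_{n-3})$ and gives $b_{n-2}\leq 1$, contradicting $b_{n-2}=2$. The only delicate point is the bookkeeping with the partial sums $B_k$ relative to the bound $2k-1$; the real content is the double use of Lemma \ref{lReverse}, which transfers the prefix inequality of Lemma \ref{l2k1} onto the last two coordinates and lets me back out saturation one step earlier. The hypothesis $n\geq 5$ enters exactly at the final step, where $n-3\geq 2$ is needed for Lemma \ref{lAfterSaturated} to be applicable, which is consistent with the bound failing for $K_8$.
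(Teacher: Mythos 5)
Your proof is correct and uses exactly the same ingredients as the paper's own proof: Lemma \ref{l2k1}, the reversal trick of Lemma \ref{lReverse} to control $b_{n-2}+b_{n-1}$, and Lemma \ref{lAfterSaturated} applied to a saturated prefix of length $n-3$, which is precisely where the hypothesis $n\geq 5$ enters in both arguments. The paper merely packages this as a direct two-case analysis on whether $\sum_{i=1}^{n-3}b_i$ equals $2n-7$ or is at most $2n-8$, whereas you run the equivalent argument by contradiction from the equality case of Corollary \ref{c2n4}; the mathematical content is the same.
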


\begin{proof}
Let ${\rm sh}(\alpha) = (b_1,b_2,\ldots, b_{n-1})$. Lemma \ref{l2k1} implies that $\sum\limits_{i=1}^{n-3}{b_i} \leq 2n-7$. We consider two cases.
\begin{description}
\item{Case 1:} $\sum\limits_{i=1}^{n-3}{b_i} = 2n-7$. Lemma \ref{lAfterSaturated} implies that $b_{n-2} \leq 1$. Lemmas \ref{lReverse} and \ref{l2k1} imply that $b_{n-1} \leq 1$. The sum of these inequalities proves the required bound.
\item{Case 2:} $\sum\limits_{i=1}^{n-3}{b_i} \leq 2n-8$. Lemmas \ref{lReverse} and \ref{l2k1} imply that $b_{n-2} + b_{n-1} \leq 3$. The sum of these inequalities completes the proof.
\end{description}
\end{proof}

\begin{lemma}
\label{lBeforeSaturated}
If ${\rm sh}(\alpha) = (b_1,b_2,\ldots, b_{n-1})$ for some interval edge-coloring $\alpha$ of $K_{2n}$ and $(b_1,b_2,\ldots,b_k)$ is saturated for some $k \in [3,n-1]$, then $b_{k} \geq 3$.
\end{lemma}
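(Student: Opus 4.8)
The plan is to derive this lemma as a short reduction to Lemma~\ref{lAfterSaturated}, after first disposing of the trivial lower bound. I would begin by establishing $b_k \geq 2$ outright. Since $(b_1,\ldots,b_k)$ is saturated we have $\sum_{i=1}^{k}b_i = 2k-1$, while Lemma~\ref{l2k1} applied at the index $k-1$ gives $\sum_{i=1}^{k-1}b_i \leq 2(k-1)-1 = 2k-3$. Subtracting the second relation from the first yields $b_k \geq 2$ immediately, so the only value remaining to be excluded is $b_k = 2$.

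To rule that out I would argue by contradiction. Suppose $b_k = 2$. Together with the saturation of $(b_1,\ldots,b_k)$ this forces $\sum_{i=1}^{k-1}b_i = (2k-1) - 2 = 2(k-1)-1$, which is exactly the assertion that the shorter prefix $(b_1,\ldots,b_{k-1})$ is itself saturated. The hypothesis $k \in [3,n-1]$ translates precisely into $k-1 \in [2,n-2]$, so Lemma~\ref{lAfterSaturated} applies to the saturated prefix $(b_1,\ldots,b_{k-1})$ and delivers $b_{(k-1)+1} = b_k \leq 1$. This contradicts $b_k = 2$, and hence $b_k \geq 3$.

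The only delicate point — and the closest thing to an obstacle here — is the index bookkeeping: one must check that the shifted invocation of Lemma~\ref{lAfterSaturated} stays within its admissible window, which is exactly what the two ends of the hypothesis secure, $k \geq 3$ guaranteeing $k-1 \geq 2$ and $k \leq n-1$ guaranteeing $k-1 \leq n-2$. Were the reduction route unavailable, the fall-back would be to replay the covering argument from the proof of Lemma~\ref{lAfterSaturated} directly: under $b_k = 2$ the four edges joining $\{u_1,v_1\}$ to $\{u_k,v_k\}$ must all be absorbed by the two $k$-splitted matchings $F^k_1, F^k_2$, and tracking how the left parts of these matchings together with those of the $F^i_j$ with $i<k$ are forced to cover the vertex $u_2$ inside $H_{\mathbf{v}_\alpha}^{[1,k]}$ would reproduce the same contradiction. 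However, the reduction to Lemma~\ref{lAfterSaturated} is far shorter, and it is the argument I would present.
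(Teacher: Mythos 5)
Your proof is correct and is essentially the paper's own argument: the paper likewise splits into the cases $b_k \leq 1$ (excluded by Lemma~\ref{l2k1} applied to the prefix of length $k-1$) and $b_k = 2$ (excluded because the prefix $(b_1,\ldots,b_{k-1})$ would then be saturated, contradicting Lemma~\ref{lAfterSaturated}). Your reorganization into ``first show $b_k \geq 2$, then rule out $b_k=2$'' is the same logic in a different order, and your explicit check that $k-1 \in [2,n-2]$ is a welcome bit of care the paper leaves implicit.
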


\begin{proof}
Suppose the contrary, $b_k \leq 2$. If $b_k=2$, then the vector $(b_1,b_2,\ldots,b_{k-1})$ is also saturated, and we obtain contradiction with Lemma \ref{lAfterSaturated}. If $b_k \leq 1$, then we have $\sum\limits_{i=1}^{k-1}{b_i} \geq 2k-2$ which contradicts Lemma \ref{l2k1}. 
\end{proof}

\begin{lemma}
\label{lEdgeCount}
If ${\rm sh}(\alpha) = (b_1,b_2,\ldots, b_{n-1})$ for some interval edge-coloring $\alpha$ of $K_{2n}$ and $k \in [2,n-2]$, then
\begin{center}
$k(2k-1) \geq \sum\limits_{i=1}^{k}{ib_i} + \sum\limits_{i=k+1}^{\min\{2k-1,n-1\}}{(2k-i)b_i}$.
\end{center}
\end{lemma}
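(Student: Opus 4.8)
The plan is to count the edges of the complete subgraph $H_{\mathbf{v}_\alpha}^{[1,k]}$ induced by the first $k$ pairs of vertices in two different ways. On one hand, this subgraph is isomorphic to $K_{2k}$, so it has exactly $\binom{2k}{2} = k(2k-1)$ edges. On the other hand, since $\mathfrak{F}$ is a 1-factorization, its members are pairwise edge-disjoint and partition $E(K_{2n})$, and therefore also partition $E\left(H_{\mathbf{v}_\alpha}^{[1,k]}\right)$; hence
\begin{center}
$k(2k-1) = \sum\limits_{F \in \mathfrak{F}} \left| F \cap E\left(H_{\mathbf{v}_\alpha}^{[1,k]}\right) \right|$.
\end{center}
The strategy is to bound each summand from below and to check that the resulting lower bounds add up precisely to the right-hand side of the claimed inequality.

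For the splitted matchings I would distinguish two cases according to how $i$ compares with $k$, using throughout that for an $i$-splitted matching no edge crosses the line after the $i$-th pair, so its left part $l_{\mathbf{v}_\alpha}^i(F^i_j)$ is a perfect matching on the first $2i$ vertices. If $i \leq k$, then $l_{\mathbf{v}_\alpha}^i(F^i_j)$ consists of $i$ edges lying entirely inside $H_{\mathbf{v}_\alpha}^{[1,k]}$, so $F^i_j$ contributes at least $i$ edges; this accounts for the term $\sum_{i=1}^{k} i b_i$. If $i > k$, I count how many edges of $l_{\mathbf{v}_\alpha}^i(F^i_j)$ stay within the first $2k$ vertices. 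The hard (but short) step is exactly this estimate: among the $2k$ vertices of $H_{\mathbf{v}_\alpha}^{[1,k]}$, at most $2(i-k)$ of them can be matched to vertices outside this block, so at least $2k - 2(i-k) = 4k-2i$ of them are matched internally, which yields at least $2k-i$ internal edges whenever $k < i \leq 2k-1$, and a vacuous bound once $i \geq 2k$. This produces the term $\sum_{i=k+1}^{\min\{2k-1,n-1\}} (2k-i) b_i$.

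Finally, the non-splitted matchings $F^0_j$ and the splitted matchings with $i \geq 2k$ contribute a nonnegative number of edges to $H_{\mathbf{v}_\alpha}^{[1,k]}$, so dropping them only weakens the bound. Summing the contributions of all $i$-splitted matchings $F^i_j$ (there being $b_i$ of them for each $i$) then gives
\begin{center}
$k(2k-1) \geq \sum\limits_{i=1}^{k} i b_i + \sum\limits_{i=k+1}^{\min\{2k-1,n-1\}} (2k-i) b_i$,
\end{center}
which is exactly the claim. I expect the only genuinely delicate point to be the crossing-edge count for $i > k$; the remainder is a direct consequence of edge-disjointness of a 1-factorization together with the fact that the left part of an $i$-splitted matching is a perfect matching on the first $2i$ vertices.
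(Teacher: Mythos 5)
Your proof is correct and follows essentially the same route as the paper: double-counting the $k(2k-1)$ edges of $H_{\mathbf{v}_\alpha}^{[1,k]}$ against the edge-disjoint left parts of the splitted matchings, with the full contribution $i$ for $i\le k$ and the bound $2k-i$ for $k<i\le\min\{2k-1,n-1\}$ obtained from the fact that at most $2(i-k)$ vertices of the first block can be matched into the middle block. Your vertex-counting phrasing of that last estimate is in fact slightly cleaner than the paper's edge-based wording, but it is the same argument.
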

\begin{proof}
We consider the subgraph $H_{\mathbf{v}_\alpha}^{[1,k]}$. The number of edges in the subgraph is $k(2k-1)$. The left part of each of the perfect matchings $F^i_j$, $i=1,2,\ldots,k$, $j=1,2,\ldots,b_i$, consists of $i$ edges, and all of them belong to the subgraph $H_{\mathbf{v}_\alpha}^{[1,k]}$. The number of such edges is $\sum\limits_{i=1}^{k}{ib_i}$. 

Now we fix an $i \in [k+1,r]$, where $r$ denotes $\min\{2k-1,n-1\}$. The left part of each of the perfect matchings $F_j^i$, $j=1,2,\ldots,b_i$, consists of $i$ edges. At most $2i-2k$ of them can join some vertex from $H_{\mathbf{v}_\alpha}^{[1,k]}$ with some vertex from $H_{\mathbf{v}_\alpha}^{[k+1,i]}$. So at least $2k-i$ edges belong to the subgraph $H_{\mathbf{v}_\alpha}^{[1,k]}$. The number of such edges is at least $\sum\limits_{i=k+1}^{r}{(2k-i)b_i}$.
\end{proof}

Lemma \ref{lEdgeCount} implies that if for some fixed $k_0$ there are many $i$-splitted perfect matchings where $i \leq k_0$, then there cannot be too many $i'$-splitted perfect matchings where $k_0 < i' < \min\{2k_0-1,n-1\}$. In order to use this lemma we need to bound the sum $\sum\limits_{i=1}^{k}{ib_i}$ from below.

For the numbers $k\in \mathbb{N}$ and $r \in \mathbb{Z}_+$ we define the following:
\begin{align*}
&T_k = \left\{\left(b_1,b_2,\ldots,b_k\right)\ |\ \exists \alpha \text{ interval coloring of }K_{2n},\ n>k,\ {\rm sh}(\alpha)=(b_1,b_2,\ldots,b_{n-1})\right\} \\
&m(k,r) = \min\limits_{(b_1,b_2,\ldots,b_{k}) \in T_k}\left\{\sum\limits_{i=1}^{k}{ib_i}\ |\ \sum\limits_{i=1}^{k}{b_i}=r\right\}
\end{align*}
Note that $m(k,r)$ is not defined for all pairs $(k,r)$. For example, Lemma \ref{l2k1} implies that there are no interval colorings of $K_{2n}$ for which $\sum\limits_{i=1}^{k}{b_i}=r$ if $r \geq 2k$. It is obvious that $m(1,1)=1$ and $m(k,0)=0$, $k \in \mathbb{N}$.

\begin{remark}
In order to calculate $m(k,r)$, $k>1$, $r>1$, it is sufficient to take the minimum over those $(b_1,b_2,\ldots,b_k) \in T_k$ for which $\sum\limits_{i=1}^{k-1}{ib_i} = m(k-1,r-b_k)$.
\end{remark}

Table \ref{tableMkr} lists the values of $m(k,r)$ for $k\leq 4$ and $r \leq 7$. For example, $m(3,5)$ is calculated as follows. According to the above remark the possible candidate vectors from $T_3$ are $(1,2,2)$, $(1,1,3)$, $(1,0,4)$ and $(0,0,5)$. Lemma \ref{lAfterSaturated} implies that $(1,2,2) \notin T_3$. The coloring of $K_{12}$ in Fig. \ref{fK12} proves that $(1,1,3) \in T_3$. On the other hand, sum $b_1 + 2b_2 + 3b_3$ is larger for the other two candidate vectors, so $m(3,5) = 12$. Similarly we show that $m(4,7)=20$ and the minimum is achieved on the vector $(1,2,1,3)$, which clearly belongs to $T_4$ as illustrated in the coloring of $K_{22}$ in Fig. \ref{fK22}. By applying Lemma \ref{lLessColors} to these two colorings we prove that all the other vectors listed in the Table \ref{tableMkr} belong to the corresponding $T_k$'s.

\begin{figure}
\centering
\includegraphics[width=0.7\textwidth]{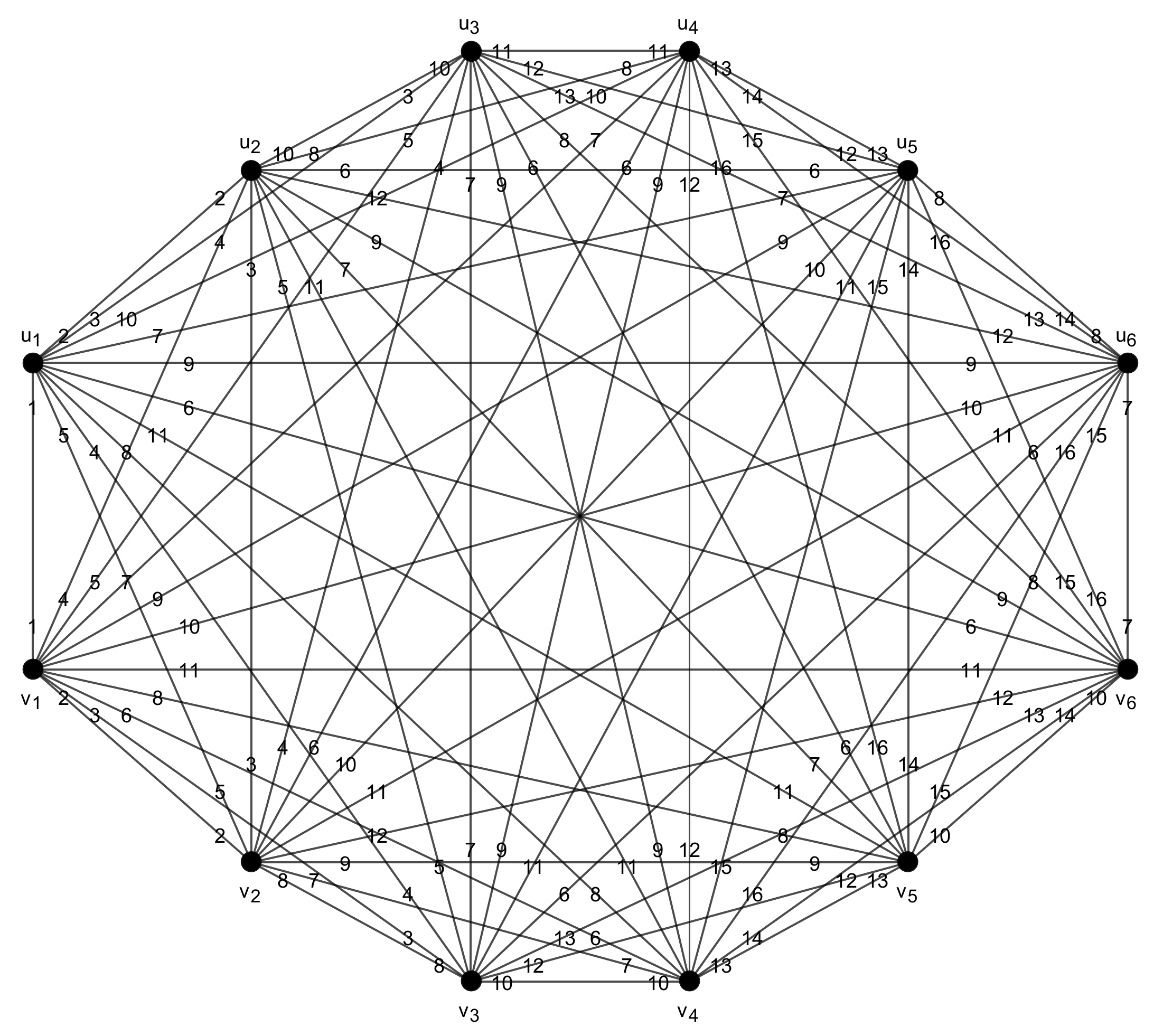}
\caption{Interval $16$-coloring of $K_{12}$ with a shift vector $(1,1,3,0,0)$.}
\label{fK12}
\end{figure}

\begin{table}[t!]
\centering
\newcolumntype{C}{>{\centering\arraybackslash}X}%

\begin{tabularx}{0.8\textwidth}{c||C|C|C|C|}
\backslashbox{$r$}{$k$} & $1$ & $2$ & $3$ & $4$ \\ \hline\hline
$0$
& \begin{tabular}{c}$0$ \\ $(0)$ \end{tabular} 
& \begin{tabular}{c}$0$ \\ $(0,0)$ \end{tabular} 
& \begin{tabular}{c}$0$ \\ $(0,0,0)$ \end{tabular}  
& \begin{tabular}{c}$0$ \\ $(0,0,0,0)$ \end{tabular} \\ \hline

$1$ 
& \begin{tabular}{c}$1$ \\ $(1)$ \end{tabular} 
& \begin{tabular}{c}$1$ \\ $(1,0)$ \end{tabular} 
& \begin{tabular}{c}$1$ \\ $(1,0,0)$ \end{tabular} 
& \begin{tabular}{c}$1$ \\ $(1,0,0,0)$ \end{tabular} \\ \hline

$2$
&
& \begin{tabular}{c}$3$ \\ $(1,1)$ \end{tabular}  
& \begin{tabular}{c}$3$ \\ $(1,1,0)$ \end{tabular}  
& \begin{tabular}{c}$3$ \\ $(1,1,0,0)$ \end{tabular} \\ \hline

$3$
&
& \begin{tabular}{c}$5$ \\ $(1,2)$ \end{tabular}  
& \begin{tabular}{c}$5$ \\ $(1,2,0)$ \end{tabular} 
& \begin{tabular}{c}$5$ \\ $(1,2,0,0)$ \end{tabular} \\ \hline

$4$
&
&
& \begin{tabular}{c}$8$ \\ $(1,2,1)$ \end{tabular} 
& \begin{tabular}{c}$8$ \\ $(1,2,1,0)$ \end{tabular} \\ \hline

$5$
&
&
& \begin{tabular}{c}$12$ \\ $(1,1,3)$ \end{tabular} 
& \begin{tabular}{c}$12$ \\ $(1,2,1,1)$ \end{tabular} \\ \hline

$6$
&
&
&  
& \begin{tabular}{c}$16$ \\ $(1,2,1,2)$ \end{tabular} \\ \hline

$7$
&
&
&  
& \begin{tabular}{c}$20$ \\ $(1,2,1,3)$ \end{tabular} \\ \hline
\end{tabularx}

\caption{
	The values of $m(k,r)$. The first row of each of the cells displays the value of $m(k,r)$. The second row contains some vector $(b_1,b_2,\ldots,b_k) \in T_k$ for which $\sum\limits_{i=1}^{k}{ib_i} = m(k,r)$.
}
\label{tableMkr}
\end{table}

\begin{lemma}
\label{l2n6}
If $\alpha$ is an interval edge-coloring of $K_{2n}$, $n\geq 9$, then  
$|{\rm sh}(\alpha)| \leq 2n-6$.
\end{lemma}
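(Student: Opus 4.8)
The plan is to argue by contradiction. By Corollary \ref{c2n5} we already know $|{\rm sh}(\alpha)| \le 2n-5$, so it suffices to rule out $|{\rm sh}(\alpha)| = 2n-5$. Assuming this equality, write $B_j = \sum_{i=1}^j b_i$. Lemma \ref{l2k1} gives $B_j \le 2j-1$, while Lemma \ref{lReverse} together with Lemma \ref{l2k1} applied to the tail gives $B_{n-1}-B_j = \sum_{i=j+1}^{n-1} b_i \le 2(n-1-j)-1$; subtracting this from $B_{n-1}=2n-5$ yields $B_j \ge 2j-2$. Hence for every $1 \le j \le n-2$ the partial sum is pinned, $B_j \in \{2j-2,\,2j-1\}$, so the \emph{deficit} $\delta_j := 2j-1-B_j$ is a $0/1$ variable. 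This reduces the whole question to a statement about the binary string $\delta_1\delta_2\cdots\delta_{n-2}$, since $b_1 = 1-\delta_1$ and $b_j = 2-\delta_j+\delta_{j-1}$ for $j \ge 2$.

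First I would convert Lemma \ref{lEdgeCount} into a linear inequality on the $\delta_j$'s. For $k$ with $2k-1 \le n-2$, the right-hand side of Lemma \ref{lEdgeCount} is $\sum_{i=1}^{2k-1} w_i b_i$ with the symmetric ``tent'' weights $w_i = k-|i-k|$. Substituting $b_i = B_i-B_{i-1}$ and summing (the baseline $\delta \equiv 0$ contributes $2k^2-1$, and flipping a single $\delta_j$ to $1$ changes the sum by $w_{j+1}-w_j$) gives the identity
\begin{equation*}
\sum_{i=1}^{2k-1} w_i b_i = 2k^2 - 1 + \sum_{j=1}^{k-1}\delta_j - \sum_{j=k}^{2k-1}\delta_j .
\end{equation*}
Since Lemma \ref{lEdgeCount} bounds the left side by $k(2k-1)=2k^2-k$, it becomes
\begin{equation*}
(\star_k)\colon\qquad \sum_{j=k}^{2k-1}\delta_j - \sum_{j=1}^{k-1}\delta_j \ge k-1 ,
\end{equation*}
valid for every $k$ with $2 \le k$ and $2k-1 \le n-2$; in particular $(\star_4)$ is available exactly when $n \ge 9$, which is where the hypothesis enters.

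Next I would read off the front of the string. From $(\star_4)$ one gets $\delta_1+\delta_2+\delta_3 \le 1$, and from $(\star_2)$ one gets $\delta_2+\delta_3 \ge 1+\delta_1$; together these force $\delta_1=0$ and $\delta_2+\delta_3=1$. Feeding $\delta_2=1$ into $(\star_3)$ would give $\delta_4+\delta_5 \ge 3$, which is impossible, so in fact $(\delta_1,\delta_2,\delta_3)=(0,0,1)$. Plugging this back into $(\star_4)$ forces $\delta_4=\delta_5=\delta_6=\delta_7=1$. For $n \ge 11$ the inequality $(\star_5)$ is also available, and it now reads $\delta_5+\delta_6+\delta_7+\delta_8+\delta_9 \ge 4+(\delta_1+\delta_2+\delta_3+\delta_4)=6$, impossible since its left side has only five $0/1$ terms.

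The remaining obstacle, which I expect to be the delicate part, is the two smallest cases $n \in \{9,10\}$, where $(\star_5)$ is not available. Here I would invoke the reverse coloring $\beta$ from Lemma \ref{lReverse}: it also has total shift $2n-5$, so the same front analysis applies to its deficit sequence $\delta'$ and yields $\delta'_3 = 1$. A short computation relating the suffix sums of $\mathrm{sh}(\alpha)$ to the prefix sums of $\mathrm{sh}(\beta)$ gives $\delta'_3 = 1-\delta_{n-4}$, hence $\delta_{n-4}=0$; but for $n=9$ and $n=10$ this contradicts the already-established values $\delta_5=1$ and $\delta_6=1$, respectively. Thus $|{\rm sh}(\alpha)|=2n-5$ is impossible for all $n \ge 9$, giving the claimed bound. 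Beyond this small-case handling, the only points needing care are the bookkeeping in the tent-weight identity and keeping track of exactly which $(\star_k)$ are valid in each range of $n$; everything else is short linear arithmetic on a $0/1$ vector.
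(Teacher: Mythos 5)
Your proof is correct --- I checked the tent-weight identity (Abel summation gives $\sum_{i=1}^{2k-1} w_i b_i = \sum_{i=k}^{2k-1}B_i - \sum_{i=1}^{k-1}B_i = 2k^2-1+\sum_{j=1}^{k-1}\delta_j - \sum_{j=k}^{2k-1}\delta_j$), the validity range of each $(\star_k)$, the forcing of $(\delta_1,\ldots,\delta_7)=(0,0,1,1,1,1,1)$, the $(\star_5)$ contradiction for $n\geq 11$, and the duality $\delta'_j = 1-\delta_{n-1-j}$ closing $n\in\{9,10\}$ --- but it is organized quite differently from the paper's proof. The paper also assumes $|{\rm sh}(\alpha)|\geq 2n-5$ and also gets its contradictions from Lemma \ref{lEdgeCount} at $k=3$ and $k=4$, but it splits into three cases according to the tail sum $\sum_{i=5}^{n-1}b_i$ and, within the cases, leans on the saturation lemmas (Lemmas \ref{lAfterSaturated} and \ref{lBeforeSaturated}) together with the extremal values $m(3,5)=12$ and $m(4,7)=20$ from Table \ref{tableMkr} to bound the right-hand side of Lemma \ref{lEdgeCount} from below. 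You replace all of that machinery by the observation that, in the extremal situation, every partial sum is pinned to $B_j\in\{2j-2,2j-1\}$ (Lemma \ref{l2k1} plus Lemma \ref{lReverse}), so the whole shift vector is encoded by $0/1$ deficits and Lemma \ref{lEdgeCount} collapses to the clean linear constraints $(\star_k)$; the only external inputs are Corollary \ref{c2n5}, Lemmas \ref{l2k1}, \ref{lReverse} and \ref{lEdgeCount}. What your route buys: it avoids $m(k,r)$ and the saturation lemmas entirely (so no dependence on the candidate-vector bookkeeping behind Table \ref{tableMkr}), it makes the role of the hypothesis $n\geq 9$ completely explicit (it is exactly the availability of $(\star_4)$), and the endgame is mechanical $0/1$ arithmetic; the reversal duality $\delta'_j=1-\delta_{n-1-j}$ for the two smallest cases is a genuinely nice touch absent from the paper. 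What the paper's route buys: the lemmas and the function $m(k,r)$ it invokes are shared infrastructure, reused for Corollary \ref{c2n5} and for Lemma \ref{lK14} ($W(K_{14})=21$), so globally the paper pays for that machinery once, whereas your argument is a self-contained replacement for this one lemma only.
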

\begin{proof}
Suppose the contrary, $|{\rm sh}(\alpha)| \geq 2n-5$. Lemmas \ref{lReverse} and \ref{l2k1} imply that $\sum\limits_{i=5}^{n-1}{b_i} \leq 2n-11$. We consider three cases.
\begin{description}
\item{Case 1:} $\sum\limits_{i=5}^{n-1}{b_i} = 2n-11$. Lemmas \ref{lReverse}, \ref{lBeforeSaturated} and \ref{lAfterSaturated} imply that $b_{5} \geq 3$ and $b_{4} \leq 1$. We apply Lemma \ref{l2k1} for $k=3$ to show that $b_{1} + b_{2} + b_{3} = 5$ and $b_{4}=1$. Then we apply Lemma \ref{lEdgeCount} for $k=3$. The left part of the inequality is $15$. On the right side we have $\sum\limits_{i=1}^{3}{ib_i} \geq m(3,5)=12$ and $\sum\limits_{i=4}^{5}{(6-i)b_i} \geq 5$. These inequalities contradict Lemma \ref{lEdgeCount}.

\item{Case 2:} $\sum\limits_{i=5}^{n-1}{b_i} = 2n-12$. Lemma \ref{l2k1} implies that $(b_1,b_2,b_3,b_4)$ is saturated. Lemma \ref{lAfterSaturated} implies that $b_5 \leq 1$. Therefore, $(b_{n-1},b_{n-2},\ldots,b_6)$ is saturated and $b_5=1$. Lemma \ref{lBeforeSaturated} implies that $b_6 \geq 3$. Now we apply Lemma \ref{lEdgeCount} for $k=4$. The left part of the inequality is $28$. On the right side, $\sum\limits_{i=1}^{4}{ib_i} \geq m(4,7)=20$ and $\sum\limits_{i=5}^{7}{(8-i)b_i} \geq 9$. These inequalities contradict Lemma \ref{lEdgeCount}.

\item{Case 3:} $\sum\limits_{i=5}^{n-1}{b_i} \leq 2n-13$. Lemma \ref{l2k1} implies that $\sum\limits_{i=1}^{4}{b_i} \leq 7$. By summing these two inequalities we obtain a contradiction.
\end{description}
\end{proof}

\begin{figure}
\centering
\includegraphics[width=\textwidth]{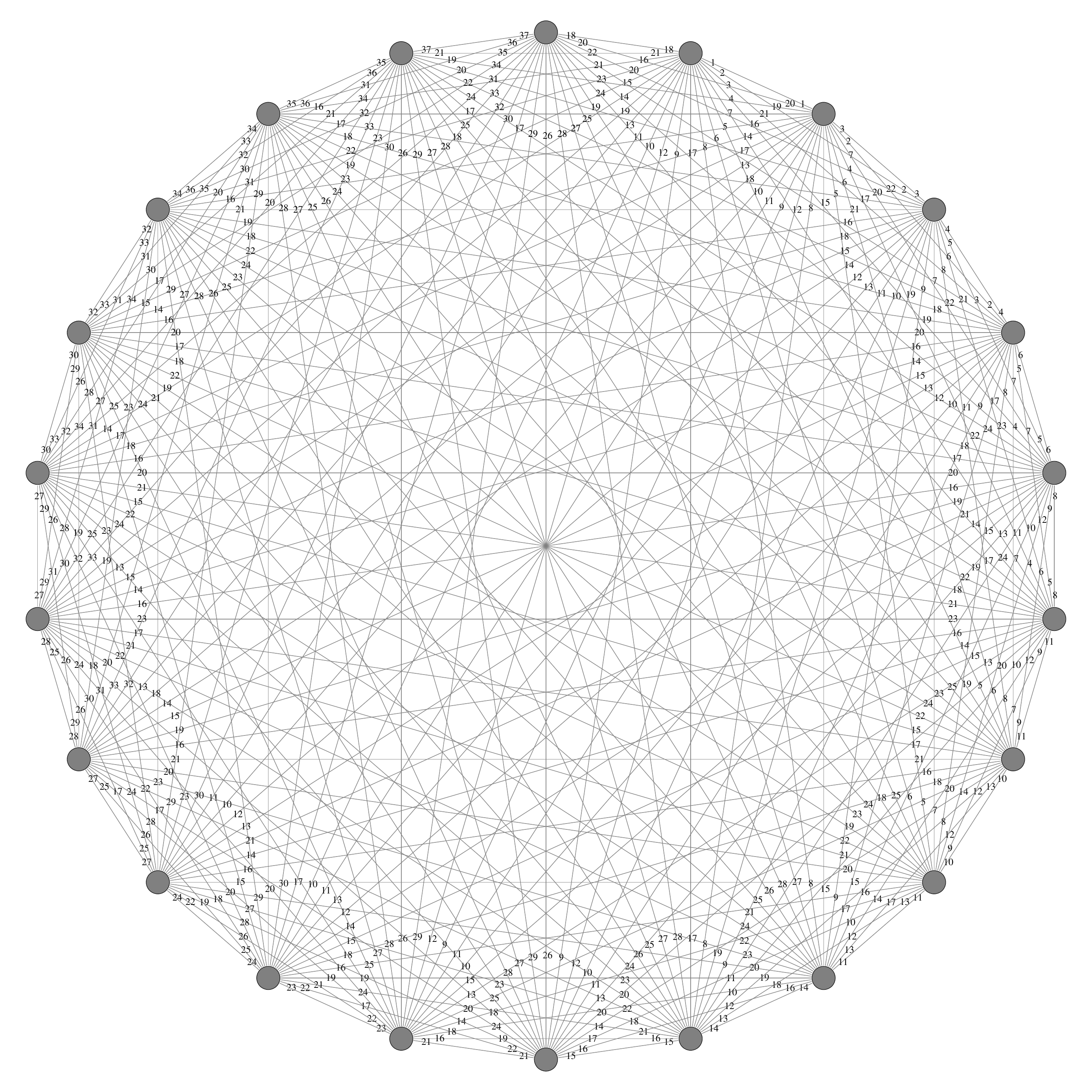}
\caption{Interval $37$-coloring of $K_{22}$ with a shift vector $(1,2,1,3,1,1,3,1,2,1)$.}
\label{fK22}
\end{figure}

Corollaries \ref{c2n4}, \ref{c2n5}, Lemma \ref{l2n6} and Remark \ref{totalShift} imply the following upper bound on $W(K_{2n})$.

\begin{theorem}
\label{tUpper}
If $n \geq 3$, then
\begin{center}
$W(K_{2n}) \leq \left\{
\begin{tabular}{ll}
$4n-5$, & if $n \geq 3$,\\
$4n-6$, & if $n \geq 5$,\\
$4n-7$, & if $n \geq 9$.\\
\end{tabular}
\right.$
\end{center}
\end{theorem}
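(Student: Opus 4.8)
The plan is to read the theorem directly off the three total-shift bounds already established, using the total-shift identity of Remark \ref{totalShift}. Since $W(K_{2n})$ is by definition the largest $t$ for which $K_{2n}$ admits an interval $t$-coloring, it suffices to produce a uniform upper bound on $t$ over all such colorings.

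First I would fix an arbitrary interval $t$-coloring $\alpha$ of $K_{2n}$ and write its shift vector as ${\rm sh}(\alpha) = (b_1, b_2, \ldots, b_{n-1})$. Remark \ref{totalShift} gives the identity $t = 2n - 1 + |{\rm sh}(\alpha)|$, which reduces the whole problem to bounding the total shift $|{\rm sh}(\alpha)|$ from above---exactly what the three preceding results supply.

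Then the three regimes follow by substitution. For $n \geq 3$, Corollary \ref{c2n4} gives $|{\rm sh}(\alpha)| \leq 2n-4$, so $t \leq (2n-1)+(2n-4) = 4n-5$. For $n \geq 5$, Corollary \ref{c2n5} sharpens this to $|{\rm sh}(\alpha)| \leq 2n-5$, whence $t \leq 4n-6$. For $n \geq 9$, Lemma \ref{l2n6} gives $|{\rm sh}(\alpha)| \leq 2n-6$, whence $t \leq 4n-7$. As each inequality holds for every interval coloring $\alpha$, it holds in particular for one attaining $W(K_{2n})$, which completes the proof.

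I expect no genuine obstacle in this final step: all the difficulty has already been absorbed into the earlier lemmas. The real work lay in Lemma \ref{l2n6}, where the edge-counting inequality of Lemma \ref{lEdgeCount} is combined with the explicit minimal values $m(3,5)=12$ and $m(4,7)=20$ read off from Table \ref{tableMkr} (themselves certified by the sample colorings of $K_{12}$ and $K_{22}$). Given those inputs, the present theorem is purely an assembly step, merely recording which of the three nested bounds is sharpest in each range of $n$.
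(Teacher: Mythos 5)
Your proposal is correct and follows exactly the paper's own route: the authors state Theorem \ref{tUpper} as an immediate consequence of Corollaries \ref{c2n4}, \ref{c2n5}, Lemma \ref{l2n6} and the identity $t = 2n-1+|{\rm sh}(\alpha)|$ from Remark \ref{totalShift}, which is precisely your substitution argument. You also correctly identify that the substantive work lives in the earlier lemmas, not in this assembly step.
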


\section{More exact values and an improved lower bound}
\label{sExact}

The lower bound on $W(K_{2n})$ from Corollary \ref{cLower} depends on the values $W(K_{2p})$ where $p$ is a prime number. For $p=2$ and $p=3$ the exact values of $W(K_{2p})$ were known before \cite{Petrosyan2010}. For $p=5$ the lower bound from Theorem \ref{t35n3} coincides with the upper bound from Theorem \ref{tUpper}. The case $p=7$ is resolved by the lemma below. Finally, for the case $p=11$, the upper bound from Theorem \ref{tUpper} is achieved by the interval $37$-coloring of $K_{22}$ shown in Fig. \ref{fK22}. This coloring also rejects Conjecture \ref{conjLog}, which predicts that $W(K_{22})=36$.

\begin{lemma}
\label{lK14}
$W(K_{14}) = 21$.
\end{lemma}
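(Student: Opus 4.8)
The lower bound is immediate and requires no new construction: applying Theorem \ref{t35n3} with $n=7$ gives $W(K_{14}) \geq \lfloor 3.5\cdot 7\rfloor - 3 = 21$. So the entire content of the lemma is the matching upper bound $W(K_{14}) \leq 21$. By the Equivalence theorem together with Remark \ref{totalShift}, this is the same as showing that every interval coloring $\alpha$ of $K_{14}$ has $|{\rm sh}(\alpha)| \leq 8$. Since Corollary \ref{c2n5} already gives $|{\rm sh}(\alpha)| \leq 2n-5 = 9$ for $n=7$, it suffices to rule out the single extremal value $|{\rm sh}(\alpha)| = 9$. I would argue by contradiction, fixing a coloring with shift vector $(b_1,\ldots,b_6)$ and $|{\rm sh}(\alpha)| = 9$.

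The first step is to pin down the partial sums $S_k = \sum_{i=1}^{k} b_i$. Lemma \ref{l2k1} applied to $\alpha$ gives $S_k \leq 2k-1$, while the same lemma applied to the reversed coloring of Lemma \ref{lReverse} gives $S_6 - S_{6-k} \leq 2k-1$, i.e. $S_{6-k} \geq 10-2k$. Together these force $S_1 \in \{0,1\}$, $S_2 \in \{2,3\}$, $S_3 \in \{4,5\}$, $S_4 \in \{6,7\}$ and $S_5 \in \{8,9\}$, with $S_6 = 9$. The analysis then branches on $b_6 = 9 - S_5 \in \{0,1\}$ and, inside each branch, on the values of $S_4$ and $S_3$, mirroring the strategy of Lemma \ref{l2n6}. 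In each leaf I would invoke Lemma \ref{lEdgeCount} for $k=3$ (and occasionally $k=4$), bounding the leading sum $\sum_{i=1}^{k} i b_i$ from below by the tabulated values $m(3,4)=8$ and $m(3,5)=12$ of Table \ref{tableMkr}, and combining this with the saturation Lemmas \ref{lAfterSaturated} and \ref{lBeforeSaturated} to squeeze out a contradiction. For example, in the branch $b_6=0$ one finds $S_5=9$ saturated, so Lemma \ref{lBeforeSaturated} gives $b_5 \geq 3$, forcing $S_4 = 6$; then $S_3=5$ collides with $m(3,5)=12$ via Lemma \ref{lEdgeCount}, and $S_3=4$ collapses to the single profile $(1,2,1,2,3,0)$.

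The delicate point, which I expect to be the main obstacle, is that a handful of candidate profiles survive every inequality applied to $\alpha$ itself — notably $(1,2,1,2,3,0)$ when $b_6=0$ and $(1,2,1,3,1,1)$ when $b_6=1$. The device that eliminates them is to reapply the very same lemmas to the reversed coloring. Thus $(1,2,1,3,1,1)$ reverses to $(1,1,3,1,2,1)$, for which Lemma \ref{lEdgeCount} with $k=3$ reads $15 \geq (1+2+9)+(2+2) = 16$, a contradiction; and $(1,2,1,2,3,0)$ reverses to $(0,3,2,1,2,1)$, whose saturated initial segment $(0,3)$ forces the third entry to be at most $1$ by Lemma \ref{lAfterSaturated}, contradicting the entry $2$. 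Recognizing which extremal shift profiles are tight under the forward inequalities but broken only after flipping the vertex order is the crux; the remaining work is the finite but careful bookkeeping of the at most a dozen surviving profiles, each dispatched either directly or by this reversal trick.

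Once all branches yield contradictions, we conclude $|{\rm sh}(\alpha)| \leq 8$ for every interval coloring of $K_{14}$, hence $W(K_{14}) \leq 2n-1+8 = 21$, which combined with the lower bound gives $W(K_{14}) = 21$.
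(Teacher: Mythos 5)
Your proposal is correct and uses essentially the same approach as the paper: both rule out total shift $9$ with the same toolkit (Lemma \ref{l2k1}, the saturation Lemmas \ref{lAfterSaturated} and \ref{lBeforeSaturated}, Lemma \ref{lEdgeCount} with $m(3,5)=12$, and reversal via Lemma \ref{lReverse}), and your arithmetic in the key branches checks out. The only difference is organizational: the paper applies the reversal as an up-front normalization --- without loss of generality $b_1+b_2+b_3=5$ and $b_4+b_5+b_6=4$ --- after which the single chain $b_4=1$, $b_5\geq 2$, followed by Lemma \ref{lEdgeCount} at $k=3$ (giving $15 \geq 12+2+2=16$), finishes everything in one case, whereas your unnormalized branching must track and individually dispatch the surviving profiles.
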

\begin{proof}
Theorem \ref{tUpper} implies that $W(K_{14}) \leq 22$. It is sufficient to show that $K_{14}$ does not have an interval coloring with $22$ colors. Suppose the contrary, there exists $\alpha$ interval $22$-coloring of $K_{14}$.

Consider its shift vector ${\rm sh}(\alpha)=(b_1,b_2,b_3,b_4,b_5,b_6)$. From Remark \ref{totalShift} we have that $\sum\limits_{i=1}^{6}{b_i}=9$. Lemma \ref{l2k1} implies that the sums of both first and last triples cannot exceed $5$. Without loss of generality we can assume that $b_1+b_2+b_3=5$ and $b_4+b_5+b_6=4$. Lemma \ref{lAfterSaturated} implies that $b_4 \leq 1$. Lemmas \ref{lReverse} and \ref{l2k1} imply that $b_5+b_6=3$ and $b_4=1$. So $b_5 \geq 2$. 

Now we check the inequality from Lemma \ref{lEdgeCount} for $k=3$. The left part equals $15$. On the right part we have $\sum\limits_{i=1}^{3}{ib_i} \geq m(3,5) = 12$, $\sum\limits_{i=4}^{5}{(6-i)b_i} \geq 4$. By summing these two inequalties we get a contradiction.
\end{proof}

The best lower bound we could obtain is the following. 

\begin{theorem}
\label{tLower}
If $n = \prod\limits_{i=1}^{\infty}{p_i^{\alpha_i}}$, where $p_i$ is the $i$-th prime number and $\alpha_i \in \mathbb{Z}_+$, then
\begin{center}
$W(K_{2n}) \geq 4n - 3 - \alpha_1 - 2\alpha_2 - 3\alpha_3 - 4\alpha_4 - 4\alpha_5 - \frac{1}{2}\sum\limits_{i=6}^{\infty}{\alpha_i(p_i+1)} $.
\end{center}
\end{theorem}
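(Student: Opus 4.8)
The plan is to specialize Corollary~\ref{cLower} by inserting the sharpest lower bound available for each prime factor. Recall from the proof of that corollary the quantity $d_m = W(K_{2m}) - (4m-3)$, for which Theorem~\ref{tComposite} yields the superadditivity $d_{mk}\geq d_m + d_k$ and hence, by induction, $d_n \geq \sum_{i=1}^{\infty}\alpha_i d_{p_i}$. Since every $\alpha_i \geq 0$, any lower bound on an individual $d_{p_i}$ propagates directly into a lower bound on $d_n$, and therefore on $W(K_{2n}) = 4n-3 + d_n$. Thus the whole theorem reduces to recording, for each prime $p$, the best lower bound on $d_p$ currently at our disposal.

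First I would collect the exact small-prime values. We have $W(K_4)=4$ and $W(K_6)=7$, giving $d_2=-1$ and $d_3=-2$; these were known previously. For $p=5$ the lower bound $W(K_{10})\geq 14$ of Theorem~\ref{t35n3} meets the upper bound $4\cdot 5-6=14$ of Theorem~\ref{tUpper}, so $W(K_{10})=14$ and $d_5=-3$. For $p=7$, Lemma~\ref{lK14} gives $W(K_{14})=21$, hence $d_7=-4$. For $p=11$, the explicit interval $37$-coloring of $K_{22}$ in Fig.~\ref{fK22}, together with the upper bound $4\cdot 11-7=37$ of Theorem~\ref{tUpper}, yields $W(K_{22})=37$ and $d_{11}=-4$. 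These five values account for the coefficients $-\alpha_1,-2\alpha_2,-3\alpha_3,-4\alpha_4,-4\alpha_5$ in the statement.

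For every remaining prime $p=p_i$ with $i\geq 6$ (that is, $p\geq 13$) I would simply invoke the generic bound of Theorem~\ref{t35n3}. Since such a $p$ is odd, $7p$ is odd and $\lfloor 3.5p\rfloor - 3 = \tfrac{7p-1}{2}-3 = \tfrac{7p-7}{2}$, so $W(K_{2p})\geq \tfrac{7p-7}{2}$ and therefore $d_p \geq \tfrac{7p-7}{2}-(4p-3) = -\tfrac{p+1}{2}$. Substituting this bound for the tail $i\geq 6$ together with the five exact values above into $W(K_{2n})\geq 4n-3+\sum_{i=1}^{\infty} \alpha_i d_{p_i}$ produces exactly the claimed inequality. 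The step demanding the most care is verifying that the generic Theorem~\ref{t35n3} bound collapses to precisely $d_p\geq -(p+1)/2$ for odd $p$ — a clean coincidence that is what makes the coefficient $\tfrac12(p_i+1)$ appear — whereas the genuine content, namely the five exact prime values, is already supplied by the constructions and upper-bound lemmas proved earlier, so assembling everything is then routine.
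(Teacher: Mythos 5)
Your proposal is correct and follows essentially the same route as the paper: it specializes Corollary~\ref{cLower} (via the superadditive quantity $d_m = W(K_{2m})-(4m-3)$ from its proof) using the exact values $W(K_4)=4$, $W(K_6)=7$, $W(K_{10})=14$, $W(K_{14})=21$, $W(K_{22})=37$ for the first five primes and Theorem~\ref{t35n3} for all larger (odd) primes, where $\lfloor 3.5p\rfloor-3=\tfrac{7p-7}{2}$ yields $d_p\geq -\tfrac{p+1}{2}$. The arithmetic and the justifications of the five exact prime values match the paper's argument.
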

\begin{proof}
To prove the bound we take the bound from Corollary \ref{cLower}, set the exact values of $W(K_{2p_i})$ for the first five prime numbers and use Theorem \ref{t35n3} to bound $W(K_{2p_i})$ for $i\geq 6$, taking into account that all prime numbers except $2$ are odd.
\end{proof}

Table \ref{tableAll} lists obtained lower and upper bounds on $W(K_{2n})$ and all known exact values for $n \leq 18$.

\begin{table}[h]
\centering
\begin{tabularx}{0.96\textwidth}{r||*{18}{c|}}
$n$ 
& $1$ & $2$ & $3$ & $4$ & $5$ & $6$ & $7$ & $8$ & $9$ & $10$ & $11$ & $12$ & $13$ & $14$ & $15$ & $16$ & $17$ & $18$ \\ \hline\hline
$W(K_{2n}) \geq $ 
& $1$ & $4$ & $7$ & $11$ & $14$ & $18$ & $21$ & $26$ & $29$ & $33$ & $37$ & $41$ & $42$ & $46$ & $52$ & $57$ & $56$ & $64$ \\ \hline
$W(K_{2n}) = $
& $1$ & $4$ & $7$ & $11$ & $14$ & $18$ & $21$ & $26$ & $29$ & $33$ & $37$ & $41$ &  &  &  & $57$ &  &    \\ \hline
$W(K_{2n}) \leq $
& $1$ & $4$ & $7$ & $11$ & $14$ & $18$ & $22$ & $26$ & $29$ & $33$ & $37$ & $41$ & $45$ & $49$ & $53$ & $57$ & $61$ & $65$
\end{tabularx}
\caption{
	Bounds on $W(K_{2n})$: The first row lists the lower bounds from Theorem \ref{tLower}, the second row lists the known exact values and the third row lists the upper bounds from Theorem \ref{tUpper}. 
}
\label{tableAll}
\end{table}

\section*{Acknowledgements}
We would like to thank the organizers of the 7th Cracow Conference on Graph Theory for the wonderful atmosphere. We also thank Attila Kiss for suggesting the term \textit{shift vector}. We also would like to thank the reviewers for many valuable comments. This work was made possible by a research grant from the Armenian National Science and Education Fund (ANSEF) based in New York, USA.

\end{document}